\providecommand{\tabularnewline}{\\} \floatstyle{ruled} \newfloat{algorithm}{tbp}{loa} \providecommand{\algorithmname}{Algorithm} \floatname{algorithm}{\protect\algorithmname}
\theoremstyle{plain} 
\newtheorem{thm}{\protect\theoremname} \theoremstyle{definition} 
\newtheorem{defn}[thm]{\protect\definitionname} \theoremstyle{plain} 
\newtheorem{lem}[thm]{\protect\lemmaname} \theoremstyle{plain} 
\newtheorem{prop}[thm]{\protect\propositionname} \theoremstyle{definition} 
\newtheorem{example}[thm]{\protect\examplename} \theoremstyle{remark} 
\newtheorem{rem}[thm]{\protect\remarkname}
\providecommand{\definitionname}{Definition} \providecommand{\examplename}{Example} \providecommand{\lemmaname}{Lemma} \providecommand{\propositionname}{Proposition} \providecommand{\remarkname}{Remark} \providecommand{\theoremname}{Theorem}
\begin{document} 
\global\long\def\dist{\mathsf{d}} 
\global\long\def\kdist{K} 
\global\long\def\Kdist{\mathsf{d}_{K}} 
\global\long\def\Tdist{\mathsf{d}_{T}} 
\global\long\def\define{:\,=} 
\global\long\def\dash{\mbox{--}}
\global\long\def\adjset{A} 
\global\long\def\lineset{L}
\global\long\def\wtfn{\varphi} 
\global\long\def\wtf#1{\varphi_{#1}}
\newcommand{\nwt}[2]{\wtfn_{(#1\,#2)}}
\newcommand{\wt}{{\rm wt}}
\renewcommand{\S}{\mathbb S}
\newcommand{\Sn}{\mathbb S_n}
\hyphenation{ken-dall}
\newcommand{\ignore}[1]{{}}

\title{A Novel Distance-Based Approach to\\ Constrained Rank Aggregation}

\ignore{\author{
\alignauthor Farzad Farnoud (Hassanzadeh)\\
       \affaddr{University of Illinois at Urbana-Champaign}\\
       \email{hassanz1@illinois.edu}
\alignauthor Behrouz Touri\\
       \affaddr{University of Illinois at Urbana-Champaign}\\
       \email{touri1@illinois.edu}
\alignauthor Olgica Milenkovic
       \affaddr{University of Illinois at Urbana-Champaign}\\
       \email{milenkov@illinois.edu}
}
}

\author{Farzad~Farnoud~(Hassanzadeh), 
Olgica~Milenkovic, 
and~Behrouz~Touri \\
Department of Electrical and Computer Engineering, University of Illinois, Urbana-Champaign\\
E-mail: $\{{\text{hassanz1,milenkov,touri1}\}}$@illinois.edu
\thanks{This work was supported by the NSF grants CCF 0821910, CCF 0809895, CCF 0939370 and the AFRLDL-EBS AFOSR Complex Networks grant.  Part of the results were presented at SPCOM 2012, Bangalore, India and at ITA 2012, San Diego, CA.}}

\maketitle 
\begin{abstract} 
We consider a classical problem in choice theory -- vote aggregation -- using novel distance measures between permutations that arise in several practical applications.  The distance measures are derived through an axiomatic approach, taking into account various issues arising in voting with side constraints. The side constraints of interest include non-uniform relevance of the top and the bottom of rankings (or equivalently, eliminating negative outliers in votes) and similarities between candidates (or equivalently, introducing diversity in the voting process). The proposed distance functions may be seen as \emph{weighted versions} of the Kendall $\tau$ distance and \emph{weighted versions} of the Cayley distance. In addition to proposing the distance measures and providing the theoretical underpinnings for their applications,  we also consider algorithmic aspects associated with distance-based aggregation processes. We focus on two methods. One method is based on approximating weighted distance measures by a generalized version of Spearman's footrule distance, and it has provable constant approximation guarantees. The second class of algorithms is based on a non-uniform Markov chain method inspired by PageRank, for which currently only heuristic guarantees are known. We illustrate the performance of the proposed algorithms for a number of distance measures for which the optimal solution may be easily computed. 
\end{abstract}

\section{Introduction}

Rank aggregation, sometimes referred to as \emph{ordinal data fusion}, is a classical problem frequently encountered in the social sciences, web search and Internet service studies, expert opinion analysis, and economics~\cite{kemeney1959mathematics,cook1985ordinal,dwork2001rank-web,sculley2007rank,schalekamp2009rank,kumar2010gdr}. Rank aggregation plays a special role in information retrieval based on different 
search models, in cases when users initiate several queries for the information of interest to them,
or in situations when one has to combine various sources of evidence or use different document surrogates~\cite{infretrieval}.

The problem can be succinctly described as follows: a set of ``voters'' or ``experts'' is presented with a set of candidates (objects, individuals, movies, etc.). Each voter's task is to produce a \emph{ranking}, that is, an arrangement of the candidates in which the candidates are \emph{ranked} from the most preferred to the least preferred. The voters' rankings are then passed to an aggregator. The aggregator outputs a single ranking, 
termed the aggregate ranking, to be used as a representative of all voters.

Rank aggregation for votes including two candidates reduces to a simple majority count. The situation becomes significantly more complex when three or more candidates are considered. Two of the most obvious extensions of vote aggregation for two candidates to the case of more than two candidates are the majority rule and the Condorcet method (pairwise majority count). In the first case, one reduces the problem to counting how many times each candidate ended up at the top of the list. This candidate is declared the winner, and removed from all rankings. The same process is then performed to identify the second, etc., candidate in the list. In the second case, one aims at identifying the majority winner of pairwise competitions. Unfortunately, both methods are plagued by a number of issues that have cast doubt on the plausibility of fair vote aggregation. Examples include the famous Condorcet paradox~\cite{condorcet}, where pairwise comparisons may lead to intransitive results (i.e., for example, $a$ may be preferred to $b$, $b$ to $c$, and $c$ to $a$). 

To mitigate such problems, two other important categories of rank aggregation methods were studied in the past. These include \emph{score-based} methods and \emph{distance-based} methods. In score-based methods, the first variant of which was proposed by Borda~\cite{borda1784}, each candidate is assigned a score based on its position in each of the votes (rankings). The candidates are then ranked based on their total score. One argument in support of using Borda's count method is that it ranks highly those candidates supported at least to a certain extent by almost all voters, rather than candidates who are ranked highly only by the simple majority of voters.
In distance-based methods \cite{kemeney1959mathematics}, the aggregate is the deemed to be the ranking ``closest'' to the set of votes, or at the smallest cumulative distance from the votes, where closeness of two rankings is measured via some adequately chosen distance function. This approach can be thought of as finding the center of mass of the rankings, or the median -- centroid -- of the rankings, with the rankings representing point masses in a metric space. Well-known distance measures for rank aggregation include the Kendall $\tau$, the Cayley distance, and Spearman's Footrule \cite{diaconis1988group}. 

Clearly, the most important aspect of distance-based rank aggregation is to choose an appropriate distance function. One may argue that almost all problems arising in connection with the majority method or score based approaches directly translate into problems concerning the chosen distance measures. To address this issue, Kemeny \cite{kemeney1959mathematics, kemenySnell1962mathematical} presented a set of intuitively 
justifiable axioms that a distance measure must satisfy to be deemed suitable for aggregation purposes, and showed that only one distance measure satisfies the axioms -- namely, the Kendall $\tau$ distance. The Kendall $\tau$ distance between two rankings is the \emph{smallest} number of swaps of adjacent elements that transforms one ranking into the other. For example, the Kendall $\tau$ distance between the rankings $(1,3,4,2)$ and $(1,2,3,4)$ is two; we may first swap 2 and 4 and then 2 and 3 to transform $(1,3,4,2)$ to $(1,2,3,4)$. Besides its use in social choice and computer science theory, the Kendall $\tau$ distance has also received significant attention in the coding theory literature, due to its applications in modulation coding for flash memories~\cite{kendall1970rankcorrelation,bruck2009rank-modulation,6034261}.

Unfortunately, the Kendall $\tau$ is not a suitable distance measure for aggregation problems involving various electoral and Internet search engine constraints. Two such important constraints include differentiating the significance of the top versus the bottom of a ranking and differentiating candidates based on their ``similarity''.

In the first example, consider the following scenario. One may view the process of forming the aggregate ranking as one of ``tweaking'' a starting ranking so as to make it as close as possible to \emph{all} given voters' rankings. The effect of changing the ordering of candidates at the top or at the bottom is in principle the same -- i.e., if switching the top two elements in the aggregate reduces the total distance from the votes by the same amount as switching the bottom two elements, then both options are equally valid to be used. But in many applications, changes at, or near to, the top of rankings should not affect the distance between rankings to  same extent as changes at, or near to, the bottom of rankings. In other words, one should penalize making changes at the top of the list more than making changes at the bottom of the list, given that low ranked items are usually not very relevant. So far, only a handful of results are known for rank aggregation distances that address the problem of positional relevance, i.e., the significance of the top versus the bottom of rankings. One approach was described in~\cite{kumar2010gdr}, where the proposed distances were based on \emph{heuristic arguments} only. These approaches do not have axiomatic underpinnings, and efficient aggregation algorithms to accompany them are not known.

In the second example, consider a voting process were candidates should be ranked both based on merit and on a diversity criteria -- for example, not having more than two of the top ten candidates working in information theory. 
One may argue that in this case, using the Kendall $\tau$ distance for aggregation and reshuffling some candidates in order to satisfy the constraints, suffices to solve the problem. For example, one may move all except the two highest-ranked information theorists below position ten and leave the ranking unchanged otherwise. It is clear that this procedure may not be viewed as fair, since ranks of \emph{all} candidates were affected by the rankings of information theorists in the first place. Alternatively, one may reduce the search space only to rankings that satisfy the constraints, but this approach is computationally highly challenging.

Henceforth, we focus our attention on distance based aggregation methods catering to constraints of the form described above. The goal of our work is to provide an axiomatic underpinning for novel distance measures between rankings that take into account predetermined top-bottom and similarity/diversity constraints. In addition to their applications in computer science and social choice theory, 
these distance measures may be used in a variety of applications, ranging from bioinformatics to network analysis~\cite{farnoud2012sorting}. 

\subsection*{Motivation -- Top vs. Bottom}

Consider the ranking $\pi$ of the ``World's 10 best cities to live in'', according to a report composed by the Economist Intelligence Unit~\cite{eiuliveability}:
\begin{align*}
\pi = (&\text{Melbourne, Vienna, Vancouver, Toronto, Calgary,} \\
&\text{Adelaide, Sydney, Helsinki, Perth, Auckland})
\end{align*}

Now consider two other rankings that both differ from $\pi$ by one swap of adjacent entries:
\begin{align*} 
\pi' = (&\text{Melbourne, Vienna, Vancouver, Calgary, Toronto,}\\
&\text{Adelaide, Sydney, Helsinki, Perth, Auckland}),\\
\pi'' = (&\text{Vienna, Melbourne, Vancouver, Toronto, Calgary,}\\ 
&\text{Adelaide, Sydney, Helsinki, Perth, Auckland}).
\end{align*}
The astute reader probably immediately noticed that the top candidate was changed in $\pi''$, but otherwise took some time to realize where the adjacent swap appeared in $\pi'$. This is a consequence of the well-known fact that humans pay more attention to the top of the list rather than any other location in the ranking, and hence notice changes in higher positions easier\footnote{Note that one may argue that people are equally drawn to explore the highest and lowest ranked items in a list. For example, if about a hundred cities were ranked, it would be reasonable to assume that readers would be more interested in knowing the best and worst ten cities, rather than the cities occupying positions 41 to 60. These positional differences may also be addressed within the framework proposed in the paper.}. Note that the Kendall $\tau$ distance between $\pi$ and $\pi'$ and between $\pi$ and $\pi''$ is one, but it would appear reasonable to assume that the distance between $\pi$ and $\pi''$ be larger than that between $\pi$ and $\pi'$, as the corresponding swap occurred in a \emph{more significant} (higher ranked) position in the list.

\begin{figure}
\begin{center}
\includegraphics[width = 3.45in]{./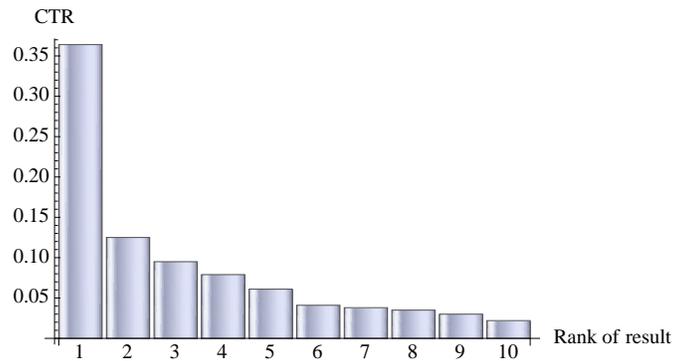}
\caption{Clickthrough rates (CTRs) of webpages appearing on the first page of Google search.}
\label{fig:serp}
\end{center}
\end{figure}

The second example corresponds to the well-studied notion of \emph{Clickthrough rates} (CTRs) of webpages in search engine results pages (SERPs). The CTR is used to assess the popularity of a webpage or the success rate of an online ad. It may be roughly defined as the number of times a link is clicked on divided by the total number of times that it appeared. A recent study by Optify Inc.~\cite{optifyinc} showed that the difference between the average CTR of the first (highest-ranked) result and the average CTR of the second (runner-up) result is very large, and much larger than the corresponding difference between the average CTRs of the lower ranked items (See Figure \ref{fig:serp}). Hence, in terms of directing search engine traffic, swapping higher-ranked adjacent pairs of search results has a larger effect on the performance of Internet services than swapping lower-ranked search results.

The aforementioned findings should be  considered when forming an aggregate ranking of webpages. For example, in studies of CTRs, one is often faced with questions regarding traffic flow from search engines to webpages. One may think of a set of keywords, each producing a different ranking of possible webpages, with the aggregate representing the median ranking based on different sets of keywords. Based on Figure \ref{fig:serp}, if
the ranking of a webpage is in the bottom half, its exact position is not as relevant as when it is ranked in the top half. Furthermore, a webpage appearing roughly half of the time at the top and roughly half of the time at the
bottom will generate more incoming traffic than a webpage with persistent average ranking.

Throughout the paper, we refer to the above-mentioned problem as the ``top-vs-bottom'' problem. Besides the importance in emphasizing the relevance of the top of the list, distance measures that penalize perturbations at the top of the list more than perturbations at the bottom of the list have another important application in practice -- to eliminate negative outliers. As will be shown in subsequent sections, top-vs-bottom distance measures allow candidates to be highly ranked in the aggregate even though they have a certain (small) number of highly negative ratings. The policy of eliminating outliers before rating items or individuals is a well-known one, but has not been considered in the social choice literature in the context of distance-based rank aggregation.

\subsection*{Motivation -- Similarity of Candidates}

In many vote aggregation problems, the identity of the candidates may not be known. On the other hand, many other applications require that the identity of the candidates be revealed. In this case, candidates are frequently partitioned in terms of some similarity criteria -- for example, area of expertise, gender, working hour schedule etc. Hence, pairs of candidates may have different degrees of similarity and swapping candidates that are similar should be penalized less than swapping candidates that are not similar according to the given ranking criteria. For example, in a faculty search ranking one may want to have at least one but not more than two physicists ranked among the top 10 candidates, or at least two women among the top 5 candidates.

Pertaining to the Economist Intelligence Unit ranking, one may also consider the identity of the elements that are swapped, and not only their position. In this case, it may be observed that the swap in $\pi''$ involves cities on two different continents, which may shift the general opinion about the cities' countries of origin. On the other hand, the two cities swapped in $\pi'$ are both in Canada, so that the swap is not likely to change the perception of quality of living in that country. This points to the need for distance measures that take into account similarities and dissimilarities among candidates. 

Distance measures capable of integrating these criteria directly into the aggregation process are not know in the literature. A class of distance measures introduced by the authors in~\cite{farnoud2012sorting}, termed weighted transposition distance, is suitable for this task as it can take into account similarities and dissimilarities of candidates. The weighted transposition distance can be viewed as a generalization of both the previously described Kendall $\tau$ and the so called \emph{Cayley distance} between permutations. The Cayley distance between two rankings is the \emph{smallest} number of (not necessarily adjacent) swaps required to transform one ranking into the other. For example, the Cayley distance between the permutations $(1,2,3,4)$ and $(1,4,3,2)$ is one, since the former can be transformed into the latter by swapping the elements $2$ and $4$. Note that while the Cayley distance allows for arbitrary swaps, the Kendall $\tau$ distance allows for swaps of adjacent elements only. 
It is straightforward to see that every statements made about swapping elements $i$ and $j$ may be converted to statements made about swapping elements at positions $i$ and $j$ by using the inverse of the ranking/permutation. Similarity of items may be captured by assigning costs or weights to swaps, and choosing the transposition weights so that swapping dissimilar items induces a higher weight/distance compared to swapping 
similar items. This approach is the topic of the next two sections. 

To address the top-vs-bottom and similarity issues, we axiomatically describe a class of distance functions by assigning different weights to different adjacent and non-adjacent swaps, termed the weighted Kendall and weighted transposition distance measures, respectively. Furthermore, we show that the proposed distance functions can be computed in polynomial time in some special cases and provide a polynomial-time 2-approximation algorithm for the general case. The results we present also pertain to algorithmic aspects of rank aggregation~\cite{kemeney1959mathematics,cook1985ordinal,sculley2007rank,schalekamp2009rank}. 

In this setting, we describe the performance of an algorithm for rank aggregation based on a generalization of Spearman's footrule distance and solving a minimum weight matching problem (this algorithm is inspired by a procedure described in~\cite{dwork2001rank-web}) and a combination of the matching algorithm with local descent methods. Furthermore, we describe an algorithm reminiscent of PageRank~\cite{dwork2001rank-web}, where the ``hyperlink probabilities'' are chosen according to swapping likelihoods (weights).

The remainder of the paper is organized as follows. An overview of relevant concepts, definitions, and terminology is presented in Section \ref{sec:preliminaries}. Weighted Kendall distance measures, as well their axiomatic definitions, are presented in Section \ref{sec:wkd}. Section \ref{sec:weighted-cayley} is devoted to the weighted transposition distance and its computational aspects. Novel rank aggregation algorithms for the weighted Kendall and weighted transposition distances are presented in Section \ref{sec:alg}.

\section{Preliminaries} \label{sec:preliminaries}

Formally, a ranking is a list of candidates arranged in order of preference, with the first candidate being the most preferred and the last candidate being the least preferred one. 

Consider the set of all possible rankings of a set of $n$ candidates. Via an arbitrary, but fixed, injective mapping from the set of candidates to $\{1,2,\cdots,n\}=[n]$, each ranking may be represented as a permutation. The mapping is often implicit and we usually equate rankings of $n$ candidates with permutations in $\S_n$, where $\S_n$ denotes the symmetric group of order $n$. This is equivalent to assuming that the set of candidates is the set $[n]$. For notational convenience, we use Greek lower-case letters for permutations, and explicitly write permutations as ordered sets $\sigma=(\sigma(1),\ldots,\sigma(n))$.

Let $e$ denote the identity permutation $(1,2,\cdots,n)$. For two permutations $\pi,\sigma\in \S_n$, the product $\mu=\pi\sigma$ is defined via the identity $\mu(i) = \pi(\sigma(i)), i=1,2,\cdots n$.

\begin{defn} A \emph{transposition} $\tau = (a\,b)$, for $a,b\in [n]$ and $a\neq b$, is a permutation that swaps $a$ and $b$ and keeps all other elements of $e$ fixed. That is,
\[\tau(i)=\begin{cases}
b, & \qquad i=a,\\
a, & \qquad i=b,\\
i, & \qquad\mbox{else}
\end{cases}\]
If $|a-b|=1$, the transposition is referred to as an \emph{adjacent transposition.} \end{defn} 
Note that for $\pi\in\mathbb{S}_{n}$, $\pi\left(a\,b\right)$ is obtained from $\pi$ by swapping elements in positions $a$ and $b$, and $\left(a\,b\right)\pi$ is obtained by swapping $a$ and $b$ in $\pi$. For example, $(3,1,4,2)(2\,3) = (3,4,1,2)$ and $(2\,3)(3,1,4,2) = (2,1,4,3)$.

For our future analysis, we define the set 
\begin{align*} 
\adjset(\pi,\sigma)=&\bigl\{\tau=\left(\tau_{1},\cdots,\tau_{|\tau|}\right): \\
&\sigma=\pi\tau_{1}\cdots\tau_{|\tau|},\tau_{i}=\left(a_{i}\,a_{i}+1\right),i\in[|\tau|]\bigr\} 
\end{align*} 
i.e., the set of all ordered sequences of adjacent transpositions that \emph{transform} $\pi$ into $\sigma.$ The fact that $A(\pi,\sigma)$ is non-empty, for any $\pi,\sigma\in \S_n$, is obvious. Using $\adjset(\pi,\sigma)$, the Kendall $\tau$ distance between two permutations $\pi$ and $\sigma$, denoted by $\kdist(\pi,\sigma)$, may be written as 
\[\kdist(\pi,\sigma) = \min_{\tau \in A(\pi,\sigma)} |\tau|.\]

For a ranking $\pi\in\mathbb{S}_{n}$ and $a,b\in[n]$, $\pi$ is said to \emph{rank} $a$ \emph{before} $b$ or \emph{higher than} $b$ if $\pi^{-1}(a)<\pi^{-1}(b)$. We denote this relationship as $a<_{\pi}b$. Two rankings $\pi$ and $\sigma$ \emph{agree} on the relative order of a pair $\{a,b\}$ of elements if both rank $a$ before $b$ or both rank $b$ before $a$. Furthermore, the two rankings $\pi$ and $\sigma$ \emph{disagree} on the relative order of a pair $\{a,b\}$ if one ranks $a$ before $b$ and the other ranks $b$ before $a$. For example, consider $\pi=(1,2,3,4)$ and $\sigma=(4,2,1,3)$. We have that $4<_{\sigma}1$ and that $\pi$ and $\sigma$ agree on $\{2,3\}$ but disagree on $\{1,2\}$. 

Given a distance function $\dist$ over the permutations in $\mathbb{S}_{n}$ and a set $\Sigma=\{\sigma_1,\cdots,\sigma_m\}$ of $m$ votes (rankings), the distance-based aggregation problem can be stated as follows: find the ranking $\pi^{*}$ that minimizes the cumulative distance from $\Sigma$, i.e., \begin{equation} \pi^{*}=\arg\min_{\pi\in\mathbb{S}_{n}}\sum_{i=1}^{m}\dist(\pi,\sigma_{i}).\label{eqn:rank-agg} \end{equation}
In words, the goal is to find a ranking $\pi$ that represents the median of the set of permutations $\Sigma$. The choice of the distance function $\dist$ is an important aspect of distance-based rank aggregation and the focus of the paper.

In \cite{kemeney1959mathematics}, Kemeny presented a set of axioms that a distance function for rank aggregation should satisfy and proved that the only distance that satisfies the axioms is the Kendall $\tau$. A critical concept in Kemeny's axioms is the idea of ``betweenness,'' defined below.

\begin{defn} \label{def:between} A ranking $\omega$ is said to be \emph{between} two rankings $\pi$ and $\sigma$, denoted by $\pi\dash\omega\dash\sigma$, if for each pair of elements $\{a,b\}$, $\omega$ either agrees with $\pi$ or $\sigma$ or both. The rankings $\pi_{1},\pi_2,\cdots,\pi_{s}$ are said to be \emph{on a line}, denoted by $\pi_{1}\dash\pi_{2}\dash\cdots\dash\pi_{s}$, if for every $i,j,$ and $k$ for which $1\le i<j<k\le s$, we have $\pi_{i}\dash\pi_{j}\dash\pi_{k}$. 
\end{defn} 

In Kemeny's work, rankings are allowed to have ties. The basis of our subsequent analysis is the same set of axioms, listed below. 
However, our focus is on ranking without ties, in other words, permutations.

\vspace{1mm}
\textbf{Axioms I} 
\begin{enumerate} 
\item $\dist$ is a metric. 
\item $\dist$ is left-invariant. 
\item For any $\pi,\sigma,$ and $\omega$, $\dist(\pi,\sigma)=\dist(\pi,\omega)+\dist(\omega,\sigma)$ if and only if $\omega$ is between $\pi$ and $\sigma$. 
\item The smallest positive distance is one. 
\end{enumerate}

Axiom 2 states that relabeling of objects should not change the distance between permutations. In other words, $\dist(\sigma\pi,\sigma\omega)=\dist(\pi,\omega)$, for any $\pi,\sigma,\omega\in\mathbb{S}_{n}$. Axiom 3 may be viewed through a geometric lens: the triangle inequality has to be satisfied with equality for all points that lie on a line between $\pi$ and $\sigma$. Axiom 4 is only used for normalization purposes.

Kemeny's original exposition included a fifth axiom which we state for completeness: If two rankings $\pi$ and $\sigma$ agree except for a segment of $k$ elements, the position of the segment does not affect the distance between the rankings. Here, a segment represents a set of objects that are ranked consecutively -- i.e., a substring of the permutation. As an example, this axiom implies that 
\begin{align*}\dist((1,2,3,\underbrace{4,5,6}),&(1,2,3,\underbrace{6,5,4}))
=\\
\dist(&(1,\underbrace{4,5,6},2,3),(1,\underbrace{6,5,4},2,3)) \end{align*} 
where the segment is underscored by braces. This axiom clearly enforces a property that \emph{is not desirable} for metrics designed to address the top-vs-bottom issue: changing the position of the segment in two permutations does not alter their mutual distance. One may hence believe that removing this axiom (as was done in Axioms I) will lead to distance measures capable of handling the top-vs-bottom problem.
But as we show below, for rankings without ties, omitting this axiom does not change the outcome of Kemeny's analysis. In other words, the axiom is redundant. This is a rather surprising fact, and we conjecture that the
same is true of rankings with ties. 

In the remainder of this section, we demonstrate the redundancy of Kemeny's fifth axiom and use our novel proof method to identify how to change the axioms in Axioms I in order to arrive at 
distance measures that cater to the need of top-vs-bottom and similarity problems. For reasons that will become clear in the next section, we refer to distance measures resulting from such axioms as \emph{weighted distances}.

The main result of this section is Theorem \ref{thm:unique-dist}, stating that the unique distance satisfying Axioms I is the Kendall $\tau$ distance. The theorem is proved with the help of Lemmas \ref{lem:on-the-line}, \ref{lem:two-ways}, \ref{lem:gamma-1}, \ref{lem:inversions} and \ref{lem:kendal-inversions}. 

\begin{lem} \label{lem:on-the-line}
For any distance measure $\dist$ that satisfies Axioms I, and for any sequence of permutations $\pi_{1},\pi_2,\cdots,\pi_{s}$ such that $\pi_{1}\dash\pi_{2}\dash\cdots\dash\pi_{s}$, one has \[ \dist(\pi_{1},\pi_{s})=\sum_{k=1}^{s-1}\dist(\pi_{k},\pi_{k+1}). \] 
\end{lem}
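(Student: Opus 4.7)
The plan is to proceed by induction on the length $s$ of the chain, which matches the recursive nature of the definition of ``on a line'' (itself phrased in terms of betweenness on every sub-triple).

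For the base case $s=3$, the hypothesis $\pi_{1}\dash\pi_{2}\dash\pi_{3}$ states directly that $\pi_{2}$ lies between $\pi_{1}$ and $\pi_{3}$, so the ``if'' direction of Axiom~3 yields $\dist(\pi_{1},\pi_{3})=\dist(\pi_{1},\pi_{2})+\dist(\pi_{2},\pi_{3})$, matching the claim. The case $s=2$ is trivial.

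For the inductive step, assume the identity for all chains of length less than $s$. Two consequences of $\pi_{1}\dash\pi_{2}\dash\cdots\dash\pi_{s}$ drive the argument. First, specializing the defining triple condition to $(i,j,k)=(1,s-1,s)$ gives $\pi_{1}\dash\pi_{s-1}\dash\pi_{s}$, and Axiom~3 then implies
$$\dist(\pi_{1},\pi_{s}) = \dist(\pi_{1},\pi_{s-1}) + \dist(\pi_{s-1},\pi_{s}).$$
Second, the truncated chain $\pi_{1}\dash\pi_{2}\dash\cdots\dash\pi_{s-1}$ is itself on a line: every triple of indices drawn from $\{1,\ldots,s-1\}$ is already a triple in the original chain, so the betweenness condition is inherited without change. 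The induction hypothesis applied to this shorter chain gives $\dist(\pi_{1},\pi_{s-1})=\sum_{k=1}^{s-2}\dist(\pi_{k},\pi_{k+1})$; substituting into the displayed equation finishes the induction.

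I do not foresee a real obstacle here. The argument uses only the ``if'' direction of Axiom~3 (the elementary additive direction) together with the straightforward observation that being on a line is closed under taking prefixes. Axioms 1, 2, and 4 are not invoked, and the identity is essentially a telescoping statement built from repeated applications of the betweenness axiom.
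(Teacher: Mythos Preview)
Your proposal is correct and is essentially the same approach as the paper's own proof, which simply states that the lemma follows from Axiom~I.3 by induction. You have merely spelled out the details of that induction explicitly.
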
 

\begin{proof} 
The lemma follows from Axiom I.3 by induction.\end{proof} 

\begin{lem} \label{lem:two-ways}
For any $\dist$ that satisfies Axioms I and for $i\in[n-1]$, we have 
\[ \dist\left(\left(i\,i+1\right),e\right)=\dist\left((1\,2),e\right). \] 
\end{lem}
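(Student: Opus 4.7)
The plan is to show $\dist(e,(i\,i+1)) = \dist(e,(i+1\,i+2))$ for every $i \in [n-2]$; the lemma then follows by chaining these equalities from $i=1$ upward. The underlying idea is to exhibit two distinct lines from $e$ to the non-adjacent transposition $\nu := (i\,i+2)$, one traversing $(i\,i+1)$ twice and $(i+1\,i+2)$ once, and the other doing the opposite. Equating the lengths of these two lines via Lemma~\ref{lem:on-the-line} forces the two adjacent-transposition distances to coincide.

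Concretely, I would introduce the two $3$-cycles $\mu := (i\,i+1)(i+1\,i+2)$ and $\mu' := (i+1\,i+2)(i\,i+1)$. All six permutations $e,\,(i\,i+1),\,(i+1\,i+2),\,\mu,\,\mu',\,\nu$ fix every index outside $\{i,i+1,i+2\}$, so any betweenness check among them reduces to examining only the three pairs drawn from this window. A short tabulation of how each of the six rankings orders $\{i,i+1\}$, $\{i,i+2\}$, and $\{i+1,i+2\}$ shows that both
\begin{align*}
&\text{(Line A):}\quad e \dash (i\,i+1) \dash \mu \dash \nu,\\
&\text{(Line B):}\quad e \dash (i+1\,i+2) \dash \mu' \dash \nu
\end{align*}
are lines in the sense of Definition~\ref{def:between}, since each intermediate term agrees with $e$ or $\nu$ on every pair, and the middle term on each line agrees with its flanking adjacent transposition on the one remaining pair.

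To finish, I would apply Lemma~\ref{lem:on-the-line} to each line to decompose $\dist(e,\nu)$ as a sum of three consecutive distances. Using left-invariance (Axiom~I.2) together with the elementary products $(i\,i+1)^{-1}\mu = (i+1\,i+2)$, $\mu^{-1}\nu = (i\,i+1)$, $(i+1\,i+2)^{-1}\mu' = (i\,i+1)$, and $\mu'^{-1}\nu = (i+1\,i+2)$, each consecutive distance collapses to either $\dist(e,(i\,i+1))$ or $\dist(e,(i+1\,i+2))$. Line~A then yields $\dist(e,\nu) = 2\dist(e,(i\,i+1)) + \dist(e,(i+1\,i+2))$, while Line~B yields $\dist(e,\nu) = \dist(e,(i\,i+1)) + 2\dist(e,(i+1\,i+2))$; equating the two gives the desired equality. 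The only moderately tedious step is the betweenness case analysis, but since everything happens on a $3$-element window it amounts to a small check inside $\S_3$ and poses no genuine obstacle.
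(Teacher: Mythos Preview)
Your proposal is correct and follows essentially the same approach as the paper: the paper also computes $\dist(\nu,e)$ for $\nu=(i\,i+2)$ along two different lines through the two $3$-cycles on $\{i,i+1,i+2\}$, obtaining $2\dist((i\,i+1),e)+\dist((i+1\,i+2),e)$ from one and the symmetric expression from the other. The only cosmetic difference is that the paper carries out the computation explicitly for $i=1$ (writing the permutations as rankings rather than in cycle notation) and then remarks that the same argument iterates, whereas you work directly with a general $i$.
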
 

\begin{proof} We first show that $\dist\left(\left(2\,3\right),e\right)=\dist\left(\left(1\,2\right),e\right)$. Repeating the same argument used for proving this special case gives $\dist\left(\left(i\,i+1\right),e\right)=\dist\left(\left(i-1\,i\right),e\right)=\cdots=\dist\left(\left(1\,2\right),e\right)$.

To show that $\dist\left(\left(2\,3\right),e\right)=\dist\left(\left(1\,2\right),e\right)$, we evaluate $\dist(\pi,e)$ in two ways, where we choose $\pi=(3,2,1,4,5,\cdots, n).$

On the one hand, note that $\pi\dash\omega\dash\eta\dash e$, where $\omega=\pi(1\,2)=(2,3,1,4,5,\cdots, n)$ and $\eta=\omega(2\,3)=(2,1,3,4,5,\cdots, n)$. As a result, \begin{align} \dist(\pi,e) & =\dist(\pi,\omega)+\dist(\omega,\eta)+\dist(\eta,e)\nonumber \\ & =\dist(\omega^{-1}\pi,e)+\dist(\eta^{-1}\omega,e)+\dist(\eta,e)\nonumber \\ & =\dist((1\,2),e)+\dist((2\,3),e)+\dist((1\,2),e)\label{eq:one-hand} \end{align} where the first equality follows from Lemma \ref{lem:on-the-line}, while the second is a consequence of the left-invariance property of the distance measure.

On the other hand, note that $\pi\dash\alpha\dash\beta\dash e$, where $\alpha=\pi(2\,3)=(3,1,2,4,5,\cdots, n)$ and $\beta=\alpha(1\,2)=(1,3,2,4,5,\cdots, n)$. For this case, \begin{align} \dist(\pi,e) & =\dist(\pi,\alpha)+\dist(\alpha,\beta)+\dist(\beta,e)\nonumber \\ & =\dist(\alpha^{-1}\pi,e)+\dist(\beta^{-1}\alpha,e)+\dist(\beta,e)\nonumber \\ & =\dist((2\,3),e)+\dist((1\,2),e)+\dist((2\,3),e).\label{eq:other-hand} \end{align}

Equations (\ref{eq:one-hand}) and (\ref{eq:other-hand}) imply that $\dist\left(\left(2\,3\right),e\right)=\dist\left(\left(1\,2\right),e\right)$.
\end{proof} 

\begin{lem} \label{lem:gamma-1}For any $\dist$ that satisfies Axioms I, $\dist(\gamma,e)$ equals the minimum number of adjacent transpositions required to transform $\gamma$ into $e$.\end{lem}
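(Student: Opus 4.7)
The plan is to establish $\dist(\gamma, e) = K(\gamma, e) \cdot w$ by matching upper and lower bounds and then use Axiom I.4 to force $w = 1$, where $w := \dist((1\,2), e)$. By Lemma \ref{lem:two-ways} every adjacent transposition is at distance $w$ from $e$. The upper bound $\dist(\gamma, e) \le K(\gamma, e) \cdot w$ is immediate from the triangle inequality: pick any shortest sequence $\tau_1, \ldots, \tau_k \in \adjset(\gamma, e)$ with $k = K(\gamma, e)$, telescope along this sequence, and use left-invariance to rewrite each increment as $\dist(\tau_j, e) = w$.

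For the matching lower bound I would build an explicit \emph{bubble-sort} path $\gamma = \gamma_0, \gamma_1, \ldots, \gamma_m = e$ as follows. At step $j$, pick any adjacent position $i_j$ with $\gamma_j(i_j) > \gamma_j(i_j+1)$ and set $\gamma_{j+1} := \gamma_j (i_j\, i_j+1)$. Such an $i_j$ exists whenever $\gamma_j \neq e$ (otherwise $\gamma_j$ would already be sorted), so the process terminates after finitely many steps. The main step is to verify that this sequence is \emph{on a line} in the sense of Definition \ref{def:between}. This rests on a monotonicity observation: for every unordered pair $\{a, b\}$ with $a < b$, its relative order in $\gamma_\ell$ can change only when $a$ and $b$ occupy the two adjacent positions being swapped, and because every swap is of an out-of-order adjacent pair, that order can only flip from disagreeing with $e$ to agreeing with $e$, never the reverse. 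Hence a single threshold index $t_{a,b}$ exists such that $\gamma_\ell$ disagrees with $e$ on $\{a,b\}$ for $\ell < t_{a,b}$ and agrees with $e$ for $\ell \ge t_{a,b}$. Consequently, for any $i < j < k$ and any pair $\{a,b\}$, $\gamma_j$ agrees on $\{a,b\}$ with $\gamma_i$ (when $j < t_{a,b}$) or with $\gamma_k$ (when $j \ge t_{a,b}$), which is exactly what is needed for $\gamma_i \dash \gamma_j \dash \gamma_k$.

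With the on-a-line property established, Lemma \ref{lem:on-the-line} gives $\dist(\gamma, e) = \sum_{j=0}^{m-1} \dist(\gamma_j, \gamma_{j+1}) = m w$, each summand being evaluated via left-invariance and Lemma \ref{lem:two-ways}. Because the bubble-sort path itself lies in $\adjset(\gamma, e)$, we have $K(\gamma, e) \le m$; combined with the upper bound this forces $m = K(\gamma, e)$ and $\dist(\gamma, e) = K(\gamma, e) \cdot w$. Finally, left-invariance implies that every pairwise distance in $\mathbb{S}_n$ equals $w$ times a nonnegative integer, and the smallest positive such value is $w$ itself (realized by any adjacent transposition), so Axiom I.4 pins $w = 1$ and the stated identity follows.

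The main technical obstacle is the on-a-line claim for the bubble-sort path; once the monotone-resolution observation is granted, the remaining steps are routine manipulations involving the triangle inequality, left-invariance, and the two preceding lemmas.
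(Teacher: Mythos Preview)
Your proposal is correct and shares the paper's overall skeleton: exhibit a sequence of adjacent transpositions from $\gamma$ to $e$ that lies on a line, telescope via Lemma~\ref{lem:on-the-line}, use Lemma~\ref{lem:two-ways} to make each step contribute the same constant, and normalize with Axiom~I.4. The difference lies in \emph{which} sequence is shown to be on a line. The paper starts from an arbitrary sequence of minimum length $s=K(\gamma,e)$ and argues by contradiction that it must already be on a line (otherwise some pair would be swapped twice, allowing two transpositions to be removed and contradicting minimality). You instead \emph{construct} a bubble-sort path and prove it is on a line via the monotone-resolution observation; since the length $m$ of this path is not a priori minimal, you then need the separate triangle-inequality upper bound and the sandwich $K(\gamma,e)\le m$ and $mw=\dist(\gamma,e)\le K(\gamma,e)\,w$ to force $m=K(\gamma,e)$.

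The paper's route is slightly shorter because it works directly with a sequence of the right length and avoids the sandwich. Your route is more constructive and makes the inversion-monotonicity explicit (which is essentially the content of the later Lemma~\ref{lem:kendal-inversions}), so you never rely on the somewhat delicate claim that deleting two transpositions from a minimal sequence yields another valid sequence. Either way the argument goes through; the difference is a matter of taste rather than substance.
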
 

\begin{proof} Let 
\begin{align*} \lineset(\pi,\sigma)=\{\tau=\left(\tau_{1},\cdots,\tau_{|\tau|}\right)&\in\adjset(\pi,\sigma):\\ \pi\dash\pi\tau_{1}\dash\pi\tau_{1}\tau_{2}\dash\cdots\dash\sigma\} 
\end{align*} 
be the subset of $\adjset(\pi,\sigma)$ consisting of sequences of transpositions that transform $\pi$ into $\sigma$ by passing through a line. Let $s$ be the minimum number of adjacent transpositions that transform $\gamma$ into $e$. Furthermore, let $(\tau_{1},\tau_{2},\cdots,\tau_{s})\in A(\gamma,e)$ and define $\gamma_{i}=\gamma\tau_{1}\cdots\tau_{i},i=0,\cdots,s,$ with $\gamma_{0}=\gamma$ and $\gamma_{s}=e$.

First, we show $\gamma_{0}\dash\gamma_{1}\dash\cdots\dash\gamma_{s}$, that is, 
\begin{equation} 
(\tau_{1},\tau_{2},\cdots,\tau_{s})\in L(\gamma,e).\label{eq:linear-trans} 
\end{equation} 
Suppose this were not the case. Then, there exist $i<j<k$ such that $\gamma_{i},\gamma_{j},$ and $\gamma_{k}$ are not on a line, and thus, there exists a pair $\{r,s\}$ for which $\gamma_{j}$ disagrees with both $\gamma_{i}$ and $\gamma_{k}$. Hence, there exist two transpositions, $\tau_{i'}$ and $\tau_{j'}$, with $i<i'\le j$ and $j<j'\le k$ that swap $r$ and $s$. We can in this case remove $\tau_{i'}$ and $\tau_{j'}$ from $\left(\tau_{1},\cdots,\tau_{s}\right)$ to obtain $\left(\tau_{1},\cdots,\tau_{i'-1},\tau_{i'+1},\cdots,\tau_{j'-1},\tau_{j'+1},\tau_{s}\right)\in A(\gamma,e)$ with length $s-2$. This contradicts the optimality of the choice of $s$. Hence, $(\tau_{1},\tau_{2},\cdots,\tau_{s})\in L(\gamma,e)$. Then Lemma \ref{lem:on-the-line} implies that 
\begin{equation} \dist(\gamma,e)=\sum_{i=1}^{s}\dist(\tau_{i},e).\label{eq:A-1} \end{equation}

Lemma \ref{lem:two-ways} states that all adjacent transpositions have the same distance from the identity. Since transpositions $\tau_i, 1\le i\le s,$ in \eqref{eq:A-1} are adjacent transpositions, $\dist(\tau_i,e)=a$ for some $a>0$ and thus $\dist(\gamma,e) = sa$. 

In (\ref{eq:A-1}), the minimum positive distance is obtained when $s=1$. That is, the minimum positive distance from identity equals $a$ and is obtained when $\gamma$ is an adjacent transposition. Axiom I.4 states that the minimum positive distance is 1. By left-invariance, this axiom implies that the minimum positive distance of any permutation from the identity is 1. Hence, $a=1$ and for any $\gamma\in\Sn$, \[\dist(\gamma,e)=\sum_{i=1}^{s}\dist(\tau_{i},e)=sa=s. \] \end{proof} 

\begin{lem} \label{lem:inversions}For any $\dist$ that satisfies Axioms I, and for $\pi,\sigma \in \mathbb{S}_n,$ we have
\[ \dist(\pi,\sigma)=\min\left\{ s:(\tau_{1},\cdots,\tau_{s})\in A(\pi,\sigma)\right\} . \] 
\end{lem}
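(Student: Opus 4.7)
The plan is to reduce the claim to the special case $\sigma = e$ already handled by Lemma \ref{lem:gamma-1}. Since $\dist$ is a metric (Axiom I.1) and left-invariant (Axiom I.2), left-multiplying both arguments by $\pi^{-1}$ and using symmetry gives
\[ \dist(\pi,\sigma) \;=\; \dist(e,\pi^{-1}\sigma) \;=\; \dist(\pi^{-1}\sigma, e). \]
Lemma \ref{lem:gamma-1} then identifies the right-hand side with $\min\{s : (\tau_1,\ldots,\tau_s) \in A(\pi^{-1}\sigma,e)\}$, so it remains only to show that this minimum equals $\min\{s : (\tau_1,\ldots,\tau_s) \in A(\pi,\sigma)\}$.

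To establish that equality, I would exhibit a length-preserving bijection between $A(\pi,\sigma)$ and $A(\pi^{-1}\sigma,e)$ given by sequence reversal. If $(\tau_1,\ldots,\tau_s) \in A(\pi,\sigma)$, then by definition $\sigma = \pi\tau_1\cdots\tau_s$, so $\pi^{-1}\sigma = \tau_1\cdots\tau_s$. Because each transposition is its own inverse, multiplying on the right successively by $\tau_s,\tau_{s-1},\ldots,\tau_1$ yields $\pi^{-1}\sigma\,\tau_s\cdots\tau_1 = e$, which is exactly the assertion $(\tau_s,\ldots,\tau_1)\in A(\pi^{-1}\sigma,e)$. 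The inverse of this map is the same reversal operation, so we have a length-preserving involution between the two sets, and the corresponding minima coincide. Combining this with the displayed identity above completes the argument.

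The main (minor) obstacle is bookkeeping with the right-action convention implicit in the definition of $A(\cdot,\cdot)$: the sequence must be reversed when transferring between $A(\pi,\sigma)$ and $A(\pi^{-1}\sigma,e)$, and one must invoke $\tau_i^{-1} = \tau_i$ to rewrite the product correctly. No new geometric input about lines between permutations is required beyond what already went into Lemma \ref{lem:gamma-1}; the lemma is essentially an immediate corollary of Lemma \ref{lem:gamma-1} and left-invariance.
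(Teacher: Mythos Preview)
Your proof is correct and follows essentially the same approach as the paper: reduce to Lemma~\ref{lem:gamma-1} via left-invariance together with a length-preserving correspondence between $A(\pi,\sigma)$ and the relevant $A(\cdot,e)$. The only cosmetic difference is that the paper left-multiplies by $\sigma^{-1}$ rather than $\pi^{-1}$, obtaining $\dist(\pi,\sigma)=\dist(\sigma^{-1}\pi,e)$ and the identity bijection $A(\pi,\sigma)=A(\sigma^{-1}\pi,e)$, which spares the reversal step you introduced.
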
 

\begin{proof} We have $(\tau_{1},\cdots,\tau_{s})\in A(\pi,\sigma)$ if and only if \[(\tau_{1},\cdots,\tau_{s})\in A(\sigma^{-1}\pi,e).\] 
Left-invariance of $\dist$ implies that $\dist(\pi,\sigma)=\dist(\sigma^{-1}\pi,e)$. Hence, \begin{align*} \dist(\pi,\sigma) & =\dist(\sigma^{-1}\pi,e)\\ & =\min\left\{ s:(\tau_{1},\cdots,\tau_{s})\in A(\sigma^{-1}\pi,e)\right\} \\ & =\min\left\{ s:(\tau_{1},\cdots,\tau_{s})\in A(\pi,\sigma)\right\} \end{align*} where the second equality follows from Lemma \ref{lem:gamma-1}.\end{proof} 

For $\pi,\sigma\in\mathbb{S}_{n}$, let \[ I\left(\pi,\sigma\right)=\left\{ \{i,j\}:i<_{\pi}j,j<_{\sigma}i\right\} \] be the set of pairs $\{i,j\}$ on which $\pi$ and $\sigma$ disagree. The number $|I(\pi,\sigma)|$ is usually referred to as the  number of \emph{inversions} between the two permutations. 

The following lemma show that the Kendall $\tau$ distance between a permutation $\pi$ and $e$ equals the number of inversions in $\pi$. The result of the lemma is known, but a sketch of a proof is provided  for completeness.

 \begin{lem} \label{lem:kendal-inversions}For $\pi,\sigma\in\mathbb{S}_{n}$, \[ \kdist(\pi,\sigma)=\left|I(\pi,\sigma)\right|. \] \end{lem}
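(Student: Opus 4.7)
The plan is to reduce to the case $\sigma=e$ and then prove matching upper and lower bounds by tracking how the inversion count changes under adjacent transpositions. First I would argue that both sides of the claimed equality are left-invariant and hence it suffices to show $\kdist(\pi,e)=|I(\pi,e)|$. Left-invariance of $\kdist$ follows from Lemma \ref{lem:inversions} (or directly from the definition via $A(\pi,\sigma)=A(\sigma^{-1}\pi,e)$); left-invariance of $|I(\cdot,\cdot)|$ follows from the bijection $\{i,j\}\mapsto\{\sigma^{-1}(i),\sigma^{-1}(j)\}$ between $I(\pi,\sigma)$ and $I(\sigma^{-1}\pi,e)$, which one checks by noting that the element $\sigma^{-1}(i)$ sits in position $\pi^{-1}(i)$ of $\sigma^{-1}\pi$, so disagreement between $\pi$ and $\sigma$ on $\{i,j\}$ translates into disagreement between $\sigma^{-1}\pi$ and $e$ on $\{\sigma^{-1}(i),\sigma^{-1}(j)\}$.

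For the lower bound $\kdist(\pi,e)\ge|I(\pi,e)|$, I would observe that if $\pi'=\pi(k\,k+1)$, then $\pi'$ differs from $\pi$ only in swapping the entries in positions $k$ and $k+1$; in particular, the relative order of any pair $\{a,b\}\neq\{\pi(k),\pi(k+1)\}$ is unchanged, while the relative order of $\{\pi(k),\pi(k+1)\}$ is flipped. Hence $|I(\pi',e)|=|I(\pi,e)|\pm1$. Any sequence of $s$ adjacent transpositions that converts $\pi$ to $e$ therefore reduces the inversion count from $|I(\pi,e)|$ to $0$ in unit steps, forcing $s\ge|I(\pi,e)|$.

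For the upper bound $\kdist(\pi,e)\le|I(\pi,e)|$, I would induct on $|I(\pi,e)|$ using bubble sort. If $\pi\neq e$, then $\pi$ cannot be strictly increasing, so there exists $k$ with $\pi(k)>\pi(k+1)$; applying $(k\,k+1)$ on the right yields $\pi'=\pi(k\,k+1)$ with $|I(\pi',e)|=|I(\pi,e)|-1$, since exactly the pair $\{\pi(k),\pi(k+1)\}$ is removed from the inversion set and no other pair is affected. The induction hypothesis then gives a sequence of at most $|I(\pi,e)|-1$ adjacent transpositions taking $\pi'$ to $e$, which prepended by $(k\,k+1)$ yields the desired sequence of length at most $|I(\pi,e)|$.

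The argument has no serious obstacle; the only delicate point is justifying that a single adjacent transposition alters exactly one pair's relative order (hence $|I|$ shifts by $\pm1$), which is a direct case analysis on the positions involved. This is why the authors indicate that only a proof sketch is needed.
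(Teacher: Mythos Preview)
Your proof is correct and follows essentially the same approach as the paper: both argue that a single adjacent transposition changes the inversion count by exactly one (giving the lower bound), and both invoke bubble sort to realize a sequence that removes one inversion per step (giving the upper bound). The only cosmetic difference is that you first reduce to $\sigma=e$ via left-invariance, whereas the paper runs the same argument directly for general $\pi,\sigma$; this extra normalization is harmless but unnecessary.
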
 \begin{proof} Consider a sequence $\tau_{1},\cdots,\tau_{k}$ of adjacent transpositions that transforms $\pi$ into $\sigma$, i.e., $\sigma=\pi\tau_{1}\cdots\tau_{k}$, with $k=\kdist(\pi,\sigma)$. Let $\pi_{j}=\pi\tau_{1}\cdots\tau_{j}$. Each $\tau_{i}$ decreases the number of inversions by at most one. Hence, \begin{align*} \left|I(\pi_{j},\sigma)\right| & \ge\left|I(\pi_{j-1},\sigma)\right|-1 \end{align*} and thus \[ 0=\left|I(\pi_{k},\sigma)\right|\ge\left|I(\pi,\sigma)\right|-k. \] Since $k=\kdist(\pi,\sigma)$, we obtain \[ \kdist(\pi,\sigma)\ge\left|I(\pi,\sigma)\right|. \]

On the other hand, it is easy to see that one can find $\tau_{i},i\in[k]$ in such a way that each $\tau_i$ decreases the number of inversions by one. For example, Bubble Sort~\cite{cormen24introduction} is one such well-known algorithm for accomplishing this task. Hence, 
\[ \kdist(\pi,\sigma)=\left|I(\pi,\sigma)\right|. \]
\end{proof}

\begin{thm} \label{thm:unique-dist}The unique distance $\dist$ that satisfies Axioms I is \[ \kdist(\pi,\sigma)=\min\left\{ s:(\tau_{1},\cdots,\tau_{s})\in A(\pi,\sigma)\right\} . \] \end{thm}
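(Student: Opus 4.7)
The plan is to combine the uniqueness content of Lemma \ref{lem:inversions} with a direct verification that $\kdist$ satisfies Axioms I; together these give existence and uniqueness of the distance satisfying the axioms, which is the content of the theorem.

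For the uniqueness direction, the work has essentially already been done. Lemma \ref{lem:inversions} establishes that any distance $\dist$ satisfying Axioms I must satisfy $\dist(\pi,\sigma)=\min\{s:(\tau_1,\ldots,\tau_s)\in A(\pi,\sigma)\}$, and by the definition of $\kdist$ given at the start of Section \ref{sec:preliminaries}, this quantity is precisely $\kdist(\pi,\sigma)$. So any $\dist$ satisfying Axioms I coincides with $\kdist$, and it remains only to write this one-line deduction.

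For existence, the plan is to verify each of the four axioms for $\kdist$ itself. Using Lemma \ref{lem:kendal-inversions}, $\kdist(\pi,\sigma)=|I(\pi,\sigma)|$, from which Axiom I.1 (metric) is immediate: the inversion count is symmetric in $\pi,\sigma$, vanishes iff $\pi=\sigma$, and the triangle inequality follows because each pair $\{i,j\}$ on which $\pi$ and $\sigma$ disagree must be disagreed upon by at least one of $(\pi,\omega)$ or $(\omega,\sigma)$. Axiom I.2 (left-invariance) follows because relabeling every element by $\sigma$ preserves the relative order of each pair. Axiom I.4 (minimum positive distance equals one) holds because an adjacent transposition contributes exactly one inversion.

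The only substantive step is Axiom I.3 (betweenness equality). For the \emph{only if} direction, suppose $\dist(\pi,\sigma)=\dist(\pi,\omega)+\dist(\omega,\sigma)$ in the inversion-count form. Since each disagreement pair $\{i,j\}\in I(\pi,\sigma)$ must be counted on at least one side, and since $|I(\pi,\omega)|+|I(\omega,\sigma)|\ge |I(\pi,\sigma)|$ with slack exactly when some pair $\{i,j\}\notin I(\pi,\sigma)$ is in both $I(\pi,\omega)$ and $I(\omega,\sigma)$, equality forces no such $\{i,j\}$ to exist, which means $\omega$ agrees with $\pi$ or $\sigma$ on every pair — i.e., $\pi\dash\omega\dash\sigma$. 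For the \emph{if} direction, if $\omega$ is between $\pi$ and $\sigma$, then $I(\pi,\omega)$ and $I(\omega,\sigma)$ partition $I(\pi,\sigma)$, yielding the desired equality. I expect this betweenness step, though conceptually clear, to be the one requiring the most care to state cleanly. Once all four axioms are verified for $\kdist$, combining with Lemma \ref{lem:inversions} completes the proof.
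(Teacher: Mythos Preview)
Your proposal is correct and follows essentially the same approach as the paper: uniqueness via Lemma~\ref{lem:inversions}, existence by verifying the axioms using the inversion-set characterization from Lemma~\ref{lem:kendal-inversions}, with Axiom~I.3 being the only nontrivial check. The paper argues Axiom~I.3 by separately treating the cases ``$\omega$ between'' and ``$\omega$ not between,'' while you phrase it as a single counting identity $|I(\pi,\omega)|+|I(\omega,\sigma)|=|I(\pi,\sigma)|+2k$ where $k$ counts pairs outside $I(\pi,\sigma)$ on which $\omega$ disagrees with both; these are the same argument in slightly different packaging.
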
 \begin{proof} We show below that $\kdist$ satisfies Axiom I.3, as proving that $\kdist$ satisfies the other axioms is straightforward. Uniqueness follows from Lemma \ref{lem:inversions}.

 To show that $\kdist$ satisfies Axiom I.3, we use Lemma \ref{lem:kendal-inversions} stating that \[ \kdist(\pi,\sigma)=\left|I(\pi,\sigma)\right|. \] Fix $\pi,\sigma\in\mathbb{S}_{n}$. For any $\omega\in\mathbb{S}_{n}$, it is clear that \begin{equation} I(\pi,\sigma)\subseteq I(\pi,\omega)\cup I(\omega,\sigma).\label{eq:in-union} \end{equation}

Suppose first that $\omega$ is not between $\pi$ and $\sigma$. Then there exists a pair $\left\{ a,b\right\} $ with $a<_{\pi}b$ and $a<_{\sigma}b$ but with $a>_{\omega}b$. Since $\{a,b\}\notin I(\pi,\sigma)$ but $\{a,b\}\in I(\pi,\omega)\cup I(\omega,\sigma)$, we find that 
\[ \left|I(\pi,\sigma)\right|<\left|I(\pi,\omega)\cup I(\omega,\sigma)\right| ,\] 
and thus \begin{align*} \kdist(\pi,\sigma) & =\left|I(\pi,\sigma)\right|\\ & <\left|I(\pi,\omega)\cup I(\omega,\sigma)\right|\\ & \le\left|I(\pi,\omega)\right|+\left|I(\omega,\sigma)\right|\\ & =\kdist(\pi,\omega)+\kdist(\omega,\sigma). \end{align*} 

Hence, if $\omega$ is not between $\pi$ and $\sigma$, then \[ \kdist(\pi,\sigma)\neq\kdist(\pi,\omega)+\kdist(\omega,\sigma). \] 

Next, suppose $\omega$ is between $\pi$ and $\sigma$. This immediately implies that $I(\pi,\omega)\subseteq I(\pi,\sigma)$ and $I(\omega,\sigma)\subseteq I(\pi,\sigma)$. These relations, along with (\ref{eq:in-union}) imply that \begin{equation} I(\pi,\omega)\cup I(\omega,\sigma)=I(\pi,\sigma).\label{eq:union-in} \end{equation}

We claim that $I(\pi,\omega)\cap I(\omega,\sigma)=\emptyset$. To see this, observe that if $\{a,b\}\in I(\pi,\omega)\cap I(\omega,\sigma)$, then the relative rankings of $a$ and $b$ are the same for $\pi$ and $\sigma$ and so, $\{a,b\}\notin I(\pi,\sigma)$. The last statement contradicts \eqref{eq:union-in} and thus \begin{equation} I(\pi,\omega)\cap I(\omega,\sigma)=\emptyset.\label{eq:intersection} \end{equation} 

From (\ref{eq:union-in}) and (\ref{eq:intersection}), we may write \begin{align*} \kdist(\pi,\sigma) & =\left|I(\pi,\sigma)\right|\\ & =\left|I(\pi,\omega)\cup I(\omega,\sigma)\right|\\ & =\left|I(\pi,\omega)\right|+\left|I(\omega,\sigma)\right|\\ & =\dist(\pi,\omega)+\dist(\omega,\sigma), \end{align*} and this completes the proof of the fact that $\kdist$ satisfies Axiom I.3. \end{proof} 


A distance $\dist$ over $\mathbb{S}_{n}$ is called a \emph{graphic distance} \cite{deza1998metrics} if there exists a graph $G$ with vertex set $\mathbb{S}_{n}$ such that 
for $\pi,\sigma\in\mathbb{S}_{n},$ $\dist\left(\pi,\sigma\right)$ is equal to the length of the shortest path between $\pi$ and $\sigma$ in $G$. Note that this definition implies that the edge set of $G$ is the set \[ \left\{ \left(\alpha,\beta\right):\alpha,\beta\in\mathbb{S}_{n},\dist\left(\alpha,\beta\right)=1\right\} . \]

The Kendall $\tau$ distance is a graphic distance. To see the validity of this claim, take the corresponding graph to have vertices indexed by permutations, with an edge between each pair of permutations that differ by only one adjacent transposition. 

In the next section, we introduce the weighted Kendall distance which may be viewed as the shortest path between permutations over a \emph{weighted graph,} and show how this distance arises from modifying Kemeny's axioms.

\begin{figure*} \label{fig:GraphicKendall}
\begin{center}
\subfloat[]{\includegraphics[width=2.5in]{./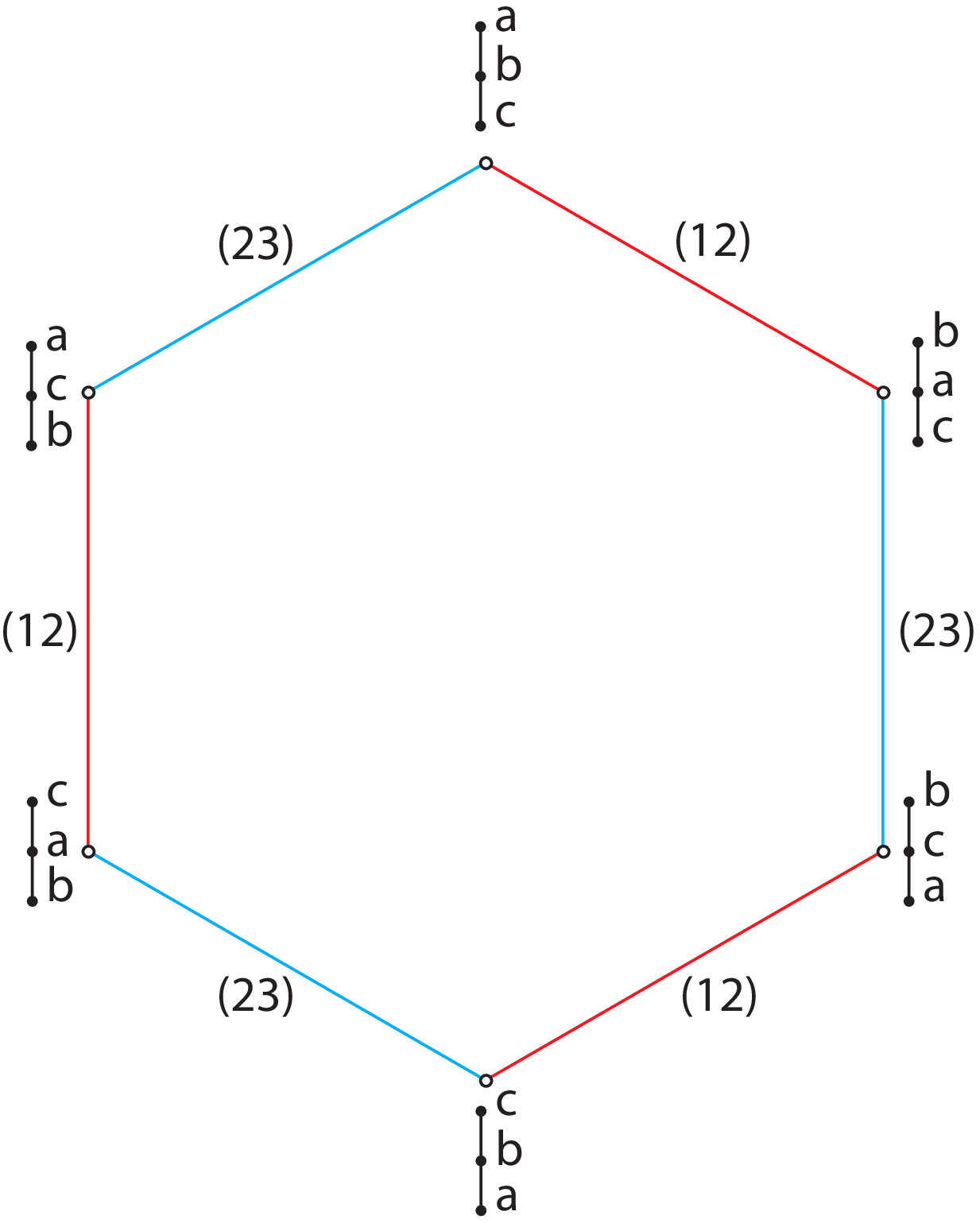}}\hspace{0.2in}
\subfloat[]{\includegraphics[width=2.5in]{./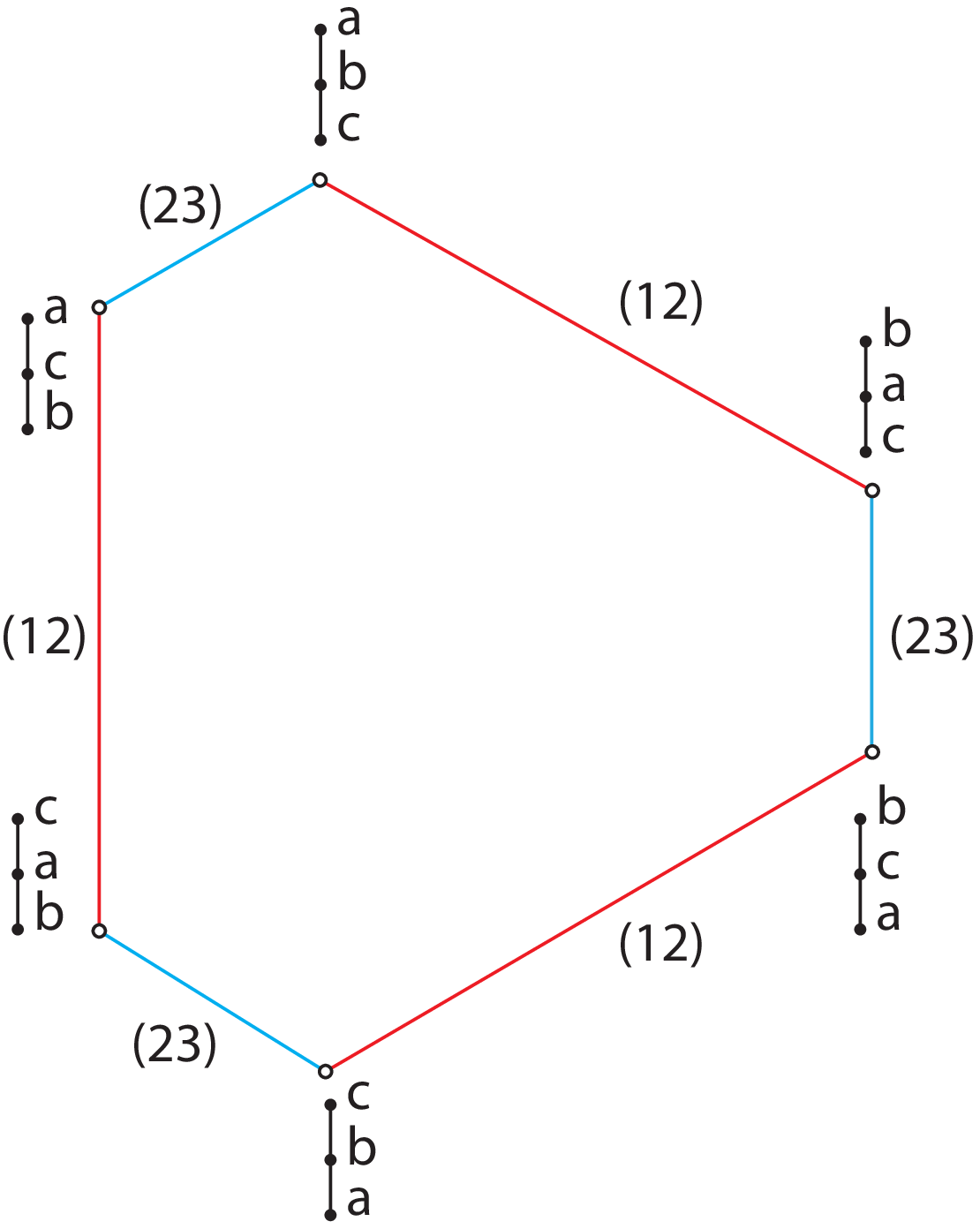}}
\caption{The graphs for the Kendall $\tau$ distance (a), and weighted Kendall distance (b)}
\end{center}
\end{figure*}

\section{The Weighted Kendall Distance} \label{sec:wkd}

The proof of the uniqueness of the Kendall $\tau$ distance under Axioms I reveals an important insight: the Kendall $\tau$ distance arises due to the fact that adjacent transpositions have uniform costs, which is a consequence of the betweenness property described in one of the axioms. If one had a ranking problem in which weights of transpositions either depended on the identity of the elements involved in the transposition or their positions, the uniformity assumption would have to be changed. As we show below, a way to achieve this goal is to redefine the axioms in terms of the betweenness property.

\vspace{1mm}
 \textbf{Axioms II} \begin{enumerate} 
\item $\dist$ is a pseudo-metric, i.e. a generalized metric in which two distinct points may be at distance zero.
\item $\dist$ is left-invariant. 
\item For any $\pi,\sigma$ disagreeing on more than one pair of elements, there exists \emph{some} $\omega$, distinct from $\pi$ and $\sigma$ and between them, such that $\dist(\pi,\sigma)=\dist(\pi,\omega)+\dist(\omega,\sigma)$.
\end{enumerate} %
Axiom II.1 allows for the option that some transpositions are not penalized or counted, due to the side constraints of the voting process. Intuitively, Axiom II.3 states that there exists at least one point on some line between $\pi$ and $\sigma$, for which the triangle inequality is an equality. In other words, there exists one ``shortest line'' between two permutations, and not all straight lines are required to be of the same length (see Figure 2 for an illustration). 

\begin{lem} \label{lem:WKD}For any distance $\dist$ that satisfies Axioms II, and for distinct $\pi$ and $\sigma$, we have \[ \dist(\pi,\sigma)=\min_{\left(\tau_{1},\cdots,\tau_{s}\right)\in\adjset(\pi,\sigma)}\sum_{i=1}^{s}\dist(\tau_{i},e). \] \end{lem}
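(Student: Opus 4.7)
The plan is to prove the two inequalities separately; the upper bound is immediate from left-invariance and the triangle inequality, while the lower bound requires induction on the number of inversions, powered by Axiom II.3.

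For the upper bound $\dist(\pi,\sigma) \le \min$, I take any $(\tau_1,\ldots,\tau_s) \in \adjset(\pi,\sigma)$ and define $\pi_0 = \pi$, $\pi_k = \pi\tau_1\cdots\tau_k$, so that $\pi_s = \sigma$. The triangle inequality (valid for a pseudo-metric) gives $\dist(\pi,\sigma) \le \sum_{k=1}^s \dist(\pi_{k-1},\pi_k)$, and left-invariance reduces each term to $\dist(\pi_{k-1},\pi_{k-1}\tau_k) = \dist(e,\tau_k) = \dist(\tau_k,e)$. Minimizing over $\adjset(\pi,\sigma)$ yields one direction and uses nothing beyond Axioms II.1--II.2.

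For the reverse inequality I would induct on $|I(\pi,\sigma)|$, which is at least $1$ since $\pi \neq \sigma$. In the base case $|I(\pi,\sigma)|=1$, I first verify that the unique inverted pair $\{a,b\}$ must be adjacent in $\pi$: any element $c$ whose $\pi$-position lies strictly between those of $a$ and $b$ would, regardless of where it sits in $\sigma$, create a second inverted pair (either $\{a,c\}$ or $\{b,c\}$), contradicting $|I(\pi,\sigma)|=1$. Hence $\pi^{-1}\sigma$ is a single adjacent transposition $(i\,i+1)$, and left-invariance gives $\dist(\pi,\sigma)=\dist((i\,i+1),e)$, which trivially equals the minimum on the right-hand side.

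For the inductive step $|I(\pi,\sigma)| > 1$, I apply Axiom II.3 to obtain an $\omega$ between $\pi$ and $\sigma$, distinct from both, satisfying $\dist(\pi,\sigma)=\dist(\pi,\omega)+\dist(\omega,\sigma)$. Reusing the partition argument from the proof of Theorem~\ref{thm:unique-dist}, since $\omega$ is between $\pi$ and $\sigma$ one has $I(\pi,\omega)\sqcup I(\omega,\sigma)=I(\pi,\sigma)$; distinctness of $\omega$ from each endpoint makes both parts non-empty, so $|I(\pi,\omega)|$ and $|I(\omega,\sigma)|$ are strictly smaller than $|I(\pi,\sigma)|$. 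The inductive hypothesis then expresses $\dist(\pi,\omega)$ and $\dist(\omega,\sigma)$ as the claimed minima, and concatenating the two optimizing sequences produces an element of $\adjset(\pi,\sigma)$ whose total weight equals $\dist(\pi,\omega)+\dist(\omega,\sigma)=\dist(\pi,\sigma)$, giving $\min_{\tau\in\adjset(\pi,\sigma)}\sum_i\dist(\tau_i,e)\le\dist(\pi,\sigma)$ and completing the induction.

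The main subtlety is that Axiom II.3 is weaker than its Axiom~I.3 counterpart: it guarantees only \emph{one} splitting point $\omega$ rather than one for every line, so one cannot iterate as freely as in the unweighted case. The key observation that makes the inductive step go through is that a single such $\omega$ is exactly what the recursion needs, because one splitting already strictly reduces the inversion count on both sides. A minor but worth-noting technicality is the base-case claim that a single disagreeing pair must be adjacent in the permutations — this is what anchors the recursion to the adjacent-transposition sequences in $\adjset$.
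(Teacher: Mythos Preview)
Your proof is correct and follows essentially the same approach as the paper: both prove the upper bound via the triangle inequality and left-invariance, and the lower bound by induction on the Kendall $\tau$ distance (equivalently, the inversion count $|I(\pi,\sigma)|$), using Axiom~II.3 to split at some intermediate $\omega$ with strictly smaller inversion counts on each side and concatenating the inductively obtained optimal sequences. The only cosmetic differences are that the paper phrases the induction variable as $\kdist(\pi,\sigma)$ rather than $|I(\pi,\sigma)|$, and in the base case the paper also explicitly notes that every sequence in $\adjset(\pi,\sigma)$ must contain the transposition $(a\,a+1)$, whereas you (equivalently) deduce the base-case equality directly from the already-established upper bound together with the singleton witness.
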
 

\begin{proof} The proof follows by induction on $\kdist(\pi,\sigma)$, the Kendall $\tau$ distance between $\pi$ and $\sigma$. 

First, suppose that $\kdist(\pi,\sigma)=1$, i.e., $\pi$ and $\sigma$ disagree on one pair of adjacent elements. Then, we have $\sigma=\pi(a\,a+1)$ for some $a\in[n-1]$. For each $\left(\tau_{1},\cdots,\tau_{s}\right)\in\adjset(\pi,\sigma)$, there exists an index $j$ such that $\tau_{j}=(a\,a+1)$ and thus \[ \sum_{i=1}^{s}\dist(\tau_{i},e)\ge\dist(\tau_{j},e)=\dist(\left(a\,a+1\right),e) \] implying \begin{equation} \min_{\left(\tau_{1},\cdots,\tau_{s}\right)\in A(\pi,\sigma)}\sum_{i=1}^{s}\dist(\tau_{i},e)\ge\dist(\left(a\,a+1\right),e).\label{eq:lq} \end{equation} On the other hand, since $\left((a\,a+1)\right)\in\adjset(\pi,\sigma)$, \begin{equation} \min_{\left(\tau_{1},\cdots,\tau_{s}\right)\in A(\pi,\sigma)}\sum_{i=1}^{s}\dist(\tau_{i},e)\le\dist(\left(a\,a+1\right),e).\label{eq:hq} \end{equation} From (\ref{eq:lq}) and (\ref{eq:hq}), \[ \dist(\pi,\sigma)=\min_{\left(\tau_{1},\cdots,\tau_{s}\right)\in A(\pi,\sigma)}\sum_{i=1}^{s}\dist(\tau_{i},e)=\dist(\left(a\,a+1\right),e)\] where the last equality follows from the left-invariance of $\dist$.

Next, suppose that $\kdist(\pi,\sigma)>1$, i.e., $\pi$ and $\sigma$ disagree on more than one pair of adjacent elements, and that for all $\mu,\eta\in\S_n$ with $\kdist(\mu,\eta)<\kdist(\pi,\sigma)$, the lemma holds. Then, there exists $\omega$, distinct from $\pi$ and $\sigma$ and between them, such that
\begin{equation*}
\begin{split}
\dist(\pi,\sigma) &= \dist(\pi,\omega)+\dist(\omega,\sigma),\\
\kdist(\pi,\omega) &< \kdist(\pi,\sigma),\\
\kdist(\omega,\sigma)&<\kdist(\pi,\sigma).
\end{split}
\end{equation*}

By the induction hypothesis, there exist $\left(\nu_{1},\cdots,\nu_{k}\right)\in A(\pi,\omega)$ and $\left(\nu_{k+1},\cdots,\nu_{s}\right)\in A(\omega,\sigma)$, for some $s$ and $k$, such that 
\begin{equation*}
\begin{split}
\dist(\pi,\omega)&=\sum_{i=1}^{k}\dist(\nu_{i},e),\\
\dist(\omega,\sigma)&=\sum_{i=k+1}^{s}\dist(\nu_{i},e),
\end{split}
\end{equation*}
and thus
\begin{equation*}
\begin{split}
\dist(\pi,\sigma)=\sum_{i=1}^{s}\dist(\nu_{i},e)\ge\min_{\left(\tau_{1},\cdots,\tau_{s'}\right)\in A(\pi,\sigma)}\sum_{i=1}^{s'}\dist(\tau_{i},e),
\end{split}
\end{equation*}
where the inequality follows from the fact that $\left(\nu_{1},\cdots,\nu_{s}\right)\in A(\pi,\sigma)$.
To complete the proof, note that by the triangle inequality, 
\[\dist(\pi,\sigma)\le\min_{\left(\tau_{1},\cdots,\tau_{s'}\right)\in A(\pi,\sigma)}\sum_{i=1}^{s'}\dist(\tau_{i},e).\]

\end{proof} 

\begin{defn} A distance $\dist_{\varphi}$ is termed a \emph{weighted Kendall distance }if there exists a nonnegative \emph{weight function} $\varphi$ over the set of adjacent transpositions such that \[ \dist_{\varphi}(\pi,\sigma)=\min_{\left(\tau_{1},\cdots,\tau_{s}\right)\in\adjset(\pi,\sigma)}\sum_{i=1}^{s}\varphi_{\tau_{i}}, \] where $\varphi_{\tau_i}$ is the weight assigned to transposition $\tau_i$ by $\varphi$.

The weight of a transform $\tau=\left(\tau_{1},\cdots,\tau_{s}\right)$ is denoted by $\wt\left(\tau\right)$ and is defined as \[ \wt(\tau)=\sum_{i=1}^{s}\varphi_{\tau_{i}}. \] Hence, $\dist_{\varphi}(\pi,\sigma)$ may be written as \[ \dist_{\varphi}(\pi,\sigma)=\min_{\tau\in\adjset(\pi,\sigma)}\wt(\tau). \]
\end{defn} 

Note that a weighted Kendall distance is completely determined by its weight function $\varphi$. 

\begin{thm} \label{thm:WKD}A distance $\dist$ satisfies Axioms II if and only if it is a weighted Kendall distance.\end{thm}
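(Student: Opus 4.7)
I will prove both implications separately. The forward direction (Axioms II $\Rightarrow$ weighted Kendall) is essentially a corollary of Lemma \ref{lem:WKD}: define the weight function $\wtf{\tau} := \dist(\tau,e)$ for each adjacent transposition $\tau$, which is nonnegative by Axiom II.1, and Lemma \ref{lem:WKD} gives $\dist(\pi,\sigma) = \dist_\wtfn(\pi,\sigma)$ for distinct $\pi,\sigma$. When $\pi = \sigma$, both sides vanish, the left by the pseudo-metric property and the right via the empty transform.

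The reverse direction requires verifying the three axioms for an arbitrary weighted Kendall distance $\dist_\wtfn$ with $\wtfn \ge 0$. The pseudo-metric property (II.1) is routine: $\dist_\wtfn(\pi,\pi) = 0$ via the empty transform, nonnegativity from $\wtfn \ge 0$, symmetry by reversing a transform (adjacent transpositions are involutions), and the triangle inequality by concatenating transforms and taking minima. Left-invariance (II.2) is immediate because $(\tau_1,\ldots,\tau_s) \in \adjset(\pi,\sigma)$ iff $(\tau_1,\ldots,\tau_s) \in \adjset(\mu\pi,\mu\sigma)$ with identical weight sum, so the defining minima coincide. For the substantive Axiom II.3, given $\pi,\sigma$ disagreeing on more than one pair, I will pick a minimum-weight transform $(\tau_1,\ldots,\tau_s) \in \adjset(\pi,\sigma)$ that additionally has minimum length among all minimum-weight transforms, and take $\omega := \pi\tau_1\cdots\tau_k$ for any $1 \le k \le s-1$.

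Two claims complete the argument. First, this chosen transform lies on a line $\pi\dash\pi\tau_1\dash\cdots\dash\sigma$: if not, the exchange argument used in Lemma \ref{lem:gamma-1} produces two transpositions $\tau_{i'}, \tau_{j'}$ swapping a common pair of values whose excision yields a valid transform of length $s-2$ in $\adjset(\pi,\sigma)$; since the original has minimum weight and the removed weights are nonnegative, $\wtf{\tau_{i'}} + \wtf{\tau_{j'}}$ must vanish, so the excised transform is a strictly shorter minimum-weight transform, contradicting the minimum-length choice. Second, $\omega$ is distinct from $\pi$ and $\sigma$: otherwise some prefix $\tau_1 \cdots \tau_k$ or suffix $\tau_{k+1} \cdots \tau_s$ multiplies to the identity, and deleting that identity block shortens the transform while (by nonnegativity plus minimum-weight status) preserving its total weight, again contradicting minimum length. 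Since $\kdist(\pi,\sigma) \ge 2$ forces $s \ge 2$, a valid $k$ exists; the line property places $\omega$ between $\pi$ and $\sigma$, and splitting the transform at position $k$ combined with the triangle inequality yields $\dist_\wtfn(\pi,\sigma) = \dist_\wtfn(\pi,\omega) + \dist_\wtfn(\omega,\sigma)$. The main obstacle is precisely this non-line reduction step: the structural excision of two transpositions that swap a common pair of values is already furnished by the Coxeter-style argument of Lemma \ref{lem:gamma-1}, and the new ingredient is the observation that nonnegativity of $\wtfn$ converts a length reduction into a weight-preserving length reduction, which is exactly what contradicts the minimum-length-at-minimum-weight choice.
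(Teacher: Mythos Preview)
Your proof is correct and follows the paper's approach: the forward direction is identical (define $\wtf{\theta}:=\dist(\theta,e)$ and invoke Lemma~\ref{lem:WKD}), while the paper dismisses the converse as ``easy to verify'' and you have supplied the details. Your treatment of Axiom~II.3 --- selecting a minimum-weight transform of minimum length, then using the excision argument of Lemma~\ref{lem:gamma-1} together with nonnegativity of $\wtfn$ to force that transform onto a line --- is the natural way to flesh out what the paper omits, and the distinctness of the intermediate $\omega$ via the same minimum-length contradiction is handled cleanly.
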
 

\begin{proof} It follows immediately from Lemma \ref{lem:WKD} that a distance $\dist$ satisfying Axioms II is a weighted Kendall distance by letting \[ \varphi_{\theta}=\dist(\theta,e) \] for every transposition $\theta$ taken from the set of adjacent transpositions $\mathbb{A}_n$ in $\mathbb{S}_n$. 

The proof of the converse is omitted since it is easy to verify that a weighted Kendall distance satisfies Axioms II. \end{proof}

The weighted Kendall distance provides a natural solution for the top-vs-bottom issue. For instance, recall the example of ranking cities to live in, with
\begin{align*}
\pi = (&\text{Melbourne, Vienna, Vancouver, Toronto, Calgary,} \\
&\text{Adelaide, Sydney, Helsinki, Perth, Auckland}),\\
 \pi' = (&\text{Melbourne, Vienna, Vancouver, Calgary, Toronto,}\\
&\text{Adelaide, Sydney, Helsinki, Perth, Auckland}),\\
 \pi'' = (&\text{Vienna, Melbourne, Vancouver, Toronto, Calgary,}\\
 &\text{Adelaide, Sydney, Helsinki, Perth, Auckland}),\end{align*}
and choose the weight function 
\(\nwt{i}{i+1} = 0.9^{i-1}\)
for $i=1,2,\cdots,9$. Then, $\dist_\wtfn(\pi,\pi')=0.9^4=0.66<\dist_\wtfn(\pi,\pi'')=1$ as expected. In this case, we have chosen the weight function to be exponentially decreasing -- the choice of the weight function in general depends on the application. 

\subsection*{Computing the Weighted Kendall Distance for Monotonic Weight Functions}

Computing the weighted Kendall distance between two permutations for an arbitrary weight function is not as straightforward a task as computing the Kendall $\tau$ distance. However, in what follows, we show that for an important class of weight functions -- termed ``monotonic'' weight functions -- the weighted Kendall distance may be computed efficiently. 

\begin{defn} A weight function $\wtfn:\mathbb{A}_{n}\to\mathbb{R}^{+}$, where $\mathbb{A}_{n}$ as before denotes the set of adjacent transpositions in $\mathbb{S}_{n}$, is decreasing if $i>j$ implies that $\wtf{(i\,i+1)}\le\wtf{(j\,j+1)}.$ Increasing weight functions are defined similarly. A weight function is monotonic if it is increasing or decreasing. 
\end{defn} 

Monotonic weight functions are of importance in the top-vs-bottom model as they can be used to emphasize the significance of the top of the ranking by assigning higher weights to transpositions at the top of the list. An
example of a decreasing weight function is the exponential weight described in the previous subsection.

Suppose that $\tau=\left(\tau_{1},\cdots,\tau_{|\tau|}\right)$ of length $|\tau|$ transforms $\pi$ into $\sigma$. The transformation may be viewed as a sequence of moves of elements $i$, $i=1,\ldots,n,$ from position $\pi^{-1}(i)$ to position $\sigma^{-1}(i)$. 
Let the \emph{walk} followed by element $i$ while moved by the transform $\tau$ be denoted by $p^{i,\tau}=\left(p_{1}^{i,\tau},\cdots,p_{\left|p^{i,\tau}\right|+1}^{i,\tau}\right)$, where $\left|p^{i,\tau}\right|$ is the length of the walk $p^{i,\tau}$. 

For example, consider 
\begin{equation*}
\begin{split}
\pi &= (3,2,4,1),\\
\sigma & = (1,2,3,4),\\
\tau &= (\tau_1,\tau_2,\tau_3,\tau_4)\\
& = ((3\,4),(2\,3),(1\,2),(2\,3))
\end{split}
\end{equation*}
and note that $\sigma = \pi \tau_1 \tau_2 \tau_3 \tau_4$. We have
\begin{equation*}
\begin{split}
p^{1,\tau} &= (4,3,2,1),\\
p^{2,\tau} &= (2,3,2),\\
p^{3,\tau} &= (1,2,3),\\
p^{4,\tau} &= (3,4).
\end{split}
\end{equation*}

We first bound the lengths of the walks $p^{i,\tau},i\in[n].$ Let $I_{i}(\pi,\sigma)$ be the set consisting of elements $j\in[n]$ such that $\pi$ and $\sigma$ disagree on the pair $\{i,j\}$. 
In the transform $\tau$, all elements of $I_{i}(\pi,\sigma)$ must be swapped with $i$ by some $\tau_{k},k\in[|\tau|]$. Each such swap contributes length one to the total length of the walk 
$p^{i,\tau}$ and thus, $\left|p^{i,\tau}\right|\ge\left|I_{i}(\pi,\sigma)\right|$.

As before, let $\dist_{\wtfn}$ denote the weighted Kendall distance with weight function $\varphi$. Since for any $\tau\in A(\pi,\sigma)$, 
\[ \sum_{i=1}^{|\tau|}\varphi_{\tau_{i}}=\sum_{i=1}^{n}\frac{1}{2}\sum_{j=1}^{\left|p^{i,\tau}\right|}\nwt{p_{j}^{i,\tau}}{p_{j+1}^{i,\tau}}, \] 
we have
\[
\dist_{\wtfn}(\pi,\sigma)=\min_{\tau\in A(\pi,\sigma)} \sum_{i=1}^{n}\frac{1}{2}\sum_{j=1}^{\left|p^{i,\tau}\right|} \nwt{p_{j}^{i,\tau}}{p_{j+1}^{i,\tau}}. \] 
Thus,
\begin{equation} 
\dist_{\wtfn}(\pi,\sigma)\ge\sum_{i=1}^{n}\frac{1}{2} 
\min_{p^{i}\in P_{i}(\pi,\sigma)}
\sum_{j=1}^{\left|p^{i}\right|}\nwt{p_{j}^{i}}{p_{j+1}^{i}},
\label{eq:LB} \end{equation} 
where for each $i$, $P_{i}(\pi,\sigma)$ denotes the set of walks of length $\left|I_{i}(\pi,\sigma)\right|$, starting from $\pi^{-1}(i)$ and ending in $\sigma^{-1}(i)$. 
For convenience, let
\[p^{i,\star}(\pi,\sigma) = \arg\min_{p^{i}\in P_{i}(\pi,\sigma)}
\sum_{j=1}^{\left|p^{i}\right|}\nwt{p_{j}^{i}}{p_{j+1}^{i}}\]
be the minimum weight walk from $\pi^{-1}(i)$ to $\sigma^{-1}(i)$ with length $|I_i(\pi,\sigma)|$.

If clear from the context, we write $p^{i,\star}(\pi,\sigma)$ as $p^{i,\star}$.

We show next that for decreasing weight functions, the bound given in \eqref{eq:LB} is achievable and thus the value on the right-hand-side gives the weighted Kendall distance for this class of weight functions. 

Consider $\pi,\sigma\in\S_n$ and a decreasing weight function $\wtfn$. For each $i$, it follows that $p^{i,\star}(\pi,\sigma)$ extends to positions with largest possible indices, i.e., $p^{i,\star}=(\pi^{-1}(i),\allowbreak\cdots,\ell_{i}-1,\ell_{i},\ell_{i}-1,\cdots,\sigma^{-1}(i))$ where $\ell_{i}$ is the solution to the equation \[ \ell_{i}-\pi^{-1}(i)+\ell_{i}-\sigma^{-1}(i)=I_{i}(\pi,\sigma) \] and thus $\ell_{i}=\left(\pi^{-1}(i)+\sigma^{-1}(i)+I_{i}(\pi,\sigma)\right)/2.$

We show next that there exists a transform $\tau^{\star}$ with $p^{i,\tau^{\star}} =p^{i,\star},$ and so equality in (\ref{eq:LB}) can be achieved. The transform is described in Algorithm \ref{alg:FindTauMonotone}. The transform in question, $\tau^{\star}$, converts $\pi$ into $\sigma$ in $n$ rounds. In Algorithm~\ref{alg:FindTauMonotone}, the variable $r$ takes values $\sigma(1),\sigma(2),\cdots,\sigma(n)$, in that given order. 
For each value of $r$, $\tau^{\star}$ moves $r$ through a sequence of adjacent transpositions from its current position in $\pi_t$, $\pi_t^{-1}(r)$, to position $\sigma^{-1}(r)$. 

Fix $i\in[n]$. For values of $r$, used in Algorithm 1, such that $\sigma^{-1}(r)<\sigma^{-1}(i)$, $i$ is swapped with $r$ via an adjacent transposition if $\pi^{-1}(r)>\pi^{-1}(i)$. For $r=i$, $i$ is swapped with all elements $k$ such that $\pi^{-1}(k)<\pi^{-1}(i)$ and $\sigma^{-1}(i)<\sigma^{-1}(k)$. For $r$ such that $\sigma^{-1}(r)>\sigma^{-1}(i)$, $i$ is not swapped with other elements. Hence, $i$ is swapped precisely with elements of the set $I_i(\pi,\sigma)$ and thus, $|p^{i,\tau^\star}(\pi,\sigma)|=|I_i(\pi,\sigma)|$. Furthermore, it can be seen that, for each $i$, $p^{i,\tau^\star}(\pi,\sigma) = (\pi^{-1}(i),\cdots,\ell_{i}'-1,\ell_{i}',\ell_{i}'-1,\cdots,\sigma^{-1}(i)),$ for some $\ell_{i}'$. Since $|p^{i,\tau^\star}(\pi,\sigma)|=|I_i(\pi,\sigma)|$, $\ell_{i}'$ also satisfies the equation \[ \ell_{i}'-\pi^{-1}(i)+\ell_{i}'-\sigma^{-1}(i)=I_{i}(\pi,\sigma), \] implying that $\ell'_{i}=\ell_{i}$ and thus $p^{i,\tau^{\star}}=p^{i,\star}$. Consequently, one has the following result.

\begin{algorithm} \caption{FindTauMonotone\label{alg:FindTauMonotone}}

\renewcommand{\algorithmicrequire}{\textbf{Input:}} \renewcommand{\algorithmicensure}{\textbf{Output:}}

\begin{algorithmic}[1]

\REQUIRE $\pi,\sigma\in\mathbb{S}_{n}$

\vspace{.2mm}
\ENSURE $\tau^{\star}=\arg\min_{\tau\in A\left(\pi,\sigma\right)}\wt\left(\tau\right)$

\vspace{.2mm}
\STATE $\pi_0 \leftarrow \pi$

\vspace{.2mm}
\STATE $t\leftarrow0$

\vspace{.2mm}
\FOR{$r=\sigma(1),\sigma(2),\cdots,\sigma(n)$}

\vspace{.2mm}
\WHILE{$\pi_t^{-1}(r)>\sigma^{-1}(r)$}

\vspace{.2mm}
\STATE $\tau_{t+1}^{\star}\leftarrow\left(\pi_t^{-1}(r)-1\ \ \pi_t^{-1}(r)\right)$

\vspace{.2mm}
\STATE $\pi_{t+1}\leftarrow\pi_t\tau_{t+1}^{\star}$

\vspace{.2mm}
\STATE $t\leftarrow t+1$

\vspace{.2mm}
\ENDWHILE

\vspace{.2mm}
\ENDFOR

\end{algorithmic} \end{algorithm}

\begin{prop} For rankings $\pi,\sigma\in\mathbb{S}_{n}$, and a decreasing weighted Kendall weight function $\wtfn$, we have \[ \dist_{\wtfn}(\pi,\sigma)=\sum_{i=1}^{n}\frac{1}{2}\left(\sum_{j=\pi^{-1}(i)}^{\ell_{i}-1}\wtf{(j\,j+1)}+\sum_{j=\sigma^{-1}(i)}^{\ell_{i}-1}\wtf{(j\,j+1)}\right) \] where $\ell_{i}=\left(\pi^{-1}(i)+\sigma^{-1}(i)+I_{i}(\pi,\sigma)\right)/2$.\end{prop}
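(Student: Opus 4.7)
The plan is to sandwich $\dist_{\wtfn}(\pi,\sigma)$ between the lower bound already established in equation (\ref{eq:LB}) and a matching upper bound realized by the explicit transform $\tau^{\star}$ produced by Algorithm \ref{alg:FindTauMonotone}. Since the text preceding the proposition has done most of the heavy lifting, my task reduces to assembling the pieces and justifying the one step where the decreasing hypothesis is essential: the claim that the minimum-weight walk $p^{i,\star}(\pi,\sigma)$ of prescribed length $|I_{i}(\pi,\sigma)|$ from $\pi^{-1}(i)$ to $\sigma^{-1}(i)$ is the unimodal walk peaking at $\ell_{i}$.

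First I would prove this structural claim. Because $\wtfn$ is decreasing, the weight $\wtf{(j\,j+1)}$ of an edge in the path graph on $[n]$ is nonincreasing in $j$. Given any walk in $P_{i}(\pi,\sigma)$ that is not of the form ``strictly up to some peak, then strictly down,'' one can identify a pair of consecutive steps that either descend then ascend, or contain a backtrack at a low position, and replace it with a pair of steps shifted one index higher; this local exchange preserves length, endpoints, and walk validity while weakly decreasing the total weight. Iterating exhausts any non-unimodal walk, so the optimum is unimodal, and the peak $\ell_{i}$ is determined by the length equation $\ell_{i}-\pi^{-1}(i)+\ell_{i}-\sigma^{-1}(i)=|I_{i}(\pi,\sigma)|$, giving $\ell_{i}=(\pi^{-1}(i)+\sigma^{-1}(i)+|I_{i}(\pi,\sigma)|)/2$.

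Next I would invoke the analysis of Algorithm \ref{alg:FindTauMonotone} given in the text: the algorithm produces $\tau^{\star}\in\adjset(\pi,\sigma)$ for which the walk $p^{i,\tau^{\star}}$ has length exactly $|I_{i}(\pi,\sigma)|$ and is itself unimodal (up to some peak then down). Matching its length against the same equation forces its peak to coincide with $\ell_{i}$, so $p^{i,\tau^{\star}}=p^{i,\star}$ for every $i$. Combining the identity
\[
\wt(\tau^{\star})=\sum_{i=1}^{n}\tfrac{1}{2}\sum_{j=1}^{|p^{i,\tau^{\star}}|}\nwt{p_{j}^{i,\tau^{\star}}}{p_{j+1}^{i,\tau^{\star}}}
\]
(used in deriving \eqref{eq:LB}) with $p^{i,\tau^{\star}}=p^{i,\star}$ shows that $\tau^{\star}$ attains the lower bound \eqref{eq:LB}. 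Hence that bound is tight, and evaluating the unimodal walk sum $\tfrac{1}{2}\sum_{j}\nwt{p_{j}^{i,\star}}{p_{j+1}^{i,\star}}$ as two telescoping sums along the up-branch $\pi^{-1}(i)\to\ell_{i}$ and the down-branch $\sigma^{-1}(i)\to\ell_{i}$ yields exactly the closed form in the proposition.

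The main obstacle is the structural claim in the first paragraph: showing that decreasing weights force the optimal walk of a prescribed length to be unimodal with the highest possible peak. Everything else is either quoted from the preceding discussion (the lower bound and the algorithm's behaviour) or a direct telescoping calculation.
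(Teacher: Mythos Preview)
Your proposal is correct and is essentially the argument the paper gives in the paragraphs immediately preceding the proposition: lower bound \eqref{eq:LB}, identification of the unimodal walk $p^{i,\star}$ peaking at $\ell_i$, and verification that Algorithm~\ref{alg:FindTauMonotone} produces a transform $\tau^\star$ whose element-walks coincide with the $p^{i,\star}$, thereby achieving the bound. The only difference is that you supply an explicit exchange argument for why decreasing weights force the optimal fixed-length walk to be unimodal with the highest feasible peak, whereas the paper simply asserts this (``it follows that $p^{i,\star}(\pi,\sigma)$ extends to positions with largest possible indices''); your added detail is welcome but not a departure in strategy.
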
 
Increasing weight functions may be analyzed similarly.

\begin{example} Consider the rankings $\pi=4312$ and $e=1234$ and a decreasing weight function $\wtfn$. We have $I_{i}(\pi,e)=2$ for $i=1,2$ and $I_{i}(\pi,e)=3$ for $i=3,4$. Furthermore, \begin{align*} \ell_{1} & =\frac{3+1+2}{2}=3, & p^{1,\star} & =(3,2,1),\\ \ell_{2} & =\frac{4+2+2}{2}=4, & p^{2,\star} & =(4,3,2),\\ \ell_{3} & =\frac{2+3+3}{2}=4, & p^{3,\star} & =(2,3,4,3),\\ \ell_{4} & =\frac{1+4+3}{2}=4, & p^{4,\star} & =(1,2,3,4). \end{align*} The minimum weight transformation is \[ \tau^{\star}=\left(\underbrace{(3\,2),(2\,1)}_{1},\underbrace{(4\,3),(3\,2)}_{2},\underbrace{(4\,3)}_{3}\right), \] where the numbers under the braces denote the value $r$ corresponding to the indicated transpositions. The distance between $\pi$ and $e$ is \[ \dist_{\wtfn}(\pi,e)=\nwt12+2\nwt23+2\nwt34. \]
\end{example} 


\begin{example} The bound given in \eqref{eq:LB} is not tight for general weight functions as seen in this example. Consider $\pi = (4,2,3,1)$, $\sigma = (1,2,3,4)$, and a weight function $\wtfn$ with $\nwt12=2,\nwt23=1$, and $\nwt34=2$. Note that the domain of $\wtfn$ is the set of adjacent transpositions. We have
\begin{equation*}
\begin{split}
p^{1,\star} &= (4,3,2,1),\\
p^{2,\star} &= (2,3,2),\\
p^{3,\star} &= (3,2,3),\\
p^{4,\star} &= (1,2,3,4).\\
\end{split}
\end{equation*}
Suppose that a transform $\tau$ exists such that $p^{i,\star} = p^{i,\tau},i = 1,2,3,4$. From $p^{i,\star}$, it follows that in $\tau$, transpositions $(1\,2)$ and $(3\,4)$ each appear once and $(2\,3)$ appears twice. It can be shown, by considering all possible re-orderings of $\{(1\,2),(1\,2),(2\,3),(2\,3),(2\,3)\}$ or by an application of \cite[Lemma 5]{farnoud2012sorting} that $\tau$ does not transform $\pi$ into $\sigma$. Hence, for this example, the lower bound \eqref{eq:LB} is not achievable. 
\end{example}


\subsection*{Weight Functions with Two Identical Non-zero Weights}
Another example of a weighted Kendall $\tau$ distance for which a closed form solution may be found is described below. 

For a pair of integers $a,b,1\le a<b<n$, define the weight function as: 
\begin{equation} \label{eqTwoIdWf}
\varphi_{(i\ i+1)}=\begin{cases}
1, & \quad i\in\left\{ a,b\right\} \\
0, & \quad\mbox{else},
\end{cases}
\end{equation}
i.e., a function which only penalizes moves involving candidates in positions $a$ and $b$. 

Such weight functions may be used in voting problems where one only penalizes moving a link from one page (say, top-ten page) to another page (say, ten-to-twenty page). In other words,
one only penalizes moving an item from a ``high-ranked" set of positions to ``average-rank" or ``low-rank" positions.

An algorithm for computing the weighted Kendall distance for this case is given in the Appendix.


\subsection*{Approximating the Weighted Kendall Distance for General Weight Functions}

The result of the previous subsection implies that at least for one class of weight functions that capture the importance of the top entries in a ranking, computing the weighted Kendall distance has time complexity $O(n^2)$. Hence, distance computation efficiency does not represent a bottleneck for the employment of this form of  the weighted Kendall distance. 

In what follows, we present a polynomial-time 2-approximation algorithm for computing the most general form of weighted Kendall distances, as well as two algorithms for computing this distance exactly. 
While the exact computation has super exponential time complexity, for a small number of candidates -- say, less than 10 -- the computation can be performed in reasonable time. 
A small number of candidates and a large number of voters are frequently encountered in social choice applications, but less frequently in computer science.

In order to approximate the weighted Kendall distance, $\dist_\wtfn(\pi,\sigma)$, we use the function $D_\wtfn(\pi,\sigma)$, defined as
\begin{equation*}
D_\wtfn(\pi,\sigma) = \sum_{i=1}^n w(\pi^{-1}(i):\sigma^{-1}(i)),
\end{equation*}
where 
\begin{equation*}
w(k:l) =  
\begin{cases}
\sum_{h=k}^{l-1}\nwt{h}{h+1}, & \text{if } k<l,\\
\sum_{h=l}^{k-1}\nwt{h}{h+1}, & \text{if } k>l,\\
0, & \text{if } k=l,
\end{cases}
\end{equation*}denotes the sum of the weights of adjacent transpositions $(k\,k+1),(k+1\,k+2),\cdots,(l-1\,l)$ if $k<l$, the sum of the weights of adjacent transpositions $(l\,l+1),(l+1\,l+2),\cdots,(k-1\,k)$ if $l<k$, and 0 if $k=l$.

The following proposition states lower and upper bounds for $\dist_\wtfn$ in terms of $D_\wtfn$. 
The propositions is useful in practice, since $D_\wtfn$ can be computed in time $O(n^2)$, and provides the desired 2-approximation. 

\begin{prop}\label{prop:wk2approximation}
For a weighted Kendall weight function $\wtfn$ and for permutations $\pi$ and $\sigma$, \[\frac12 D_\wtfn(\pi,\sigma) \le \dist_\wtfn(\pi,\sigma) \le D_\wtfn(\pi,\sigma).\]
\end{prop}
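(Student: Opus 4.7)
My plan is to establish the two inequalities separately, both of which hinge on the walk decomposition introduced earlier in the paper, namely that for any $\tau\in\adjset(\pi,\sigma)$,
\[
\wt(\tau)=\tfrac{1}{2}\sum_{i=1}^{n}\sum_{j=1}^{|p^{i,\tau}|}\nwt{p_j^{i,\tau}}{p_{j+1}^{i,\tau}},
\]
together with the elementary observation that $w(k:l)$ is exactly the shortest-walk weight from $k$ to $l$ on the line graph on $\{1,\ldots,n\}$ with edge weights $\nwt{h}{h+1}$.

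For the lower bound $\tfrac{1}{2}D_\wtfn\le\dist_\wtfn$, I will take any $\tau\in\adjset(\pi,\sigma)$ and use that the walk $p^{i,\tau}$ starts at $\pi^{-1}(i)$ and ends at $\sigma^{-1}(i)$. Hence its total weight is at least the direct-path weight $w(\pi^{-1}(i):\sigma^{-1}(i))$. Substituting this bound into the walk decomposition and summing over $i$ gives $\wt(\tau)\ge\tfrac{1}{2}\sum_i w(\pi^{-1}(i):\sigma^{-1}(i))=\tfrac{1}{2}D_\wtfn(\pi,\sigma)$, and minimizing over $\tau$ finishes this direction.

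For the upper bound $\dist_\wtfn\le D_\wtfn$, I need to exhibit a single transform $\tau$ achieving $\wt(\tau)\le D_\wtfn$, which I will produce by induction on $|I(\pi,\sigma)|$. The base case $|I|=0$ forces $\pi=\sigma$, and both sides vanish. For the inductive step, I pick an adjacent inverted pair $(a,b)$ at positions $(h,h+1)$ in $\pi$, let $\pi'=\pi(h\,h{+}1)$, and examine the change $D_\wtfn(\pi,\sigma)-D_\wtfn(\pi',\sigma)$. A short case analysis on where $\sigma^{-1}(a)$ and $\sigma^{-1}(b)$ lie relative to $h$ shows that the change is either $0$ (a "wasted" swap, when both targets are on the same side of $h$) or $2\nwt{h}{h+1}$ (a "useful" swap, when $\sigma^{-1}(a)\ge h{+}1$ and $\sigma^{-1}(b)\le h$). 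Whenever a useful swap is available, combining the triangle inequality $\dist_\wtfn(\pi,\sigma)\le\nwt{h}{h+1}+\dist_\wtfn(\pi',\sigma)$ with the inductive hypothesis $\dist_\wtfn(\pi',\sigma)\le D_\wtfn(\pi',\sigma)=D_\wtfn(\pi,\sigma)-2\nwt{h}{h+1}$ immediately yields the desired bound.

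The hard part is handling configurations in which every adjacent inversion in $\pi$ is wasted (as in $\pi=(2,1,3)$, $\sigma=(3,1,2)$), where the one-step induction overshoots by $\nwt{h}{h+1}$. The plan is to strengthen the induction by amortizing: after a wasted swap, I will locate a subsequent useful swap of at least equal weight that can absorb the cost, so that cumulatively $\sum_B\nwt{\tau_k}\le\sum_A\nwt{\tau_k}=\tfrac{1}{2}D_\wtfn$ (where $A,B$ partition swaps into useful and wasted ones). Concretely, for each wasted swap I plan to identify a "pairing" pair in $I$ whose future useful swap at a boundary lying on a direct path of one of the swapped elements charges back the lost weight, so that the overall sum remains bounded by $\sum_i w(\pi^{-1}(i):\sigma^{-1}(i))=D_\wtfn$. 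Constructing this pairing — essentially a matching between wasted swaps and covering direct-path edges — is the main technical obstacle, and is what makes the constant $2$ (rather than $1$) unavoidable in the approximation guarantee.
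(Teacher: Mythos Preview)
Your lower bound argument is correct and close in spirit to the paper's; the paper formalises it as a telescoping bound on $D_\wtfn$ under individual transpositions (Lemma~\ref{lem:LB}), but the content is the same.

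The upper bound, however, is incomplete as you yourself acknowledge. Identifying that some adjacent-inversion swaps are ``wasted'' (net change zero in $D_\wtfn$) is a real obstruction to the one-step induction, and you stop short of actually constructing the pairing that would amortise their cost. Saying you will ``locate a subsequent useful swap of at least equal weight'' is not yet a proof: for a general weight function there is no reason such a swap should exist in any particular sequence of moves, and the matching you gesture at is never defined. In short, the plan names the obstacle but does not overcome it.

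The paper avoids this difficulty entirely by taking a different route. Rather than inducting on $|I(\pi,\sigma)|$ and tracking useful versus wasted adjacent swaps, it passes to the more general weighted \emph{transposition} distance (allowing non-adjacent swaps), works with the cycle decomposition of $\sigma^{-1}\pi$, and proves first that for a \emph{metric-path} weight function one has $\dist_{\wtfn'}=\tfrac12 D_{\wtfn'}$ exactly (Lemma~\ref{lem:metric-path}). The argument there is an induction on cycle length: one splits a cycle $(a_1\,\cdots\,a_t\,\cdots\,a_l)$ at the element $a_t$ second-closest to an endpoint of the defining path, and the crucial identity $\wt(p^*(a_t,a_1))+\wt(p^*(a_l,a_t))=\wt(p^*(a_l,a_1))$ (which holds precisely because $a_t$ lies between $a_1$ and $a_l$ on the path) makes the recursion close up with no loss. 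The weighted Kendall upper bound then follows by doubling the edge weights to obtain a metric-path function $\wtfn'$, noting $\dist_\wtfn\le\dist_{\wtfn'}=\tfrac12 D_{\wtfn'}=D_\wtfn$. This cycle-based decomposition sidesteps the wasted-swap issue completely, because the splitting point $a_t$ is chosen so that the two subpaths glue without overlap; there is no amortisation to do.
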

We omit the proof of the proposition, since it follows from a more general result stated in the next section, and only remark that the lower-bound presented above proposition is weaker than 
the lower-bound given by \eqref{eq:LB}.

Next, we discuss computing the \emph{exact} weighted Kendall distance via algorithms for finding minimum weight paths in graphs. As already pointed out, the Kendall $\tau$ and the weighted Kendall distance are graphic distances. In the latter case, we define a graph $G$ with vertex set indexed by $\Sn $ and an edge of weight $\varphi_{(i\ i+1)},i\in[n-1],$ between each pair of vertices $\pi$ and $\sigma$ for which there exists an $i$ such that $\pi=\sigma(i\ i+1)$. The numbers of vertices and edges of $G$ are $|V|=n!$ and $|E|=n!(n-1)/2$, respectively. Dijkstra's algorithm with Fibonacci heaps \cite{fredman:1987:dijkstra} for finding the minimum weight path in a graph provides the distances of all $\pi\in \Sn$ to the identity in time $O(|E|+|V|\log|V|)=O(n!\,n\log n)$. 

The complexity of the algorithm for finding the distance between $\pi\in\Sn$ and the identity may be actually shown to be $O(n(\kdist(\pi,e))!)$, which is significantly smaller than $\Omega(n!)$ for permutations at small Kendall $\tau$ distance. The minimum weight path algorithm is based on the following observation. For $\pi$ in $\Sn$, there exists a transform $\tau=(\tau_1,\tau_2,\cdots,\tau_m)$ of minimum weight that transforms $\pi$ into $e$, 
such that $m=\kdist(\pi,e)$. In other words, each transposition of $\tau$ eliminates one inversion when transforming $\pi$ into $e$. Hence, $\pi\tau_1$ has one less inversion than $\pi$. 
As a result,
\begin{align}\label{eqnalgrec}
\dist_\varphi(\pi,e) & =\min_{i:\pi(i)>\pi(i+1)}\left(\varphi_{(i\ i+1)}+\dist_\varphi(\pi(i\ i+1),e)\right)
\end{align}
Note that the minimum is taken over all positions $i$ for which $i$ and $i+1$ form an inversion, i.e., for which $\pi(i)>\pi(i+1)$. 
Suppose that computing the weighted Kendall distance between the identity and a permutation $\pi$, with $\kdist(\pi,e)=d$, can be performed in time $T_d$. From \eqref{eqnalgrec}, we have 
\begin{equation*}
T_d = a \, n + d\, T_{d-1},\quad \text{for }d\ge 2,
\end{equation*}
and $T_1 = a\, n,$ for some constant $a$. By letting $U_d = T_d/(a\, n\, d!)$, we obtain $U_d = U_{d-1}+\frac1{d!},d\ge2,$ and $U_1 = 1$. Hence, $U_d = \sum_{i=1}^d\frac1{i!}$. It can then be shown that $d! \, U_d=\lfloor d!(e-1)\rfloor,$ and thus $T_d = a\,n\lfloor d!\,(e-1)\rfloor=O(nd!)$.

The expression \eqref{eqnalgrec} can also be used to find the distances of all $\pi\in \Sn$ from the identity by first finding the distances of permutations $\pi\in \Sn$ with $\kdist(\pi,e)=1$, then finding the distances of permutations $\pi\in \Sn$ with $\kdist(\pi,e)=2$, and so on\footnote{Note that such an algorithm requires that the set of permutations at a given Kendall $\tau$ distance from the identity be known.}. Unfortunately, the average Kendall $\tau$ distance between a randomly chosen permutation and the identity is $\binom{n}{2}/2$ (see the derivation of this known and a related novel result regarding the weighted Kendall distance in the Appendix), which limits the applicability of this algorithm 
to uniformly and at random chosen votes limited.
 

\subsection*{Aggregation with Weighted Kendall Distances: Examples}

In order to explain the potential of the weighted Kendall distance in addressing the top-vs-bottom aggregation issue, in what follows, we present a number of examples that illustrate how the choice of the weight function influences the final form of the aggregate. We focus on \emph{decreasing weight functions} and compare our results to those obtained using the classical Kendall $\tau$ distance.

Throughout the remainder of the paper, we refer to a solution of the aggregation problem using the Kendall $\tau$ as a \emph{Kemeny aggregate}. All the aggregation results are obtained via exhaustive search since the examples are small and only used for illustrative purposes. Aggregation is, in general, a hard problem and we postpone the analysis of the complexity of  computing aggregate rankings, and aggregate approximation algorithms, until the next section.

\begin{example}\label{example5x5}
Consider the set of rankings listed in $\Sigma$, where each row represents a ranking (vote),
\begin{equation*}
\Sigma=\left(\begin{array}{ccccc}
4 & 1 & 2 & 5 & 3\\\hline
4 & 2 & 1 & 3 & 5\\\hline
1 & 4 & 5 & 2 & 3\\\hline
2 & 3 & 1 & 5 & 4\\\hline
5 & 3 & 1 & 2 & 4
\end{array}\right).
\end{equation*}
The Kemeny optimal solution for this set of rankings is $(1,4,2,5,3)$. Note that despite the fact that candidate $4$ was ranked twice at the top of the list -- more than any other candidate -- it is ranked only
second in the aggregate. This may be attributed to the fact that $4$ was ranked last by two voters.

Consider next the weight function $\wtfn^{(2/3)}$ with $\nwt{i}{i+1}^{(2/3)}=(2/3)^{i-1},i\in[4]$. The optimum aggregate ranking for this weight equals $(4,1,2,5,3).$ The optimum aggregate based on $\wtfn^{(2/3)}$ 
puts $4$ before $1$, similar to what a plurality vote would do\footnote{In \emph{plurality votes}, the candidate with the most first-place rankings is declared the winner.}. The reason behind this swap is that  $\wtfn^{(2/3)}$ emphasizes strong showings of a candidate and downplays its weak showings, 
since weak showings have a smaller effect on the distance as the weight function is decreasing. In other words, higher ranks are more important than lower ranks when determining the position of a candidate. 
\end{example}

\begin{example}\label{example5x4}
Consider the set of rankings listed in $\Sigma$,
\[\Sigma=\left(\begin{array}{cccc}
     1  &   4  &   2  &   3\\\hline
     1  &   4  &   3  &   2\\\hline
     2  &   3  &   1  &   4\\\hline
     4  &   2  &   3  &   1\\\hline
     3  &   2  &   4  &   1
\end{array}\right).\]
The Kemeny optimal solution is $(4,2,3,1)$. Note that although the majority of voters prefer $1$ to $4$, $1$ is ranked last and $4$ is ranked first. More precisely, 
we observe that according to the pairwise majority test, $1$ beats $4$ but loses to $2$ and $3$. On the other hand, $4$ is preferred to both $2$ and $3$ but, as mentioned before, loses to $1$. 
Problems like this do not arise due to a weakness of Kemeny's approach, but due to the inherent ``rational intractability'' of rank aggregation. 
As stated by Arrow\cite{arrow1963social}, for \emph{any} ``reasonable'' rank aggregation method, there exists a set of votes such that the aggregated ranking prefers one candidate to another while the majority of voters prefer the later to the former.

Let us now focus on a weighted Kendall distance with weight function $\nwt{i}{i+1}=(2/3)^{i-1}, i=1,2,3$. The optimal aggregate ranking for this distance equals $(1,4,2,3)$. Again, 
we see a candidate with both strong showings and weak showings, candidate $1$, beat a candidate with a rather average performance. Note that in this solution as well, 
there exist candidates for which the opinion of the majority is ignored: $1$ is placed before $2$ and $3$, while according to the pairwise majority opinion it loses to both.
\end{example}

\begin{example} Consider the set of rankings listed in $\Sigma$,
\[\Sigma=\left(
\begin{array}{ccccccc}
5 & 4 & 1 & 3 & 2 \\\hline
1 & 5 & 4 & 2 & 3 \\\hline
4 & 3 & 5 & 1 & 2 \\\hline
1 & 3 & 4 & 5 & 2 \\\hline
4 & 2 & 5 & 3 & 1 \\\hline
1 & 2 & 5 & 3 & 4 \\\hline
2 & 4 & 3 & 5 & 1
\end{array}\right).\]
With the weight function $\nwt{i}{i+1}=(2/3)^{i-1},i\in[4]$, the aggregate equals $(4, 1, 5, 2, 3)$. The winner is $4$, while the plurality rule winner is $1$ as it appears three times on the top. 
Next, we increase the rate of decay of the weight function and let $\nwt{i}{i+1}=(1/3)^{i-1},i\in[4]$. The solution now is $(1, 4, 2, 5, 3)$, and the winner is candidate $1$, the same as the plurality rule winner. 
This result is a consequence of the fact that the plurality winner is the aggregate based on the weighted Kendall distance with weight function $\wtfn^{(p)}$, 
\[\nwt{i}{i+1}^{(p)}=\begin{cases}
1, & \qquad i=1,\\
0, & \qquad\mbox{else}.
\end{cases}
 \]

The Kemeny aggregate is $(4, 5, 1, 2, 3)$.
\end{example}

A shortcoming of distance-based rank aggregation is that sometimes the solution is not unique, and that the possible solutions differ widely. The following example describes one such scenario.
\begin{example}\label{example7x5} 
Suppose that the votes are given by $\Sigma$,
\[
\Sigma=\left(\begin{array}{ccc}
1 & 2 & 3\\\hline
1 & 2 & 3\\\hline
3 & 2 & 1\\\hline
2 & 1 & 3
\end{array}\right).
\]
Here, the permutations $(1,2,3),(2,1,3)$ are the Kemeny optimal solutions, with cumulative distance 4 from $\Sigma$. When the Kemeny optimal solution is not unique, it may be possible to obtain a unique solution by using a non-uniform weight function. In this example, it can be shown that for any non-uniform weight function $\wtfn$ with $\nwt12>\nwt23$, the solution is unique, namely, $(1,2,3)$.

A similar situation occurs if the last vote is changed to $(2, 3, 1)$. In that case, the permutations $(1,2,3)$, $(2,1,3)$, and $(2,3,1)$ are the Kemeny optimal solutions with cumulative distance 5 from $\Sigma$. Again, for any non-uniform weight function $\wtfn$ with $\nwt12>\nwt23$ the solution is unique and equal to $(1,2,3)$.
\end{example}

To summarize, the above examples illustrate how a proper choice for the weighted Kendall distance insures that top ranks are emphasized and how one may over-rule a moderate number of low rankings using
a specialized distance formula. One may argue that certain generalizations of Borda's method, involving non-uniform gaps between ranking scores, may achieve similar goals. This is not the case, as will be illustrated in what follows. 

One major difference between generalized Borda and weighted Kendall distances is in the already mentioned \emph{majority criteria} ~\cite{hodge2005mathematics}, which states that the candidate ranked first by the majority of voters has to be ranked first in the aggregate\footnote{ Note that a candidate ranked first by the majority is a Condorcet candidate. It is desirable that an aggregation rule satisfy the majority criterion and indeed most do, including the Condorcet method, the plurality rule, the single transferable vote method, and the Coombs method.}. Borda's aggregate with an arbitrary score assignments does not have this property, 
while aggregates obtained via weighted Kendall distances with decreasing weights (not identically equal to zero) have this property.

We first show that the Borda method with a fixed, but otherwise arbitrary set of scores may not satisfy the majority criterion. 
We prove this claim for $n=3$. A similar argument can be used to establish this claim for $n>3$. 

Suppose, for simplicity, that the number $m$ of voters is odd and that, for each vote, a score $s_{i}$ is assigned to a candidate with rank $i$, $i=1,2,3.$ Here, we assume that 
$s_{1}>s_{2}>s_{3} \ge0$. Suppose also that $(m+1)/2$ of the votes equal $(a,b,c)$ and that $(m-1)/2$ of the votes equal $(b,c,a)$. 
Let the total Borda scores for candidates $a$ and $b$ be denoted by $S$ and $S'$, respectively. We have 
\begin{align*}
S & =\frac{m+1}{2}s_{1}+\frac{m-1}{2}s_{3},\\
S' & =\frac{m+1}{2}s_{2}+\frac{m-1}{2}s_{1},
\end{align*}
and thus $S-S'=s_{1}-m\left(\frac{s_{2}-s_{3}}{2}\right)-\frac{s_{2}+s_{3}}{2}$. If $m>\frac{2s_{1}-(s_{2}+s_{3})}{s_{2}-s_{3}}$, then $S-S'<0$ and Borda's method ranks $b$ higher than $a$. 
As a result, candidates $a$, ranked highest by more than half of the voters, is not ranked first according to Borda's rule. This is not the case with weighted Kendall distances, as shown below.

\begin{prop}
An aggregate ranking obtained using the weighted Kendall distance with a decreasing weight function not identically equal to zero satisfies the majority criterion.
\end{prop}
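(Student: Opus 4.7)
The plan is by contradiction. Suppose an optimal aggregate $\pi^*$ satisfies $\pi^*(1)\ne a$, and let $k=(\pi^*)^{-1}(a)\ge 2$ with $b_l=\pi^*(l)$ for $l=1,\dots,k-1$. Define $\pi'$ to be the cyclic shift of the first $k$ positions of $\pi^*$ that places $a$ at position $1$: $\pi'(1)=a$, $\pi'(l+1)=b_l$ for $l\in[k-1]$, and $\pi'(j)=\pi^*(j)$ for $j>k$. Partition the voters into $\mathcal{A}=\{i:\sigma_i^{-1}(a)=1\}$ and $\mathcal{B}=\{i:\sigma_i^{-1}(a)>1\}$; the majority hypothesis gives $|\mathcal{A}|>m/2>|\mathcal{B}|$. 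I will show $\sum_i\dist_\wtfn(\pi^*,\sigma_i)>\sum_i\dist_\wtfn(\pi',\sigma_i)$, contradicting optimality.

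For voters $i\in\mathcal{B}$, the plan is to invoke the triangle inequality, available since $\dist_\wtfn$ is a pseudo-metric. Applying the $k-1$ adjacent transpositions $(j\,j+1)$ for $j=k-1,k-2,\dots,1$, each moving $a$ up one position, transforms $\pi^*$ into $\pi'$, so $\dist_\wtfn(\pi^*,\pi')\le \sum_{j=1}^{k-1}\wtf{(j\,j+1)}$, and hence
\[
\dist_\wtfn(\pi^*,\sigma_i)-\dist_\wtfn(\pi',\sigma_i)\ge -\sum_{j=1}^{k-1}\wtf{(j\,j+1)}.
\]

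For voters $i\in\mathcal{A}$, the plan is to compute the difference exactly using the closed-form formula for the weighted Kendall distance with decreasing weights, writing $\dist_\wtfn(\pi,\sigma)=\sum_x C_x(\pi,\sigma)$ where $C_x$ denotes the contribution of element $x$'s walk through $\ell_x$. Only $C_a$ and $C_{b_l}$ for $l\in[k-1]$ can differ between $\pi^*$ and $\pi'$, because every other element preserves its position and its ordering relative to all remaining elements. For $x=a$: $\sigma_i^{-1}(a)=1=(\pi')^{-1}(a)$ gives $C_a(\pi',\sigma_i)=0$, and applying the formula to $\pi^*$ with $|I_a(\pi^*,\sigma_i)|=k-1$ and $\ell_a=k$ gives $C_a(\pi^*,\sigma_i)=\tfrac12\sum_{j=1}^{k-1}\wtf{(j\,j+1)}$. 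For $x=b_l$: the only pair whose ordering flips between $\pi^*$ and $\pi'$ is $\{a,b_l\}$, and since $a<_{\sigma_i}b_l$ we have $|I_{b_l}(\pi^*,\sigma_i)|=|I_{b_l}(\pi',\sigma_i)|+1$; combined with $(\pi')^{-1}(b_l)=(\pi^*)^{-1}(b_l)+1$, the $\ell$-values coincide, $\ell_{b_l}(\pi^*,\sigma_i)=\ell_{b_l}(\pi',\sigma_i)$, and the walk-sums telescope to $C_{b_l}(\pi^*,\sigma_i)-C_{b_l}(\pi',\sigma_i)=\tfrac12\wtf{(l\,l+1)}$. Adding these pieces yields
\[
\dist_\wtfn(\pi^*,\sigma_i)-\dist_\wtfn(\pi',\sigma_i)=\sum_{j=1}^{k-1}\wtf{(j\,j+1)}.
\]

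Summing over all voters, $\sum_i[\dist_\wtfn(\pi^*,\sigma_i)-\dist_\wtfn(\pi',\sigma_i)]\ge (|\mathcal{A}|-|\mathcal{B}|)\sum_{j=1}^{k-1}\wtf{(j\,j+1)}$. Because $\wtfn$ is decreasing and not identically zero, its largest value $\wtf{(1\,2)}$ is strictly positive, so $\sum_{j=1}^{k-1}\wtf{(j\,j+1)}\ge\wtf{(1\,2)}>0$; together with $|\mathcal{A}|>|\mathcal{B}|$, this delivers the desired strict inequality and the contradiction. The main obstacle is the $\mathcal{A}$-case calculation: it relies on the explicit monotone-weight formula and the coincidence of $\ell_{b_l}$-values between $\pi^*$ and $\pi'$ that makes the walk-sums telescope cleanly. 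A triangle inequality bound alone would only yield $\ge -\sum_{j=1}^{k-1}\wtf{(j\,j+1)}$ in $\mathcal{A}$ as well, which would not overcome the $\mathcal{B}$ losses, so invoking the closed-form formula is essential.
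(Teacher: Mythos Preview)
Your proof is correct and follows essentially the same architecture as the paper's: the same comparison permutation $\pi'$ (cyclically moving $a$ to the top), the same partition of voters into those ranking $a$ first ($\mathcal{A}=C$) and the rest ($\mathcal{B}=D$), the triangle inequality for the $\mathcal{B}$-side, and the strict positivity of $\wtf{(1\,2)}$ to finish. The only difference is in how you establish the exact gain on the $\mathcal{A}$-side: you compute $\dist_\wtfn(\pi^*,\sigma_i)-\dist_\wtfn(\pi',\sigma_i)=\sum_{j=1}^{k-1}\wtf{(j\,j+1)}$ term-by-term via the closed-form monotone formula and the telescoping of the $\ell_{b_l}$-walks, whereas the paper obtains the equivalent identity $\dist_\wtfn(\pi^*,\sigma)=\dist_\wtfn(\pi^*,\pi')+\dist_\wtfn(\pi',\sigma)$ in one line by observing that Algorithm~\ref{alg:FindTauMonotone}, when transforming $\pi^*$ into any $\sigma\in C$, first moves $a$ to the top and hence passes through $\pi'$. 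The paper's route is shorter and avoids the bookkeeping with $\ell$-values, but yours is equally valid and makes the quantitative gain $\sum_{j=1}^{k-1}\wtf{(j\,j+1)}$ explicit.
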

\begin{proof}
Suppose that the weight function is $\varphi,$ and let $w_i=\varphi_{(i\ i+1)}$.  Since $w$ is decreasing and not identically equal to zero, we have $w_{1}>0$. 
Let $a_1$ be a candidate that is ranked first by a majority of voters. Partition the set of votes into two sets, $C$ and $D$, where $C$ is the set of votes that rank $a_1$ first and $D$ is the set of votes that do not. Furthermore, denote the aggregate ranking by $\pi$.

Suppose that $a_{1}$ is not ranked first in $\pi$ and that $\pi$ is of the form \[(a_{2},\cdots,a_{i},a_{1},a_{i+1},\cdots,a_{n}), \] for some $i\ge2$. Let $\pi'=(a_1,a_2,\cdots,a_n)$. We show that
\[
\sum_{j=1}^{m}\dist_\varphi\left(\pi,\sigma_{j}\right)>
\sum_{j=1}^{m}\dist_\varphi\left(\pi',\sigma_{j}\right)
\]
which contradicts the optimality of $\pi$. Hence, $a_{1}$ must be ranked first in $\pi$.

For $\sigma\in C$, we have 
\begin{equation}\label{EqPi}
\dist_\varphi(\pi,\sigma) = \dist_\varphi(\pi,\pi')+\dist_\varphi(\pi',\sigma).
\end{equation}
To see the validity of this claim, note that if $\pi$ is to be transformed to $\sigma$ via Algorithm \ref{alg:FindTauMonotone}, it is first transformed to $\pi'$ by moving $a_1$ to the first position. 

For $\sigma\in D$, we have
\begin{equation}\label{EqPiPi}
\dist_\varphi(\pi',\sigma) \le \dist_\varphi(\pi',\pi)+\dist_\varphi(\pi,\sigma),
\end{equation}
which follows from the triangle inequality.

To complete the proof, we write
\begin{align*}
\sum_{j=1}^{m}\dist_{\varphi}(\pi,\sigma_{j}) & =\sum_{\sigma\in C}\dist_{\varphi}(\pi,\sigma)+\sum_{\sigma\in D}\dist_{\varphi}(\pi,\sigma)\\
 & \ge\sum_{\sigma\in C}\dist_{\varphi}(\pi',\sigma)+|C|\dist_{\varphi}(\pi,\pi')\\
&\quad+\sum_{\sigma\in C}\dist_{\varphi}(\pi',\sigma)-|D|\dist_{\varphi}(\pi,\pi')\\
 & =\sum_{j=1}^{m}\dist_{\varphi}(\pi',\sigma)+\left(|C|-|D|\right)\dist_{\varphi}(\pi,\pi')\\
 & >\sum_{j=1}^{m}\dist_{\varphi}(\pi',\sigma)
\end{align*}
where the first inequality follows from \eqref{EqPi} and \eqref{EqPiPi}, and the second inequality follows from the facts that $|C|>|D|$ and that $\dist_{\varphi}(\pi,\pi')\ge w_1>0$.

\ignore{ Let $A$ denote the set of rankings that placed $a_{1}$ ahead of $a_{i}$ and let $B$ be the set of rankings that placed some $a_{i},$ $i \neq 1$, ahead of $a_{1}$. Note that $C\subseteq A.$ For $\sigma\in A$, let $F_{\sigma}$ denote the set of candidates that appear after $a_{1}$ in $\sigma$ and for $\sigma\in B$, let $F_{\sigma}$ denote the set of elements that appear after $a_{i}$ in $\sigma$. Furthermore, let $F$ denote $\{a_{i+1},\cdots,a_n\}$, the set of candidates that appear after $a_1$ in $\pi$. We start by proving that \begin{equation} \sum_{j=1}^{m}\dist_\varphi(\pi',\sigma_{j})\le\sum_{j=1}^{m}\dist_\varphi(\pi,\sigma_{j}), \label{eq:onestep} \end{equation} where $\pi'=\pi(i-1\ i)=(a_{2},\cdots,a_{i-1},a_{1},a_{i},\cdots,a_{n}).$ From \eqref{eq:whichweight}, we have \begin{align} \sum_{j=1}^{m}\dist_\varphi(\pi,\sigma_{j})-\sum_{j=1}^{m}\dist_\varphi(\pi',\sigma_{j}) & =\sum_{\sigma\in A}w_{n-1-|F\cap F_{\sigma}|}-\sum_{\sigma\in B}w_{n-1-|F\cap F_{\sigma}|}\nonumber \\ & \ge\sum_{\sigma\in C}w_{n-1-|F\cap F_{\sigma}|}-\sum_{\sigma\in B}w_{n-1-|F\cap F_{\sigma}|}\nonumber \\ & =\sum_{\sigma\in C}w_{n-1-|F|}-\sum_{\sigma\in B}w_{n-1-|F\cap F_{\sigma}|}\nonumber \\ & \ge0\label{eq:12} \end{align} where the first inequality is a consequence of $C \subset A$ and the non-negativity of the weight function, while the last step follows from the facts that $w$ is decreasing, $|F\cap F_{\sigma}|\le\left|F\right|$, and $|C|>|B|$. This proves \eqref{eq:onestep}. Repeating the same argument yields \[ \sum_{j=1}^{m}\dist_\varphi\left(\left(a_{2},a_{1},a_{3},\cdots,a_{n}\right),\sigma_{j}\right)\le\sum_{j=1}^{m}\dist_\varphi\left((a_{2},\cdots,a_{i},a_{1},a_{i+1},\cdots,a_{n}),\sigma_{j}\right). \] To complete the proof of the proposition, we show that \[ \sum_{j=1}^{m}\dist_\varphi\left(\left(a_{1},a_{2},a_{3},\cdots,a_{n}\right),\sigma_{j}\right) <\sum_{j=1}^{m} \dist_\varphi\left(\left(a_{2},a_{1},a_{3},\cdots,a_{n}\right),\sigma_{j}\right). \] Let $\mu=\left(a_{2},a_{1},a_{3},\cdots,a_{n}\right)$ and $\mu'=\left(a_{1},a_{2},a_{3},\cdots,a_{n}\right)$. Additionally, let $D$ denote the set of votes that ranked $a_{2}$ higher than $a_{1}$, let $G$ denote the set $\{a_{3},a_{4},\cdots,a_{n}\}$, and let $G_{\sigma},\sigma\in D,$ denote the set of candidates ranked lower than $a_{2}$ in $\sigma$. Similar to \eqref{eq:12}, we have \begin{align*} \sum_{j=1}^{m}\dist_\varphi(\mu,\sigma_{j})- \sum_{j=1}^{m}\dist_\varphi(\mu',\sigma_{j}) & \ge\sum_{\sigma\in C}w_{n-1-|G|}-\sum_{\sigma\in D}w_{n-1-|G\cap G_{\sigma}|}\\ & =|C|w_{1}-\sum_{\sigma\in D}w_{n-1-|G\cap G_{\sigma}|}\\ & >0 \end{align*} where the last step follows from the facts that $|C|>|D|$, $w_{1}\ge w_{n-1-|G\cap G_{\sigma}|}$, and $w_{1}>0$. This completes the proof.}

\end{proof}

\section{Weighted Transposition Distance} \label{sec:weighted-cayley}

The definition of the Kendall $\tau$ distance and the weighted Kendall distance is based on transforming one permutation into another using \emph{adjacent} transpositions. If, instead, all transpositions are allowed --
including non-adjacent transpositions -- a more general distance measure, termed \emph{weighted transposition distance} is obtained. This distance measure, as will be demonstrated below, represents a generalization of the weighted Kendall distance suitable for addressing similarity issues among candidates. It is worth pointing out that the weighted transposition distance is not based on the axiomatic approach described in the previous section.

\begin{defn}
Consider a function $\varphi$ that assigns to each transpositions $\theta,$ a non-negative weight $\varphi_\theta$. The weight of a sequence of transpositions is defined as the sum of the weights of its transpositions. That is, the weight of the sequence $\tau=(\tau_1,\cdots,\tau_{|\tau|})$ of transpositions equals 
\[ \wt(\tau)=\sum_{i=1}^{|\tau|}\wtf{\tau_{i}}. \]
For simplicity, we also denote the weighted transposition distance between two permutations $\pi,\sigma\in\Sn$, with weight function $\varphi$,  by $\dist_\varphi$. This distance equals the minimum weight of a sequence $\tau = (\tau_1,\cdots,\tau_{|\tau|})$ of transpositions such that $\sigma=\pi \tau_1\cdots \tau_{|\tau|}$. As before, we refer to such a sequence of transpositions as a transform converting $\pi$ into $\sigma$ and let $A_T(\pi,\sigma)$ denote the set of transforms that convert $\pi$ into $\sigma$.
\end{defn}

 With this notation at hand, the weighted transposition distance between $\pi$ and $\sigma$ may be written as
\[ \dist_\varphi(\pi,\sigma)=\min_{\tau\in A_T(\pi,\sigma)}\wt(\tau). \]


\ignore{
We proceed by showing how Kemeny's axiomatic approach may be extended further to introduce a number of new distances metrics useful in different ranking scenarios.

The first distance applies when only certain subsets of transpositions are allowed -- for example, when only elements of a class may be reordered to obtain an aggregated ranking. \begin{defn} Consider a subset $G=\{g_{1},\cdots,g_{m}\}$ of $\mathbb{S}_{n}$ such that $g\in G$ implies that $g^{-1}\in G$. Rankings $\pi$ and $\sigma$ are $G-$adjacent if there exist $g\in G$ such that $\pi=\sigma g$.

A \emph{$G-$transformation} of $\pi$ into $\sigma$ is a vector $(g_{1},\cdots,g_{k}),k\in\mathbb{N}$, with $g_{i}\in G,i\in[k]$, such that $\sigma=\pi g_{1}g_{2}\cdots g_{k}$ where $k$ is the length of the $G-$transformation. The set of $G-$transformations of $\pi$ into $\sigma$ is denoted by $A_{G}(\pi,\sigma)$. A \emph{minimum $G-$transformation} is a $G-$transformation of minimum length.

Furthermore, $\omega$ is said to be \emph{$G-$between} $\pi$ and $\sigma$ if there exists a minimum $G-$transformation $(g_{1},\cdots,g_{k})$ of $\pi$ into $\sigma$ such that $\omega=\sigma g_{1}\cdots g_{j}$ for some $j\in[k]$. \end{defn}

\begin{defn} \label{def:uniform-distance}For a subset $G$ of $\mathbb{S}_{n}$, a function $\dist:\mathbb{S}_{n}\to[0,\infty]$ is said to be a \emph{uniform $G-$distance} if\end{defn} \begin{enumerate} \item $\dist$ is a metric. \item $\dist$ is left-invariant. \item For any $\pi,\sigma\in\mathbb{S}_{n}$, if $\omega$ is between $\pi$ and $\sigma$, then $\dist(\pi,\sigma)=\dist(\pi,\omega)+\dist(\omega,\sigma)$. \item The smallest positive distance is one. \end{enumerate} Note that Lemma \ref{lem:on-the-line} also applies to $G-$uniform distances.

As an example, consider $G$ to be the set \[ \mathbb{T}_{n}=\{(ab):a,b\in[n],a\neq b\} \] of all transpositions. \begin{thm} The uniform $\mathbb{T}_{n}-$distance exists and is unique. Namely, \[ \dist_{T}(\pi,\sigma)=\min\left\{ \left|\tau\right|:\tau\in A_{\mathbb{T}_{n}}(\pi,\sigma)\right\} , \] (commonly known as Cayley's distance) is the unique $\mathbb{T}_{n}-$distance.\end{thm} \begin{IEEEproof} It is easy to verify that $\dist_{T}$ satisfies Definition \ref{def:uniform-distance}. It thus suffices to show that for a uniform $\mathbb{T}_{n}-$distance $\dist$, we have \begin{equation} \dist(\pi,\sigma)=\min\left\{ \left|\tau\right|:\tau\in A_{\mathbb{T}_{n}}(\pi,\sigma)\right\} .\label{eq:min-trans} \end{equation} Let \[ \tau^{\star}=\arg\min\left\{ \left|\tau\right|:\tau\in A_{\mathbb{T}_{n}}(\pi,\sigma)\right\} . \] Definition \ref{def:uniform-distance}.3 implies that, for $\pi,\sigma\in\mathbb{S}_{n},$ \begin{equation} \dist(\pi,\sigma)=\sum_{i=1}^{\left|\tau^{\star}\right|}\dist(\tau_{i}^{\star},e).\label{eq:sum-trans} \end{equation} Below, we show all transpositions have the same distance from the identity, say $d$. Thus (\ref{eq:sum-trans}) becomes \[ \dist(\pi,\sigma)=\left|\tau^{\star}\right|d \] and since the minimum positive distance is one, we shall have \[ \dist(\pi,\sigma)=\left|\tau^{\star}\right| \] which proves (\ref{eq:min-trans}).

To prove the claim that all transpositions have the same distance from the identity, we show that for a uniform $\mathbb{T}_{n}-$distance $\dist$, \[ \dist\left((a\,b),e\right)=\dist\left((c\,d),e\right) \] for all transpositions $(a\,b)$ and $(c\,d)$. This is obvious for $\{a,b\}=\{c,d\}$. We prove the case that $a,b,c,$ and $d$ are distinct. A similar argument applies when $\{a,b\}$ and $\{c,d\}$ have one element in common. The argument parallels that of Lemma \ref{lem:two-ways}.

Let $\pi=(a\,b\,c\,d)$, $\omega=(a\,d)\pi$, $\eta=(c\,d)\omega$ and note that $e=(b\,c)\eta$. Since $\pi\dash\omega\dash\eta\dash e$, by Definition \ref{def:uniform-distance}.3 and left-invariance of $\dist$, we have \begin{equation} \dist(\pi,e)=\dist\left((a\,d),e\right)+\dist\left((c\,d),e\right)+\dist\left((b\,c),e\right).\label{eq:T-1} \end{equation} Similarly, let $\alpha=(b\,c)\pi$, $\beta=(a\,b)\alpha$, and note that $e=(a\,d)\beta$. This shows \begin{equation} \dist(\pi,e)=\dist\left((b\,c),e\right)+\dist\left((a\,b),e\right)+\dist\left((a\,d),e\right).\label{eq:T-2} \end{equation} Equating the right-hand-sids of (\ref{eq:T-1}) and (\ref{eq:T-2}) yields $\dist\left((a\,b),e\right)=\dist\left((c\,d),e\right)$. \end{IEEEproof}

\begin{defn} \label{def:weighted-distance}For a subset $G$ of $\mathbb{S}_{n}$, a function $\dist:\mathbb{S}_{n}\to[0,\infty]$ is said to be a \emph{weighted $G-$distance} if\end{defn} \begin{enumerate} \item $\dist$ is a pseudo-metric. \item $\dist$ is left-invariant. \item For any $\pi,\sigma\in\mathbb{S}_{n}$, if $\pi$ and $\sigma$ are not $G-$adjacent, then there exists an $\omega$ distinct from both, such that $\dist(\pi,\sigma)=\dist(\pi,\omega)+\dist(\omega,\sigma)$. \end{enumerate} It is immediate that a distance $\dist$ is a weighted $G-$distance if and only if \[ \dist(\pi,\sigma)=\min_{\tau\in A_{G}(\pi,\sigma)}\sum_{i=1}^{|\tau|}\dist\left(\tau_{i},e\right) \] for $\pi,\sigma\in\mathbb{S}_{n}$.

One way to arrive at a weighted $G-$distance is to assign ``weights'' or ``costs'' $\wtf g$ to elements $g$ of $G$ and define $\dist_{\varphi}(\pi,\sigma)$, the distance arising from the weight function $\varphi$, as the minimum weight of a $G-$transformation that transforms $\pi$ into $\sigma$. That is, for $\pi,\sigma\in\mathbb{S}_{n}$, 
\[ \dist_\varphi(\pi,\sigma)=\min_{\tau\in A_{G}(\pi,\sigma)}\wt(\tau) \] 
where \[ \wt(\tau)=\sum_{i=1}^{|\tau|}\wtf{\tau_{i}}. \] For example, assigning weights to transpositions leads to the weighted $\mathbb{T}_{n}-$distance (or weighted transposition distance).}


The Kendall $\tau$ distance and the weighted Kendall distance may be viewed as special cases of the weighted transposition distance: to obtain 
the Kendall $\tau$ distance, let \[ \wtf{\theta}=\begin{cases} 1, & \qquad\theta=(i\,i+1),i=1,2,\cdots,n-1\\ \infty, & \qquad\mbox{else}, \end{cases} \] and to obtain the weighted Kendall distance, let \[ \wtf{\theta}=\begin{cases} w_{i}, & \qquad\theta=(i\,i+1),i=1,2,\cdots,n-1\\ \infty, & \qquad\mbox{else}, \end{cases} \] for a non-negative weight function $w$.

When applied to the inverse of rankings, the weighted transposition distance can be successfully used to model similarities of objects in rankings. In such a setting, permutations that differ by a transposition of two similar 
items are at a smaller distance than permutations that differ by a transposition of two dissimilar items, as demonstrated in the next subsection.

\ignore{\begin{rem} Note that similar to the generalization of Kemeny's axioms one may arrive at a generalization of Borda's score-based rule. A step in this direction was proposed by Young \cite{Young:1975fk}, who showed that a set of axioms leads to a generalization of Borda's rule wherein the $k$th preference of each ranking receives a score $s_{k}$, not necessarily equal to $k$. This generalization of Borda's rule may also be used to address the problem of top versus bottom in rankings. In particular, one may assign Borda scores $s_{k}$ to the $k$th preference with \begin{align*} s_{k}=\sum_{l=1}^{k-1}\phi_{l} & , \end{align*} where $\phi_{k}$ is decreasing in $l$. For example, swapping two elements at the top of the ranking of a given voter changes the scores of each of the two corresponding objects by $\phi_{1}$ while a similar swap at the bottom of the ranking, changes the scores by $\phi_{n-1}$. Since $\phi_{1}\ge\phi_{n-1}$, changes to the top of the list, in general, have a more significant affect on the aggregate ranking. \end{rem}}

\subsection{Weighted Transposition Distance as Similarity Distance: Examples}


We illustrate the concept of distance measures taking into account similarities via the following example, already mentioned in the Motivation section. 
Suppose that four cities: Melbourne, Sydney, Helsinki, and Vienna are ranked based on certain criteria as 
\[\pi = (\text{Helsinki, Sydney, Vienna, Melbourne}),\]
 and according to another set of criteria as 
\[\sigma = (\text{Melbourne, Vienna, Helsinki, Sydney}).\]

The distance between $\pi$ and $\sigma$ is defined as follows. We assign weights to swapping the positions of cities in the rankings, e.g., suppose that the weight of swapping cities in the same country is $1$, on the same continent $2$, and $3$ otherwise. The \emph{similarity distance} between $\pi$ and $\sigma$ is the minimum weight of a sequence of transpositions that convert $\pi$ into $\sigma$. By inspection, or by methods discussed in the next subsection, one can see that the similarity distance between $\pi$ and $\sigma$ equals 6. One of the sequences of transpositions of weight 6 is as follows: first swap Helsinki and Sydney with weight 3, then swap Melbourne and Sydney with weight 1, and finally swap Vienna and Helsinki with weight 2.

To express the similarity distance formally, we write the rankings as permutations, representing Melbourne by $1$, Sydney by $2$, Vienna by $3$, and Helsinki by $4$. 
This is equivalent to assuming that the identity ranking is \[e=(\text{Melbourne, Sydney, Vienna, Helsinki}).\]
We then have $\pi = (4,2,3,1)$ and $\sigma = (1,4,2,3)$. The weight function $\wtfn$ equals 
\begin{eqnarray*}
\nwt12= 1, & \nwt13= 3, & \nwt14 = 3 \\
\nwt23= 3, & \nwt24= 3, & \nwt34 = 2.
\end{eqnarray*}
It should be clear from the context that the indices in the weight function refer to the candidates, and not to their positions.
%
%

\begin{example}
Consider the votes listed in $\Sigma$ below, 
\[ \Sigma=\left(\begin{array}{cccc}
1 & 2 & 3 & 4\\\hline
3 & 2 & 1 & 4\\\hline
4 & 1 & 3 & 2
\end{array}\right). \]
Suppose that even numbers and odd numbers represent different types of candidates in a way that the following weight function is appropriate
\[
\nwt{i}j =\begin{cases}
1, & \qquad \text{if}\, i,j\mbox{ are both odd or both even},\\
2, & \qquad\mbox{else.}
\end{cases}\]
Note that the votes are ``diverse'' in the sense that they alternate between odd and even numbers. On the other hand, the Kemeny aggregate is $(1, 3, 2, 4),$ which puts all odd numbers ahead of all even numbers. 
Aggregation using the similarity distance described above yields $(1,2,3,4),$ a solution which may be considered ``diverse'' since the even and odd numbers alternate in the solution. The reason behind this result is 
that the Kemeny optimal solution is oblivious to the identity of the candidates and their (dis)similarities, while aggregation based on similarity distances take such information into account.
\end{example}

\ignore{
\subsubsection*{Similarity Distance and Assignment Aggregation}
Next, we provide another application of similarity distance in a context that is close related to rank aggregation. Consider a committee with $m$ members that is tasked with filling $n$ jobs with $n$ candidates. Suppose that each committee member provides a full assignment of candidates to jobs. The task is to aggregate the assignments given by individual committee members into one assignment. If a measure of similarity between the candidates is available, one can use similarity distance to aggregate the assignment.  Let $\wtfn$ be a weight function reflecting similarities between the candidates, i.e., more similar a pair of candidates, less the weight of swapping them. Then we choose 
\[\arg \min_{\pi\in\S _n} \sum \dist'_\wtfn (\pi,\sigma_i),\]
where $\sigma_i$ are the assignments given by the committee members, as the aggregate assignment.}

\begin{example}
Consider the votes listed in $\Sigma$ below, 
\[ \Sigma=\left(\begin{array}{cccccc}
1 & 2 & 3 & 4 & 5 & 6\\\hline
1 & 2 & 3 & 4 & 5 & 6\\\hline
3 & 6 & 5 & 2 & 1 & 4\\\hline
3 & 6 & 5 & 2 & 1 & 4\\\hline
5 & 4 & 1 & 6 & 3 &2.
\end{array}\right). \]
Suppose that the weight function is the same as the one used in the previous example. In this case, neither the Kemeny aggregates nor the weighted transposition distance aggregates are unique.
More precisely, Kendall $\tau$ gives four solutions:
\[ \left(\begin{array}{cccccc}
3 & 5 & 1 & 6 & 2 & 4\\\hline
3 & 5 & 1 & 2 & 4 & 6\\\hline
1 & 3 & 5 & 2 & 4 &6.\\ \hline
1 & 3 & 5 & 6 & 2 &4.
\end{array}\right), \]
while there exist nine optimal aggregates under the weighted transposition distance of the previous example, of total distance $10$:
\[ \left(\begin{array}{cccccc}
5 & 6 & 3 & 4 & 1 & 2\\\hline
5 & 4 & 3 & 2 & 1 & 6\\\hline
5 & 2 & 3 & 6 & 1 & 4\\\hline
3 & 4 & 1 & 2 & 5 & 6\\\hline
3 & 6 & 1 & 4 & 5 & 2\\\hline
3 & 2 & 1 & 6 & 5 & 4\\\hline
1 & 4 & 5 & 2 & 3 & 6 \\ \hline
1 & 2 & 5 & 6 & 3 & 4 \\ \hline
1 & 6 & 5 & 4 & 3 & 2.
\end{array}\right). \]

Note that \emph{none of} the Kemeny optimal aggregates have good diversity properties: the top half of the rankings consists exclusively of odd numbers. On the other hand,
the optimal weighted transposition rankings \emph{all contain} exactly one even element among the top-three candidates. Such diversity properties are hard to prove theoretically.

\end{example}

\subsection{Computing the Weighted Transposition Distance}

In this subsection, we describe how to compute or approximate the weighted transposition distance $\dist_{\varphi},$ given the weight function $\varphi$. An in-depth analysis of a special class of weight functions
and their corresponding transposition distance may be found in the authors' recent work~\cite{farnoud2012sorting}. 

We find the following definitions useful in our subsequent derivations. For a given weight function $\wtfn$, we let $\mathcal K_\wtfn$ denote a complete undirected weighted graph with vertex set $[n]$, 
where the weight of each edge $(i,j)$ equals the weight of the transposition $(i\,j)$, $\wtf{(i\,j)}$. For a subgraph $H$ of $\mathcal{K}_{\varphi},$ with edge set $E_{H},$ we define the weight of $H$ as 
\[ \wt(H)=\sum_{(i,j)\in E_{H}}\wtf{(i\,j)}, \] 
that is, the sum of the weights of edges of $H$. For $\pi,\sigma\in\S_n$,  we define $D_\wtfn(\pi,\sigma)$ as\footnote{Note that this definition is consistent with the definition of a specialization of this function, given in Proposition 16.}
\begin{equation*}
D_\wtfn(\pi,\sigma) = \sum_{i=1}^n \wt\left(p^*_\wtfn(\pi^{-1}(i),\sigma^{-1}(i))\right),
\end{equation*}
where $p^*_\wtfn(a,b)$ denotes the minimum weight path from $a$ to $b$ in $\mathcal{K}_{\wtfn}$.

It is easy to verify that $D_\wtfn$ is a pseudo-metric and that it is left-invariant,
\begin{equation*}
D_\wtfn(\eta\pi,\eta\sigma) = D_\wtfn(\pi,\sigma),\qquad \pi,\sigma,\eta\in\Sn.
\end{equation*}


A weight function $\wtfn$ is a \emph{metric weight function} if it satisfies the triangle inequality in the sense that
\begin{equation}\label{eqwftri}
\nwt{a}b \le \nwt{a}c + \nwt{b}c,\qquad a,b,c\in[n].
\end{equation}

\begin{lem}\label{lemTranspositionUB} For a non-negative weight function $\wtfn$ and a transposition $(a\,b)\in\Sn$, 
\begin{equation*}
\dist_\wtfn((a\,b),e) \le 2 \wt (p^*_\wtfn(a,b)).
\end{equation*} 
If $\wtfn$ is a metric weight function, the bound may be improved to
\begin{equation*}
\dist_\wtfn((a\,b),e) \le \wt (p^*_\wtfn(a,b)).
\end{equation*}
\end{lem}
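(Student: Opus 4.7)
The plan is to construct, in each case, an explicit transform whose total weight matches the claimed upper bound and then invoke the defining minimum in $\dist_\wtfn$. By left-invariance of $\dist_\wtfn$, it suffices to factor $(a\,b)$ as a product of transpositions of small total weight; any such factorization $(a\,b)=\tau_1\cdots\tau_m$ furnishes an element of $A_T((a\,b),e)$ whose weight bounds $\dist_\wtfn((a\,b),e)$ from above.

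For the first inequality, let $a=v_0,v_1,\ldots,v_k=b$ be the consecutive vertices along the minimum-weight path $p^*_\wtfn(a,b)$ in $\mathcal{K}_\wtfn$. The core step is to prove the ``zig-zag'' identity
\[
(a\,b)=(v_0\,v_1)(v_1\,v_2)\cdots(v_{k-1}\,v_k)(v_{k-2}\,v_{k-1})\cdots(v_1\,v_2)(v_0\,v_1).
\]
I would argue this by induction on $k$, with base case $(x\,z)=(x\,y)(y\,z)(x\,y)$ checked by tracking the images of $x$, $y$, and $z$ under the composition. The inductive step writes $(v_0\,v_k)=(v_0\,v_1)(v_1\,v_k)(v_0\,v_1)$ (an instance of the base case) and substitutes the induction hypothesis for $(v_1\,v_k)$. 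The resulting transform has length $2k-1$ and uses every edge of $p^*_\wtfn(a,b)$ exactly twice except the last one, giving total weight
\[
2\sum_{i=0}^{k-2}\varphi_{(v_i\,v_{i+1})}+\varphi_{(v_{k-1}\,v_k)}\;\le\;2\,\wt(p^*_\wtfn(a,b)),
\]
which yields the first bound.

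For the metric case, the argument is immediate: the singleton transform consisting of just $(a\,b)$ lies in $A_T((a\,b),e)$ with weight $\varphi_{(a\,b)}$, so $\dist_\wtfn((a\,b),e)\le\varphi_{(a\,b)}$. Iterating the triangle inequality \eqref{eqwftri} along the path $p^*_\wtfn(a,b)$ gives $\varphi_{(a\,b)}\le\sum_{i=0}^{k-1}\varphi_{(v_i\,v_{i+1})}=\wt(p^*_\wtfn(a,b))$, which combined with the previous inequality establishes the improved bound.

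The only non-routine step is verifying the zig-zag identity, but this is a standard symmetric-group fact and the induction is short; I do not anticipate any substantive obstacle beyond the weight-counting bookkeeping.
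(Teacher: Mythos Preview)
Your proposal is correct and follows essentially the same approach as the paper: both use the zig-zag decomposition $(a\,b)=(v_0\,v_1)\cdots(v_{k-1}\,v_k)\cdots(v_0\,v_1)$ to get weight $2\wt(p^*_\wtfn(a,b))-\varphi_{(v_{k-1}\,v_k)}\le 2\wt(p^*_\wtfn(a,b))$, and for the metric case both iterate the triangle inequality along the path to bound $\varphi_{(a\,b)}\le\wt(p^*_\wtfn(a,b))$. Your version adds an explicit induction for the zig-zag identity, which the paper simply states; otherwise the arguments are identical.
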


\begin{proof}
Consider a path $p=(v_{0}=a,v_{1},\cdots,v_{|p|}=b)$ from $a$ to $b$ in $\mathcal{K}_{\varphi}$. We have
\begin{align*} 
(a\,b)=&\left(v_{0}\,v_{1}\right) \left(v_{1}\,v_{2}\right) \cdots \left(v_{|p|-2}\,v_{|p|-1}\right) \\
&\quad\left(v_{|p|-1}\,v_{|p|}\right) \left(v_{|p|-2}\,v_{|p|-1}\right) \cdots \left(v_{1}\,v_{2}\right) \left(v_{0}\,v_{1}\right).
\end{align*} 
From the left-invariance of $\dist_\varphi$,
\begin{align*} 
\dist_\varphi((a\,b),e)
& =2\sum_{i=1}^{|p|-1}\varphi_{(v_{i-1}\,v_{i})}-\varphi_{(v_{|p|-1}\,v_{|p|})}\\
& =2\wt(p)-\varphi_{(v_{|p|-1}\,v_{|p|})}\\ 
& \le2\wt(p). 
\end{align*}
Since $p$ is an arbitrary path from $a$ to $b$ in $\mathcal{K}_{\varphi}$, we have 
\begin{equation}
\dist_{\varphi}\left((a\,b),e\right) \le\wt(\tau)\nonumber \le2\wt(p^{*}_\wtfn(a,b)),\label{eq:dist-trans}
\end{equation} 
and this proves the first claim.

Now, assume that $\wtfn$ is a metric weight function and consider the path $p=(v_{0},v_{1},\cdots,v_{|p|})$ from $v_0=a$ to $v_{|p|}=b$. From \eqref{eqwftri},
\begin{equation*}
\begin{split}
\nwt{a}b &= \nwt{v_0}{v_{|p|}} \le \nwt{v_0}{v_1} + \nwt{v_1}{v_{|p|}}\\
& \le \nwt{v_0}{v_1} + \nwt{v_1}{v_{2}}+\nwt{v_2}{v_{|p|}}\\
& \le \cdots\\
& \le \sum_{i=1}^{|p|-1}\nwt{v_i}{v_{i+1}}\\
& = \wt(p).
\end{split}
\end{equation*}
Since $p$ is arbitrary, we have \[\dist_\wtfn((a\,b),e)\le\nwt{a}b\le\wt(p^*_\wtfn(a,b)).\] 
This completes the proof of the lemma.
\end{proof}

While Lemma~\ref{lemTranspositionUB} suffices to prove all our subsequent results, we remark that one may prove a slightly stronger result, presented in our companion paper~\cite{farnoud2012sorting},
 \[\dist_\varphi((a\ b),e)=\min_{p=(v_0=a,v_1,\cdots,v_{|p|}=b)} \left(2\wt(p)-\max_{0\le i<|p|} \varphi_{(v_i\,v_{i+1})}\right).\]
The proof is based on significantly more involved techniques that are beyond the scope of this paper.

\begin{lem}\label{lemPermutationUB} For a weight function $\wtfn$ and  for $\pi,\sigma\in\S_n$,
\begin{equation*}
\dist_\wtfn(\pi,\sigma) \le 2 D_\wtfn(\pi,\sigma).
\end{equation*} 
If $\wtfn$ is a metric weight function, the bound may be improved to
\begin{equation*}
\dist_\wtfn(\pi,\sigma) \le D_\wtfn(\pi,\sigma).
\end{equation*} \end{lem}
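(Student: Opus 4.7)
My plan is to exhibit an explicit transform from $\pi$ to $\sigma$ whose weight matches the claimed bound. Since $\sigma = \pi\mu$ for $\mu = \pi^{-1}\sigma$, and since $\dist_\wtfn$ is left-invariant, it suffices to factor $\mu$ as a product of transpositions of total weight at most $2D_\wtfn(\pi,\sigma)$ (or $D_\wtfn(\pi,\sigma)$ in the metric case). I would build this factorization cycle by cycle in the disjoint-cycle decomposition of $\mu$.

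Fix a cycle $(b_1,b_2,\ldots,b_\ell)$ of $\mu$, meaning $\mu(b_j)=b_{j+1}$ with indices taken modulo $\ell$. This cycle corresponds to a closed trajectory of elements: since $\sigma(b_j)=\pi(\mu(b_j))=\pi(b_{j+1})$, the element $\pi(b_{j+1})$ occupies position $b_{j+1}$ in $\pi$ and position $b_j$ in $\sigma$. Setting $d(a,b)=\wt\bigl(p^*_\wtfn(a,b)\bigr)$ for brevity, the cycle's contribution to $D_\wtfn(\pi,\sigma)$ is therefore $\sum_{j=1}^{\ell} d(b_j,b_{j+1})$. I would then invoke the standard factorization
\[(b_1,b_2,\ldots,b_\ell) = (b_1\,b_2)(b_2\,b_3)\cdots(b_{\ell-1}\,b_\ell),\]
which writes the cycle as a product of $\ell-1$ consecutive transpositions. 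Lemma \ref{lemTranspositionUB} provides, for each $(b_j\,b_{j+1})$, a transform from the identity of weight at most $2 d(b_j,b_{j+1})$ (or $d(b_j,b_{j+1})$ in the metric case); concatenating these realizes the entire cycle with total weight at most $\sum_{j=1}^{\ell-1} 2d(b_j,b_{j+1})$, which is at most twice the cycle's contribution to $D_\wtfn(\pi,\sigma)$ since the wrap-around term $d(b_\ell,b_1)$ is simply discarded.

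Juxtaposing these cycle-by-cycle factorizations over all disjoint cycles of $\mu$ yields a product of transpositions equal to $\mu$ of total weight at most $2D_\wtfn(\pi,\sigma)$, which by the initial reduction gives $\dist_\wtfn(\pi,\sigma)\le 2D_\wtfn(\pi,\sigma)$; the metric case follows identically, with every $2$ removed. The only subtlety is the index bookkeeping that pairs each shortest-path summand in $D_\wtfn(\pi,\sigma)$ with the appropriate edge of a cycle of $\mu$; once that identification is in place, the remainder is a mechanical assembly of Lemma \ref{lemTranspositionUB} with the combinatorial cycle factorization, and I do not expect any substantive obstacle.
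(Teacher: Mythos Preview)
Your proposal is correct and follows essentially the same route as the paper: reduce by left-invariance to transforming a single permutation to the identity, split into disjoint cycles, factor each cycle $(a_1\,a_2\,\cdots\,a_\ell)$ as $(a_1\,a_2)(a_2\,a_3)\cdots(a_{\ell-1}\,a_\ell)$, and bound each transposition via Lemma~\ref{lemTranspositionUB}, dropping the wrap-around term $d(a_\ell,a_1)$ to stay under the cycle's contribution to $D_\wtfn$. The only cosmetic difference is that you work with $\mu=\pi^{-1}\sigma$ and do the position bookkeeping explicitly, whereas the paper reduces immediately to $\dist_\wtfn(\pi,e)$ versus $D_\wtfn(\pi,e)$.
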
 
\begin{proof}
To prove the first claim, it suffices to show that $\dist_\wtfn(\pi,e) \le 2 D_\wtfn(\pi,e)$ since both $\dist_\wtfn$ and $D_\wtfn$ are left-invariant.

Let $\{c_1,c_2,\cdots,c_k\}$ be the cycle decomposition of $\pi$. We have, from the triangle inequality and the left-invariance property of $\dist_\varphi$, that
\[\dist_\wtfn(\pi,e) \le \sum_{i=1}^k \dist_\wtfn(c_i,e),\] and, from the definition of $D_\wtfn$, that
\[D_\wtfn(\pi,e) = \sum_{i=1}^k D_\wtfn(c_i,e).\] Hence, we only need to prove that
\begin{equation}
\dist_\wtfn(c,e) \le 2 D_\wtfn(c,e)
\end{equation}
for a single cycle $c=(a_1\,a_2\,\cdots\, a_{|c|}),$ where $|c|$ is the length of $c$.

Since $c$ may be written as
\begin{equation*}
c = (a_1\,a_2)(a_2\,a_3)\cdots(a_{|c|-1}\,a_{|c|}),
\end{equation*}
we have
\begin{equation*}
\begin{split}
\dist_\wtfn(c,e) &\le \sum_{i=1}^{|c|-1}\nwt{a_i}{a_{i+1}}\\
&\stackrel{\rm (a)}{\le} \sum_{i=1}^{|c|-1}2 \wt(p^*_\wtfn(a_i,a_{i+1}))\\
&\le \sum_{i=1}^{|c|}2 \wt(p^*_\wtfn(a_i,c(a_{i})))\\
&\le 2 D_\wtfn(c,e)
\end{split}
\end{equation*}
where ${\rm (a)}$ follows from Lemma~\ref{lemTranspositionUB}.

The proof of the second claim is similar.
\end{proof}

The next lemma provides a lower bound for $\dist_\varphi$ in terms of $D_\wtfn$.

\begin{lem}\label{lem:LB}
For $\pi,\sigma\in\S_n$, 
\[\dist_\wtfn(\pi,\sigma)\ge\frac12 D_\wtfn(\pi,\sigma).\]
\end{lem}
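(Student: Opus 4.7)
The plan is to prove this by tracking, for each value $v \in [n]$, the trajectory of positions it occupies as the transform is applied step by step, and to relate the total weight of any transform to a sum of walk-weights in $\mathcal{K}_\wtfn$.

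First I would reduce to the case $\sigma = e$ by left-invariance of both $\dist_\wtfn$ and $D_\wtfn$, and fix an arbitrary transform $\tau = (\tau_1,\ldots,\tau_m) \in A_T(\pi,e)$ with $\pi\tau_1\cdots\tau_m = e$. For each value $v \in [n]$, I would define the trajectory $p^v = (p_0^v,p_1^v,\ldots,p_m^v)$ of positions occupied by $v$, where $p_t^v = (\pi\tau_1\cdots\tau_t)^{-1}(v)$. Using the identity $(\pi\tau_i)^{-1}(v) = \tau_i(\pi^{-1}(v))$, each step $t$ acts on the trajectory of $v$ as follows: if $\tau_t = (a\,b)$ and $p_{t-1}^v \in \{a,b\}$, then $v$ moves along the edge $(a,b)$ in $\mathcal{K}_\wtfn$; otherwise $v$ stays. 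In particular, $p^v$ is a walk in $\mathcal{K}_\wtfn$ from $\pi^{-1}(v)$ to $e^{-1}(v)$.

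The key observation is a double-counting identity: each transposition $\tau_t = (a\,b)$ moves exactly two values (those occupying positions $a$ and $b$ just before step $t$), and contributes weight $\wtf{\tau_t}$ to the trajectory of each. Hence, letting $\wt(p^v)$ denote the sum of edge weights traversed along $p^v$,
\begin{equation*}
2\,\wt(\tau) = 2\sum_{t=1}^m \wtf{\tau_t} = \sum_{v=1}^n \wt(p^v).
\end{equation*}
Since each $p^v$ is a walk in $\mathcal{K}_\wtfn$ from $\pi^{-1}(v)$ to $\sigma^{-1}(v) = e^{-1}(v)$, and all edge weights are non-negative, its weight is at least that of a minimum-weight path between the same endpoints: $\wt(p^v) \ge \wt(p^*_\wtfn(\pi^{-1}(v),\sigma^{-1}(v)))$. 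Summing over $v$ and dividing by two yields $\wt(\tau) \ge \tfrac12 D_\wtfn(\pi,\sigma)$, and minimizing over $\tau \in A_T(\pi,\sigma)$ gives the claimed bound.

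I do not anticipate a serious obstacle; the only point requiring care is the bookkeeping that each transposition contributes to exactly two trajectories, which is a direct consequence of the position-swap interpretation of $\pi\tau_t$. The rest is the (easy) fact that non-negative edge weights make shortest walks no cheaper than shortest paths.
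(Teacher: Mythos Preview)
Your proof is correct and rests on the same core observation as the paper's: each transposition $\tau_j=(a_j\,b_j)$ displaces exactly two elements, which is what produces the factor $\tfrac12$. The paper packages this as a telescoping bound $D_\wtfn(\pi_{j-1},e)-D_\wtfn(\pi_j,e)\le 2\wt(p^*_\wtfn(a_j,b_j))\le 2\wtf{(a_j\,b_j)}$ summed over the steps of a minimum-weight transform, whereas you arrive at the same conclusion via the exact double-counting identity $2\,\wt(\tau)=\sum_v \wt(p^v)$ followed by $\wt(p^v)\ge \wt\bigl(p^*_\wtfn(\pi^{-1}(v),v)\bigr)$; the two arguments are equivalent reorganizations of one another.
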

\begin{proof}
Since $\dist_\wtfn$ and $D_\wtfn$ are both left-invariant, it suffices to show that
\[\dist_\wtfn(\pi,e)\ge\frac12 D_\wtfn(\pi,e).\]
Let $\left(\tau_{1},\cdots,\tau_{l}\right),$ with $\tau_{j}=(a_{j}\,b_{j}),$ be a minimum weight transform of $\pi$ into $e$, so that $\dist_\varphi(\pi,e)=\sum_{i=1}^l\varphi_{(a_j\,b_j)}$. Furthermore, define $\pi_{j}=\pi\tau_{1}\cdots\tau_{j}$, $0\le j\le l$. Then, 
\begin{align} 
D_\wtfn\left(\pi_{j-1},e\right)-D_\wtfn\left(\pi_{j},e\right) &
\le2\wt\left(p^{*}_\wtfn\left(a_{j},b_{j}\right)\right)\nonumber \\ &
\le2\varphi_{(a_j\,b_j)},
\label{eq:Dj} 
\end{align} 
where the first inequality follows from considering the maximum possible decrease of the value of $D_\wtfn$ induced by one transposition, while the second inequality follows from the definition of $p^*_\varphi$. By summing up the terms in (\ref{eq:Dj}) over $0\le j\le l$, and thus obtaining a telescoping inequality of the form $D_\wtfn(\pi,e) \le2\sum_{i=1}^l\varphi_{(a_j\,b_j)} =2\dist_{\varphi}(\pi,e),$ we arrive at the desired result.
\end{proof}

From the previous two lemmas, we have the following theorem.
\begin{thm}\label{thm:UBLB}
For $\pi,\sigma\in\S_n$ and an arbitrary non-negative weight function $\wtfn$, we have
\[\frac12 D_\wtfn(\pi,\sigma)
\le\dist_\wtfn(\pi,\sigma)\le 2 D_\wtfn(\pi,\sigma). \]
In addition, if $\wtfn$ is a metric weight function, then
\[\frac12 D_\wtfn(\pi,\sigma)
\le\dist_\wtfn(\pi,\sigma)\le D_\wtfn(\pi,\sigma). \]
\end{thm}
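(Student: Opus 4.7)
The plan is to assemble the theorem directly from the three lemmas that immediately precede it. Lemma \ref{lemPermutationUB} gives the upper bound $\dist_\wtfn(\pi,\sigma) \le 2 D_\wtfn(\pi,\sigma)$ in the general case and the sharper upper bound $\dist_\wtfn(\pi,\sigma) \le D_\wtfn(\pi,\sigma)$ when $\wtfn$ is a metric weight function; Lemma \ref{lem:LB} gives the lower bound $\dist_\wtfn(\pi,\sigma) \ge \tfrac12 D_\wtfn(\pi,\sigma)$ for any non-negative weight function $\wtfn$. Concatenating these in each of the two hypothesis regimes yields the two displayed chains in the theorem, so no new argument is required.

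First I would write a single short paragraph handling the general (non-metric) case: quote the upper bound from Lemma \ref{lemPermutationUB} and the lower bound from Lemma \ref{lem:LB}, and combine them into $\tfrac12 D_\wtfn(\pi,\sigma) \le \dist_\wtfn(\pi,\sigma) \le 2 D_\wtfn(\pi,\sigma)$. Then I would write an essentially identical second paragraph for the metric case, noting that the lower bound is unchanged (Lemma \ref{lem:LB} does not require the metric property) and that only the upper bound is improved by a factor of two via the second statement of Lemma \ref{lemPermutationUB}.

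There is no real obstacle here, since the work has already been done in establishing Lemmas \ref{lemTranspositionUB}, \ref{lemPermutationUB}, and \ref{lem:LB}. The only mild subtlety worth a sentence of commentary is that the metric hypothesis is used only on the upper bound side: the lower bound $\tfrac12 D_\wtfn$ holds uniformly and does not tighten to $D_\wtfn$ under the triangle inequality assumption, so the gap between lower and upper bounds shrinks from a factor of $4$ to a factor of $2$ when $\wtfn$ is a metric. This also flags for the reader that the resulting $2$-approximation algorithm (Proposition \ref{prop:wk2approximation}) for the metric case follows by taking $D_\wtfn$, computable in polynomial time via shortest paths in $\mathcal{K}_\wtfn$, as a surrogate for $\dist_\wtfn$.
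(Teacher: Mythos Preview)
Your proposal is correct and matches the paper's own treatment exactly: the paper does not give a separate proof but simply writes ``From the previous two lemmas, we have the following theorem,'' referring to Lemma~\ref{lemPermutationUB} for the upper bounds and Lemma~\ref{lem:LB} for the lower bound. Your additional commentary about where the metric hypothesis enters is accurate and helpful, though not strictly needed for the proof itself.
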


For special classes of the weight function $\wtfn$, the bounds in Theorem~\ref{thm:UBLB} may be improved further, as described in the next subsection.

\subsection{Computing the Transposition Distance for Metric-Tree Weights}
\begin{figure} 
\begin{center}
\subfloat[]{\includegraphics[width=3.45in]{./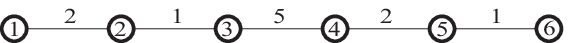}}

\subfloat[]{\includegraphics[width=3.45in]{./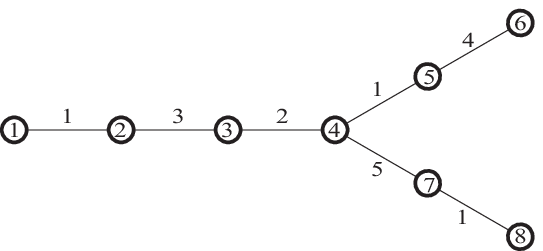}}

\end{center}
\caption{A defining path (a), which may correspond to a metric-path weight function or an extended-path weight function, and a defining tree (b), which may correspond to a metric-tree weight function or an extended-tree weight function.} 
\label{FigProjdefpath}

\end{figure}

We start with the following definitions.
\begin{defn} 
A weight function $\wtfn$ is a \emph{metric-tree weight function} if there exists a weighted tree $\Theta$ over the vertex set $[n]$ such that for distinct $a,b\in[n]$, $\nwt{a}b$ is the sum of the weights of the edges on the unique path from $a$ to $b$ in $\Theta$. If $\Theta$ is a path, i.e., if $\Theta$ is a linear graph, then $\wtfn$ is called a \emph{metric-path weight function}. 

Furthermore, a weight function $\wtfn'$ is an \emph{extended-tree weight function}  if there exists a weighted tree $\Theta$ over the vertex set $[n]$ such that for distinct $a,b\in[n]$, $\nwt{a}b'$ equals the the weight of the edge $(a,b)$ whenever $a$ and $b$ are adjacent, and $\nwt{a}b'=\infty$ otherwise. If $\Theta$ is a path, then $\wtfn'$ is called an \emph{extended-path} weight function.
\end{defn} 

Note that the Kendall weight function, defined in the previous section, is an extended path weight function.

The tree or path corresponding to a weight function in the above definitions is termed the \emph{defining tree or path} of the weight function. An example is given in \figurename~\ref{FigProjdefpath}, where the numbers 
indexing the edges denote their weights. 

For a metric-tree weight function $\varphi$ with defining tree $\Theta$, and for $a,b\in[n]$, the weight of the path $p^*_\varphi(a,b)$ equals the weight of the \emph{unique} 
path from $a$ to $b$ in $\Theta$. This weight, in turn, equals $\varphi_{(a\,b)}$. As a result, for metric-tree weights, $p^*_\varphi(a,b)$ equals the weight of the path from $a$ to $b$ in $\Theta$. 

Furthermore, from Lemma~\ref{lem:LB}, we have $\dist_\varphi((a\ b),e)\ge \frac12D_\varphi((a\ b),e)=\wt(p^*_\varphi(a,b))=\varphi_{(a\,b)}$. Since we also have $\dist_\varphi((a\ b),e)\le\varphi_{(a\,b)}$, it follows that 
\begin{equation}\label{eqMetricTrans}
\dist_\varphi((a\ b),e)=\varphi_{(a\,b)}.
\end{equation}

The next lemma shows that the \emph{exact distance} for metric-path weight functions can be computed in polynomial time. 
\begin{figure}
\begin{center}
\subfloat[]{\includegraphics[width=3.45in]{./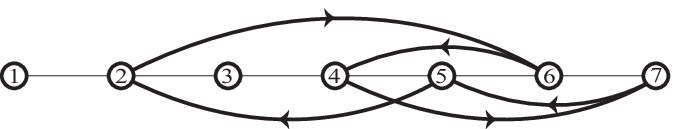}}

\subfloat[]{\includegraphics[width=3.45in]{./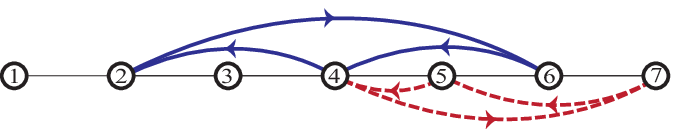}}

\end{center}
\caption{The cycle $(2\ 6\ 4\ 7\ 5)$ in Figure (a) is decomposed into two cycles, $(2\ 6\ 4)$ and $(4\ 7\ 5)$, depicted in Figure (b). Note that $ (2\ 6\ 4\ 7\ 5) = (2\ 6\ 4)(4\ 7\ 5)$.}
\label{FigDecomposePath}
\end{figure}

\begin{lem}\label{lem:metric-path}
For a metric-path weight function $\wtfn$ and for $\pi,\sigma\in\S_n$,
\[\dist_\wtfn(\pi,\sigma) = \frac12 D_\wtfn(\pi,\sigma).\]
\end{lem}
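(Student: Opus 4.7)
The plan is to combine the lower bound $\dist_\wtfn(\pi,\sigma)\ge\frac12 D_\wtfn(\pi,\sigma)$ from Lemma~\ref{lem:LB} with a matching upper bound $\dist_\wtfn(\pi,\sigma)\le\frac12 D_\wtfn(\pi,\sigma)$. By left-invariance of both $\dist_\wtfn$ and $D_\wtfn$, it suffices to treat the case $\sigma=e$. Decomposing $\pi$ into disjoint cycles $\pi=c_1c_2\cdots c_m$, the triangle inequality and left-invariance yield $\dist_\wtfn(\pi,e)\le\sum_i\dist_\wtfn(c_i,e)$, while $D_\wtfn(\pi,e)=\sum_i D_\wtfn(c_i,e)$ because the sum defining $D_\wtfn$ splits along the supports of the cycles. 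Hence the problem reduces to proving $\dist_\wtfn(c,e)\le\frac12 D_\wtfn(c,e)$ for a single cycle $c=(a_1\,a_2\,\cdots\,a_k)$, which I do by induction on $k$.

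For $k=2$, writing $c=(a\,b)$ gives $D_\wtfn(c,e)=2\wtf{(a\,b)}$ while \eqref{eqMetricTrans} yields $\dist_\wtfn(c,e)=\wtf{(a\,b)}$, confirming equality. For the inductive step with $k\ge 3$, I exploit the decomposition $(a_1\,a_2\,\cdots\,a_k)=(a_1\,a_2\,\cdots\,a_j)(a_j\,a_{j+1}\,\cdots\,a_k)$, as illustrated in Figure~\ref{FigDecomposePath}, into two shorter cycles $c_1,c_2$ sharing the pivot $a_j$. A direct evaluation gives
\[D_\wtfn(c_1,e)+D_\wtfn(c_2,e)=D_\wtfn(c,e)+\bigl(\wtf{(a_1\,a_j)}+\wtf{(a_j\,a_k)}-\wtf{(a_1\,a_k)}\bigr).\]
Because $\wtfn$ is a metric-path weight function, the bracketed term vanishes precisely when $a_j$ lies on the $\Theta$-path from $a_1$ to $a_k$. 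When this happens, combining $\dist_\wtfn(c,e)\le\dist_\wtfn(c_1,e)+\dist_\wtfn(c_2,e)$ with the inductive hypothesis (which applies since both $c_1$ and $c_2$ have length strictly less than $k$) delivers the bound $\dist_\wtfn(c,e)\le\frac12 D_\wtfn(c,e)$. Since I am free to cyclically rotate $c$ before splitting, it remains to verify that, for $k\ge 3$, some cyclic edge $a_{l-1}\to a_l$ of $c$ admits a third cycle element $a_j$ on the $\Theta$-path between its endpoints.

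I expect this combinatorial claim to be the main hurdle. The approach is a pigeonhole argument: if no such cyclic edge existed, then after sorting the cycle elements by $\Theta$-position as $b_1<b_2<\cdots<b_k$, every cyclic edge $\{a_{l-1},a_l\}$ of $c$ would have to coincide with an adjacent pair $\{b_i,b_{i+1}\}$ in this linear order. The cycle $c$ contributes $k$ such edges drawn from only $k-1$ possible adjacent pairs, so some pair $\{b_i,b_{i+1}\}$ must be used twice; but this would place both $b_i\to b_{i+1}$ and $b_{i+1}\to b_i$ in $c$, making $(b_i\,b_{i+1})$ a $2$-cycle factor of $c$ and contradicting that $c$ is a single cycle of length $k\ge 3$. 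This yields the required pivot $a_j$ and completes the induction.
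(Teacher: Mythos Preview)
Your proof is correct and follows essentially the same strategy as the paper's: reduce by left-invariance and cycle decomposition to a single cycle, induct on cycle length, and split the cycle as $(a_1\,\cdots\,a_j)(a_j\,\cdots\,a_k)$ with a pivot $a_j$ lying on the $\Theta$-path between the wrap-around endpoints so that the extra term $\wtf{(a_1\,a_j)}+\wtf{(a_j\,a_k)}-\wtf{(a_1\,a_k)}$ vanishes. The only difference is how the pivot is located: the paper rotates so that $a_1$ is the smallest cycle element along $\Theta$ and takes $a_t$ to be the second smallest, which automatically lies between $a_1$ and any other $a_l$; your pigeonhole argument on the $k$ cyclic edges versus the $k-1$ consecutive $\Theta$-pairs reaches the same conclusion more indirectly, though it has the minor advantage of avoiding the edge case $t=k$ that the paper's choice must implicitly handle.
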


\begin{proof}
From Lemma~\ref{lem:LB}, we have that $\dist_\wtfn(\pi,\sigma) \ge \frac12 D_\wtfn(\pi,\sigma)$. It remains to show that $ \dist_\wtfn(\pi,\sigma) \le \frac12 D_\wtfn(\pi,\sigma)$. Since $\dist_\wtfn$ and $D_\wtfn$ are both left-invariant, it suffices to prove that $\dist_\wtfn(\pi,e) \le \frac12 D_\wtfn(\pi,e)$. 

Let $\{c_1,c_2,\cdots,c_k\}$ be the cycle decomposition of $\pi$. Similar to the proof of Lemma~\ref{lemPermutationUB}, it suffices to show that
\begin{equation}\dist_\wtfn(c,e) \le \frac12D_\wtfn(c,e)\label{eq:induchyp}\end{equation}
for any cycle $c=(a_1\,a_2\,\cdots\, a_{|c|})$. 

The proof is by induction. For $|c|=2$, \eqref{eq:induchyp} holds since, from \eqref{eqMetricTrans}, we have 
\[\dist_\wtfn((a_1\,a_2),e) = \varphi_{(a\,b)}=\wt\left(p^*_\wtfn(a_1,a_2)\right) = \frac12D_\wtfn((a_1\,a_2),e).\]
Assume that \eqref{eq:induchyp} holds for $2\le|c|<l$. We show that it also holds for $|c|=l$. We use Figure~\ref{FigDecomposePath} for illustrative purposes. 
In all figures in this section, undirected edges describe the defining tree, while directed edges describe the cycle at hand.

Without loss of generality, assume that the defining path of $\varphi$, $\Theta$, equals $(1,2,\cdots,n)$. Furthermore, assume that $a_1 = \min\{i:i\in c\};$ if this were not the case, we could 
rewrite $c$ by cyclically shifting its elements. 
Let $a_t=\min\{i:i\in c,i\neq a_1\}$ be the ``closest'' element to $a_1$ in $\Theta$ (that is, the closest element to $a_1$ in the cycle $c$).  
For example, in Figure~\ref{FigDecomposePath}, one has $c=(2\,6\,4\,7\,5)$, $a_1 = 2$ and $a_t = 4$. We have 
\begin{equation*}
\begin{split}
c & = (a_1\,a_2\,\cdots\,a_t\,\cdots\,a_l)\\
& = (a_1\ a_2\ \cdots\ a_t)(a_t\ a_{t+1}\ \cdots\ a_l)
\end{split}
\end{equation*}
and thus
\begin{equation*}
\begin{split}
\dist_\wtfn(c,e) &\le \dist_\wtfn((a_1\ a_2\ \cdots\ a_t),e)
+\dist_\wtfn((a_t\ a_{t+1}\ \cdots\ a_l),e)\\
&\le \frac12\sum_{i=1}^{t-1}\wt\left(p^*_\wtfn(a_i,a_{i+1})\right)+\wt\left(p^*_\wtfn(a_t,a_1)\right)\\
&\quad+\frac12\sum_{i=t}^{l-1}\wt\left(p^*_\wtfn(a_i,a_{i+1})\right)+\wt\left(p^*_\wtfn(a_l,a_t)\right)\\
& = \frac12\sum_{i=1}^l \wt\left(p^*_\wtfn(a_i,c(a_i))\right)\\
& = \frac12D_\wtfn(c,e).
\end{split}
\end{equation*}
where the second inequality follows from the induction hypothesis, while the first equality follows from the fact that $\wt\left(p^*_\wtfn(a_t,a_1)\right)+\wt\left(p^*_\wtfn(a_l,a_t)\right) =\wt\left(p^*_\wtfn(a_l,a_1)\right)$.
\end{proof}

The approach described in the proof of Lemma~\ref{lem:metric-path} can also be applied to the problem of finding the weighted transposition distance when the weight function is a metric-tree weight function and 
\emph{each of the cycles of the permutation} consist of elements that lie on some path in the defining tree. 
An example of such a permutation and such a weight function is shown in Figure~\ref{FigManyCyclesOnPath}. Note that in this example, a cycle consisting of elements $3,5,7$ would not correspond to a path.

In such a case, for each cycle $c$ of $\pi$ we can use the path in the defining tree that contains the elements of $c$ to show that 
\begin{equation}\label{EqExD}
\dist_\varphi(c,e)=\frac12D_\wtfn(c,e).
\end{equation} For example the cycle $(1\ 4\ 6)$ lies on the path $(1,2,3,4,5,6)$ and the cycle $(5\ 8)$ lies on the path $(5,4,7,8)$. 
Since \eqref{EqExD} holds for each cycle $c$ of $\pi$, we have \[\dist_\varphi(\pi,e)=\frac12D_\wtfn(\pi,e).\]
\begin{figure}
\begin{center}
\includegraphics[width=3in]{./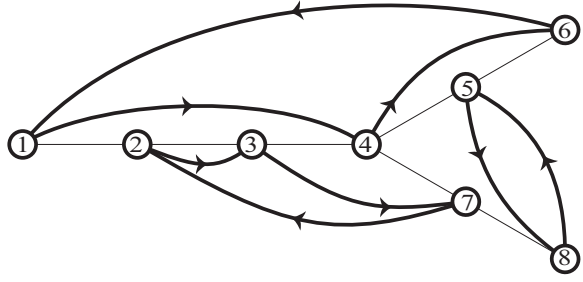}
\caption{If each of the cycles of a permutation lie on a path, the method of Lemma~\ref{lem:metric-path} can be used to find the weighted transposition distance.} 
\label{FigManyCyclesOnPath}
\end{center}
\end{figure}

A similar scenario in which essentially the same argument as that of the proof of Lemma~\ref{lem:metric-path} can be used is as follows: 
the defining tree has one vertex with degree three and no vertices with degree larger than three (i.e., a tree with a Y shape), and for each cycle of $\pi$, there are two branches of the tree that do not contain two consecutive elements of $c$. 
It can then be shown that each such cycle can be decomposed into cycles that lie on paths in the defining tree, reducing the problem to the previously described one. An example is shown in Figure~\ref{FigDecomposeTree}.
\begin{figure}
\begin{center}
\subfloat[]{\includegraphics[width=3in]{./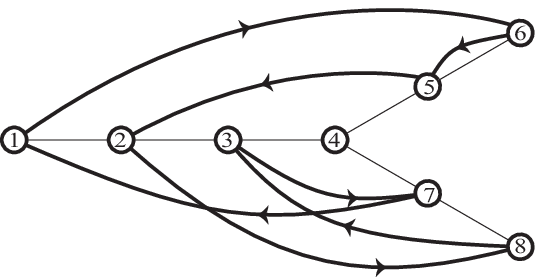}}

\subfloat[]{\includegraphics[width=3in]{./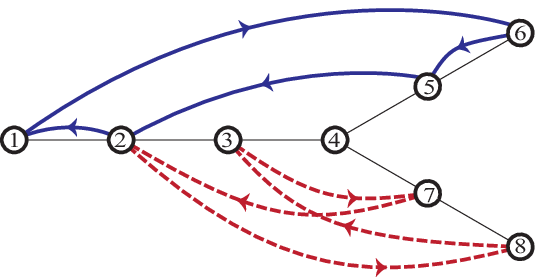}}
\end{center}
\caption{The cycle $(1\ 6\ 5\ 2\ 8\ 3\ 7)$ in Figure (a) is decomposed into two cycles, $  (1\ 6\ 5\ 2)$ and $ (2\ 8\ 3\ 7)$, as shown in Figure (b). Note that $(1\ 6\ 5\ 2\ 8\ 3\ 7)=(1\ 6\ 5\ 2)(2\ 8\ 3\ 7) $.}
\label{FigDecomposeTree}
\end{figure}


One may argue that the results of Lemma~\ref{lem:metric-path} and its extension to metric-trees have
limited application, as they require that both the defining tree and the permutations/rankings used in the 
computation be of special form. In particular, one may require that a given ranking $\pi$ is such that there are 
\emph{no edges} between two different branches of $\Theta$ in the cycle graph of $\pi$. We show next that under certain conditions the probability of such permutations goes to zero as $n \to \infty$, by lower bounding the number $P_{n}$ of permutations with
the given constraint.

Let the set of vertices in the $i$th
branch of a $Y$ shaped defining tree $\Theta$, $i=1,2,3,$ be denoted by $B_{i}$ and let $b_{i}$
denote the number of vertices in $B_{i}$. Clearly, $b_{1}+b_{2}+b_{3}+1=n$.

Assume, without loss of generality, that the numbering of the branches is such that $b_{1}\ge b_{2}\ge b_{3}$.
As an illustration, in Figure 3b we have 
\begin{align*}
B_{1} & =\{1,2,3\},\quad b_{1}=3,\\
B_{2} & =\{5,6\},\qquad b_{2}=2,\\
B_{3} & =\{7,8\},\qquad b_{3}=2.
\end{align*}

The quantity $P_{n}$ is greater than or equal to the number of permutations $\pi$
whose cycle decomposition does not contain an edge between $B_{2}$ and $B_{3}$,
and this quantity is, in turn, greater than or equal to the number of permutations $\pi$
such that $\pi\left(j\right)\notin B_{2}\cup B_{3}$ for $j\in B_{2}\cup B_{3}$. The number of
permutation with the latter property equals $\binom{b_{1}+1}{b_{2}+b_{3}}(b_{2}+b_{3})!(b_{1}+1)!$.
Hence,
\[
P_{n}\ge\frac{\left((b_{1}+1)!\right)^{2}}{(b_{1}+1-b_{2}-b_{3})!},
\]
and thus
\begin{align*}
\frac{P_{n}}{n!}\ge & \frac{\prod_{j=n+1-2b_{2}-2b_{3}}^{n-b_{2}-b_{3}}j}{\prod_{j=n+1-b_{2}-b_{3}}^{n}j}\cdot
\end{align*}

In particular, if $b_{2}=b_{3}=1$, we have 
\[
\frac{P_{n}}{n!}\ge\frac{(n-3)(n-2)}{(n-1)n}=1-\frac{4}{n}+O(n^{-2})
\]
 and more generally, if $b_{2}+b_{3}=o(n)$, then 
\[
\frac{P_{n}}{n!}\ge\frac{(n+o(n))^{b_{2}+b_{3}}}{(n+o(n))^{b_{2}+b_{3}}}\sim1,
\]
or equivalently, $P_{n}\sim n!$. 

Hence, if $b_{2}+b_{3}=o(n)$, the distance $\dist_{\varphi}(\pi,e)$ of a randomly chosen permutation $\pi$ from the identity equals $D_\varphi(\pi,e)/2$ with probability approaching 1 as $n\to\infty$.


It is worth noting that for metric-tree weight functions, the equality of Lemma~\ref{lem:metric-path} is not, in general, satisfied. To prove this claim, 
consider the metric-tree weight function $\wtfn$ in \figurename~\ref{fig:lemcounterexample}, where, for $a,b\in[n],a<b$,
\begin{equation*}
\nwt{a}b = \begin{cases}
1,&\qquad\text{if }a=1,\\
\infty,&\qquad\text{if }a\neq1.
\end{cases}
\end{equation*}It can be shown that for the permutation $\pi=(2\,3\,4)$, $\dist_\wtfn(\pi,e)= 4,$ while $\frac12D_\wtfn(\pi,e)=3$.
\begin{figure}
\begin{center}
\includegraphics[width=1.5in]{./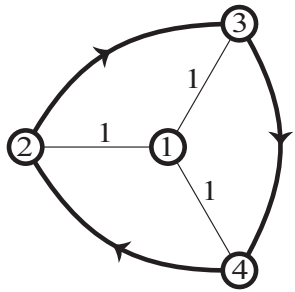}
\caption{For the above metric-tree weight function and $\pi=(2\,3\,4)$, the equality of Lemma~\ref{lem:metric-path} does not hold.} 
\label{fig:lemcounterexample}
\end{center}
\end{figure}

The following lemma provides a two approximation for transposition distances based on extended-path weight functions. 
As the weighted Kendall distance is a special case of the weighted transposition distance with extended-path weight functions, the lemma also implies Prop.~\ref{prop:wk2approximation}.
\begin{lem}
For an extended-path weight function $\wtfn$ and for $\pi,\sigma\in\S_n$,
\[\frac12 D_\wtfn(\pi,\sigma)\le\dist_\wtfn(\pi,\sigma) \le D_\wtfn(\pi,\sigma).\]
\end{lem}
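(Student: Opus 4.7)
The lower bound $\tfrac12 D_\wtfn(\pi,\sigma) \le \dist_\wtfn(\pi,\sigma)$ is immediate from Lemma~\ref{lem:LB}, which applies to any non-negative weight function. For the upper bound $\dist_\wtfn(\pi,\sigma) \le D_\wtfn(\pi,\sigma)$, my plan is to reduce to the metric-path equality of Lemma~\ref{lem:metric-path} via a companion weight function, and then blow up each (possibly non-adjacent) transposition in the resulting transform into a sequence of adjacent transpositions along the defining path via the construction of Lemma~\ref{lemTranspositionUB}.

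Let $\Theta$ be the defining path of the extended-path weight function $\wtfn$. Introduce the companion metric-path weight function $\wtfn'$ over the same path $\Theta$ by setting $\nwt{a}{b}' = \wt(p^*_\wtfn(a,b))$ for distinct $a,b\in[n]$. By construction, $\wtfn$ and $\wtfn'$ coincide on the edges of $\Theta$; consequently the minimum-weight path in $\mathcal{K}_{\wtfn'}$ between any two vertices still traces the unique $\Theta$-path, so $D_{\wtfn'}(\pi,\sigma) = D_\wtfn(\pi,\sigma)$. Applying Lemma~\ref{lem:metric-path} to $\wtfn'$ then yields $\dist_{\wtfn'}(\pi,\sigma) = \tfrac12 D_\wtfn(\pi,\sigma)$, and hence some transform $(\tau_1,\ldots,\tau_l) \in A_T(\pi,\sigma)$ of transpositions (generally non-adjacent in $\Theta$) achieves cumulative $\wtfn'$-weight equal to $\tfrac12 D_\wtfn(\pi,\sigma)$.

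For each $\tau_j = (a_j\,b_j)$, I would then invoke Lemma~\ref{lemTranspositionUB} applied to $\wtfn$, obtaining a sequence $s_j$ of transpositions taken from consecutive vertex pairs along $p^*_\wtfn(a_j,b_j)$ whose product equals $\tau_j$ and whose $\wtfn$-weight is at most $2\,\wt(p^*_\wtfn(a_j,b_j)) = 2\,\nwt{a_j}{b_j}'$. Concatenating $s_1, s_2,\ldots, s_l$ produces a single transform converting $\pi$ into $\sigma$ with total $\wtfn$-weight at most
\[
\sum_{j=1}^{l} 2\,\nwt{a_j}{b_j}' = 2\,\dist_{\wtfn'}(\pi,\sigma) = D_\wtfn(\pi,\sigma),
\]
which yields $\dist_\wtfn(\pi,\sigma) \le D_\wtfn(\pi,\sigma)$.

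The one point that requires care is verifying that each transposition in $s_j$ actually has finite $\wtfn$-weight, so that the concatenated transform is legitimate under the extended-path $\wtfn$. This is automatic: because $\wtfn$ assigns infinite weight to every non-$\Theta$ edge, any finite-weight path between $a_j$ and $b_j$ in $\mathcal{K}_{\wtfn}$ must lie entirely inside $\Theta$, so $p^*_\wtfn(a_j,b_j)$ is forced to be the unique $\Theta$-path, and its consecutive-pair transpositions are exactly the edges of $\Theta$. Once this observation is in place, the remainder of the argument is a routine assembly of Lemmas~\ref{lem:LB}, \ref{lemTranspositionUB}, and~\ref{lem:metric-path}; the conceptual core is the companion construction $\wtfn \leadsto \wtfn'$, which lets us transfer the tight $\tfrac12 D_\wtfn$ identity from the metric-path setting to the extended-path setting at the cost of a factor of two absorbed by Lemma~\ref{lemTranspositionUB}.
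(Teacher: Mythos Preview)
Your argument is correct and is essentially the paper's own proof: both introduce a companion metric-path weight $\wtfn'$ on the same defining path $\Theta$, invoke Lemma~\ref{lem:metric-path} to get the exact $\tfrac12 D$ identity for $\wtfn'$, and then transfer back to $\wtfn$ via the factor-$2$ bound of Lemma~\ref{lemTranspositionUB}. The only cosmetic difference is bookkeeping of that factor of~$2$: the paper sets $\wtfn'_{(a\,b)}=2\wtfn_{(a\,b)}$ on $\Theta$-edges so that $\dist_\wtfn\le\dist_{\wtfn'}=\tfrac12 D_{\wtfn'}=D_\wtfn$ in one line, whereas you keep $\wtfn'=\wtfn$ on $\Theta$-edges and absorb the~$2$ when expanding each transposition of the $\wtfn'$-optimal transform along $\Theta$.
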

\begin{proof}
The lower bound follows from Lemma~\ref{lem:LB}. 
To prove the upper bound, consider a metric-path weight function $\varphi',$ with the same defining path $\Theta$ as $\varphi,$ such that   
\[
\varphi'_{(a\ b)} = 2 \varphi_{(a\ b)}
\]
for any pair $a,b$ adjacent in $\Theta$. From Lemma~\ref{lemTranspositionUB}, it follows that for distinct $c,d\in[n]$,  
\[\dist_\varphi((c\ d),e)\le 2p^*_\varphi(c,d) = p^*_{\varphi'}(c,d) = \dist_{\varphi'}((c\ d),e).\]
 Hence,
\[\dist_\varphi(\pi,\sigma)\le\dist_{\varphi'}(\pi,\sigma)=\frac12D_{\wtfn'}(\pi,\sigma)=D_\wtfn(\pi,\sigma),\]
which proves the claimed result.
\end{proof}

\section{Aggregation Algorithms}
\label{sec:alg}

Despite the importance of the rank aggregation problem in many areas of information retrieval, only a handful of results regarding the complexity of the problem are known. Among them,
the most important results are the fact that finding a Kemeny optimal solution is NP-hard (see~\cite{dwork2001rank-web,popov} and references therein). Since the Kendall $\tau$ distance is a special case of the weighted Kendall distance, finding the aggregate ranking for the latter is also NP-hard. In particular, exhaustive search approaches -- akin to the one we used in the previous sections -- are not computationally feasible for large problems. 

However, assuming that $\pi^{*}$ is the solution to (\ref{eqn:rank-agg}), the ranking $\sigma_{i}$ closest to $\pi^{*}$ provides a 2-approximation for the aggregate ranking. 
This easily follows from the fact that the Kendall $\tau$ distance satisfies the triangle inequality. As a result, one only has to evaluate the pairwise distances of the votes $\Sigma$ in order to identify a 
2-approximation aggregate for the problem. Assuming the weighted Kendall distance can be computed efficiently (for example, if the weight function is monotonic), 
the same is true of the weighted Kendall distance as 
it is also a metric and thus satisfies the triangle inequality.

A second method for obtaining a 2-approximation is an extension of a bipartite matching algorithm. For any distance function that may be written as 
\begin{equation} \dist(\pi,\sigma)=\sum_{k=1}^{n}f(\pi^{-1}(k),\sigma^{-1}(k)),\label{eq:footrule}
\end{equation}
 where $f$ denotes an arbitrary non-negative function, one can find an \emph{exact solution} to (\ref{eqn:rank-agg}) as described in the next section. The matching algorithm approach
 for classical Kendall $\tau$ aggregation was first proposed in~\cite{dwork2001rank-web}. 
 
 \subsection{Vote Aggregation Using Matching Algorithms}
 
 Consider a complete weighted bipartite graph $\mathcal{G}=(X,Y)$, with $X=\{1,2,\cdots,n\}$ corresponding to the $n$ ranks to be filled in, and $Y=\{1,2,\cdots,n\}$ corresponding to the elements of $[n]$, i.e., the candidates. 
 Let $(i,j)$ denote an edge between $i\in X$ and $j\in Y$. We say that a perfect bipartite matching $P$ corresponds to a permutation $\pi$ whenever $(i,j)\in P$ if and only if $\pi(i)=j$. If the weight of $(i,j)$ equals 
 \begin{equation*} \sum_{l=1}^{m}f(i,\sigma_{l}^{-1}(j)),\label{eq:linear} \end{equation*}
 i.e., the weight incurred by $\pi(i)=j$, the minimum weight perfect matching corresponds to a solution of~(\ref{eqn:rank-agg}). The distance of \eqref{eq:footrule} is a generalized version of Spearman's footrule since Spearman's footrule \cite{diaconis1988group} can be obtained by choosing $f(x,y)=|x-y|$. 
 Below, we explain how to use the matching approach for aggregation based on a general weighted Kendall distance. 
 More details about this approach may be found in our companion conference paper~\cite{spcom2012}.

Recall that for a weighted Kendall distance with weight function $\varphi$,
\begin{equation*}
D_\wtfn(\pi,\sigma) = \sum_{i=1}^n w(\pi^{-1}(i):\sigma^{-1}(i)),
\end{equation*}
where 
\begin{equation*}
w(k:l) =  
\begin{cases}
\sum_{h=k}^{l-1}\nwt{h}{h+1}, & \text{if } k<l,\\
\sum_{h=l}^{k-1}\nwt{h}{h+1}, & \text{if } k>l,\\
0, & \text{if } k=l.
\end{cases}
\end{equation*}
Note that $D_\wtfn$ is a distance measure of the form of \eqref{eq:footrule}, and thus a solution to problem \eqref{eqn:rank-agg} for $\dist=D_\wtfn$ can be found 
exactly in polynomial time.

Suppose that the set of votes is given by $\Sigma=\{\sigma_1,\cdots,\sigma_m\}$.
\begin{prop} \label{propAggApprox}
Let $\pi'=\arg\min_{\pi}\sum_{l=1}^{m}D_\wtfn(\pi,\sigma_{i})$ and $\pi^{*}=\arg\min_{\pi}\sum_{l=1}^{m}\dist_\wtfn(\pi,\sigma_{i})$. The permutation $\pi'$ is a 2-approximation to the optimal rank aggregate $\pi^{*}$ if $\varphi$ corresponds to a weighted Kendall distance.\end{prop}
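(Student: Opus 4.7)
The plan is to chain together the double inequality from Proposition \ref{prop:wk2approximation} (the weighted-Kendall specialization of Theorem \ref{thm:UBLB}) with the optimality of $\pi'$ for the surrogate objective $\sum_l D_\wtfn(\pi,\sigma_l)$. Since $\wtfn$ corresponds to a weighted Kendall distance, for every $\pi$ and $\sigma$ we have the sandwich
\[
\tfrac{1}{2}\, D_\wtfn(\pi,\sigma) \;\le\; \dist_\wtfn(\pi,\sigma) \;\le\; D_\wtfn(\pi,\sigma),
\]
so summing over the votes $\sigma_1,\dots,\sigma_m$ preserves these bounds coordinate-wise.

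First, I would bound the objective value at $\pi'$ from above by $D_\wtfn$: using the right-hand inequality of the sandwich,
\[
\sum_{l=1}^{m}\dist_\wtfn(\pi',\sigma_l) \;\le\; \sum_{l=1}^{m}D_\wtfn(\pi',\sigma_l).
\]
Next, since $\pi'$ is by definition a minimizer of the cumulative $D_\wtfn$-distance over all permutations, I would compare it to $\pi^*$ to get
\[
\sum_{l=1}^{m}D_\wtfn(\pi',\sigma_l) \;\le\; \sum_{l=1}^{m}D_\wtfn(\pi^*,\sigma_l).
\]
Finally, I would apply the left-hand inequality of the sandwich (equivalently $D_\wtfn(\pi^*,\sigma_l)\le 2\dist_\wtfn(\pi^*,\sigma_l)$) to convert the right-hand side back into a $\dist_\wtfn$ expression, yielding
\[
\sum_{l=1}^{m}D_\wtfn(\pi^*,\sigma_l) \;\le\; 2\sum_{l=1}^{m}\dist_\wtfn(\pi^*,\sigma_l).
\]
Concatenating the three displayed inequalities gives $\sum_l \dist_\wtfn(\pi',\sigma_l)\le 2\sum_l \dist_\wtfn(\pi^*,\sigma_l)$, which is exactly the 2-approximation guarantee.

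There is no real obstacle here: the argument is the standard surrogate-objective trick whose only prerequisite is a two-sided comparison between the true metric and an easily-optimizable proxy. The substantive work, namely establishing the sandwich bound, has already been done in Proposition \ref{prop:wk2approximation} (and more generally in Theorem \ref{thm:UBLB}), and the matching-based procedure gives an exact polynomial-time minimizer of $\sum_l D_\wtfn(\pi,\sigma_l)$ because $D_\wtfn$ has the additive footrule form of \eqref{eq:footrule} with $f(i,j)=w(i:j)$. The hypothesis that $\wtfn$ corresponds to a weighted Kendall distance is used precisely to invoke the tighter upper bound $\dist_\wtfn\le D_\wtfn$ rather than the factor-two upper bound of Theorem \ref{thm:UBLB}; without it, an analogous argument would only give a 4-approximation.
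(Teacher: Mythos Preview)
Your proposal is correct and follows essentially the same argument as the paper: invoke the sandwich $\tfrac12 D_\wtfn \le \dist_\wtfn \le D_\wtfn$ from Proposition~\ref{prop:wk2approximation}, apply the upper bound at $\pi'$, the optimality of $\pi'$ for $D_\wtfn$, and the lower bound at $\pi^{*}$, then chain. Your closing remarks about the matching formulation and the 4-approximation for general weights also mirror the paper's discussion.
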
 
\begin{proof} From Prop.~\ref{prop:wk2approximation}, for a weighted Kendall weight function $\wtfn$ and for permutations $\pi$ and $\sigma$, \[\frac12 D_\wtfn(\pi,\sigma) \le \dist_\wtfn(\pi,\sigma) \le D_\wtfn(\pi,\sigma).\]
Thus we have 
\[\sum_{l=1}^{m}\dist_{\varphi}(\pi',\sigma_{i})\le\sum_{l=1}^{m}D_\wtfn(\pi',\sigma_{i}). \] 
and 
\[ \frac12\sum_{l=1}^{m}D_\wtfn(\pi^{*},\sigma_{i})\le \sum_{l=1}^{m}\dist_{\varphi}(\pi^{*},\sigma_{i})\] 
From the optimality of $\pi'$ with respect to $D$, we find 
\[\sum_{l=1}^{m}D_\wtfn(\pi',\sigma_{i})\le \sum_{l=1}^{m}D_\wtfn(\pi^{*},\sigma_{i}).\] Hence
\[ \sum_{l=1}^{m}\dist_{\varphi}(\pi',\sigma_{i})\le2\sum_{l=1}^{m}\dist_{\varphi}(\pi^{*},\sigma_{i}). \]\end{proof}
In fact, the above proposition applies to the larger class of weighted transposition distances with extended-path weights. It can similarly be shown that for a weighted transposition distance with general weights (resp. metric weights), $\pi'$ is a 4-approximation (resp. a 2-approximation). Finally, for a weighted transposition distance with metric-path weights, $\pi'$ represents the exact solution.

A simple approach for improving the performance of the matching based algorithm is to couple it with a local descent method. 
Assume that an estimate of the aggregate at step $\ell$ equals $\pi^{(\ell)}$. As before, let $\mathbb A_n$ be the set of adjacent transpositions in $\S_n$. Then \[ \pi^{(\ell+1)}=\pi^{(\ell)}\,\arg\min_{\tau\in\mathbb A_n}\sum_{i=1}^{m}\dist(\pi^{(\ell)}\,\tau,\sigma_{i}). \] The search terminates when the cumulative distance of the aggregate from the set of votes $\Sigma$ cannot be decreased further. We choose the starting point $\pi^{(0)}$ to be the ranking $\pi'$ of Prop.~\ref{propAggApprox} obtained by the minimum weight bipartite matching algorithm. This method will henceforth be referred to as Bipartite Matching with Local Search (BMLS).

An important question at this point is how does the approximate nature of the BMLS aggregation process change the aggregate, especially with respect to the top-vs-bottom or similarity property? 
This question is hard, and we currently have no mathematical results pertaining to this problem. Instead, we describe a number of simulation results that may guide future analysis of this issue.

 In order to see the effect of the BMLS on vote aggregation, we revisit Examples \ref{example5x5}-\ref{example7x5}. In \emph{all except for one case} the solution provided by BMLS is the same as the exact solution, 
 both for the Kendall $\tau$ and weighted Kendall distances. 
 
 The exception is Example \ref{example5x4}. In this case, for the weight function $\nwt{i}{i+1}=(2/3)^{i-1},i\in[3]$, the exact solution equals $(1,4,2,3)$ but the solution obtained via BMLS equals $(4,2,3,1)$. Note that 
 these two solutions differ significantly in terms of their placement of candidate $1$, ranked first in the exact ranking and last in the approximate ranking. 
 The distances between the two solutions, $\dist_\wtfn((1,4,2,3),(4,2,3,1))$, equals $2.11$ and is rather large. Nevertheless, the cumulative distances to the votes are very close in value: 
\begin{equation*}
\begin{split}
\sum_i \dist_\wtfn((1,4,2,3),\sigma_i)&=9,\\
\sum_i \dist_\wtfn((4,2,3,1),\sigma_i)&=9.11.
\end{split}
\end{equation*}
Hence, as with any other distance based approach, the approximation result may sometimes diverge significantly from the optimum solution while the closeness of the approximate solution to the set of votes is nearly the same as that of the optimum solution. One way to avoid such approximation errors is to use weight functions with sufficiently large ``spreads'' of weights for which the difference between solutions has to be smaller than a given threshold. This topic will be discussed elsewhere.

\subsection{Vote Aggregation Using PageRank}


An algorithm for data fusion based on the PageRank and HITS algorithms for ranking web pages was proposed in~\cite{popov,dwork2001rank-web}. PageRank is one of the most important algorithms developed for search engines used by Google, with the aim of scoring web-pages based on their relevance. Each webpage that has hyperlinks to other webpages is considered a voter, while the voter's preferences for candidates is expressed via the hyperlinks. When a hyperlink to a webpage is not present, it is assumed that the voter does not support the given candidate's webpage. Although the exact implementation details of PageRank are not known, it is widely assumed that the graph of webpages is endowed with near-uniform transition probabilities. The ranking of the webpages is obtained by computing the stationary probabilities of the chain, and ordering the pages according to the values of the stationary probabilities. The connectivity of the Markov chain provides information about pairwise candidate preferences, and states with high input probability correspond to candidates ranked highly in a large number of lists. 

This idea can be easily adapted to the rank aggregation problem with weighted distances in several different settings. In such an adaptation, the states of a Markov chain correspond to the candidates and the transition probabilities are functions of the votes. Dwork et al.~\cite{dwork2001rank,dwork2001rank-web} proposed four different ways for computing the transition probabilities from the votes. Below, we describe the method that is most suitable for our problem and provide a generalization of the algorithm for weighted distance aggregation.


Consider a Markov chain with states indexed by the candidates. Let $P$ denote the transition probability matrix of the Markov chain, with $P_{ij}$ denoting the probability of going from state (candidate) $i$ to state $j$. 
In \cite{dwork2001rank-web}, the transition probabilities are evaluated as \[ P_{ij}=\frac{1}{m}\sum_{\sigma\in\Sigma}P_{ij}(\sigma), \] where
\[ P_{ij}(\sigma)=\begin{cases} \frac1n, & \qquad\mbox{if }\sigma^{-1}(j)<\sigma^{-1}(i),\\
1-\frac{\sigma^{-1}(i)-1}n, & \qquad\mbox{if }i=j,\\
 0, & \qquad\mbox{if }\sigma^{-1}(j)>\sigma^{-1}(i). \end{cases} \]

Our Markov chain model for weighted Kendall distance is similar, with a modification that includes incorporating transposition weights into the transition probabilities. To accomplish this task, we proceed as follows.

Let $w_{k}=\nwt{k}{k+1}$, and let $i_{\sigma}=\sigma^{-1}(i)$ for candidate $i\in[n]$.  We set \begin{equation} \beta_{ij}(\sigma)=\max_{l:j_{\sigma}\le l<i_{\sigma}}\frac{\sum_{h=l}^{i_\sigma-1}w_h}{i_{\sigma}-l}\label{eq:beta} \end{equation} if $j_{\sigma}<i_{\sigma}$, $\beta_{ij}(\sigma)=0$ if $j_{\sigma}>i_{\sigma}$, and \[ \beta_{ii}(\sigma)=\sum_{k:k_{\sigma}>i_{\sigma}}\beta_{ki}(\sigma). \]


The transition probabilities equal \[ P_{ij}=\frac{1}{m}\sum_{k=1}^{m}P_{ij}(\sigma_{k}), \] with \[ P_{ij}(\sigma)=\frac{\beta_{ij}(\sigma)}{\sum_{k}\beta_{ik}(\sigma)}. \] 

Intuitively, the transition probabilities described above may be interpreted as follows. The transition probabilities are obtained by averaging the transition probabilities corresponding to individual votes $\sigma\in\Sigma$. For each vote $\sigma$, consider candidates $j$ and $k$ with $j_\sigma=i_\sigma-1$ and $k_\sigma=i_\sigma-2$. The probability of going from candidate $i$ to candidate $j$ is proportional to $w_{j_{\sigma}}=\nwt{j_\sigma}{i_\sigma}$. This implies that if $w_{j_{\sigma}}>0$, one moves from candidate $i$ to candidate $j$ with positive probability. Furthermore, larger values for $w_{j_{\sigma}}$ result in higher probabilities for moving from $i$ to $j$.

In the case of candidate $k$, it seems reasonable to let the probability of transitioning from candidate $i$ to candidate $k$ be proportional to $\frac{w_{j_{\sigma}}+w_{k_{\sigma}}}{2}$. However, since $k$ is ranked before $j$ by vote $\sigma$, it is natural to require that the probability of moving to candidate $k$ from candidate $i$ be at least as high as the probability of moving to candidate $j$ from candidate $i$. This reasoning leads to $\beta_{ik}=\max\{w_{j_{\sigma}},\frac{w_{j_{\sigma}}+w_{k_{\sigma}}}{2}\}$ and motivates using the maximum in (\ref{eq:beta}). Finally, the probability of staying with candidate $i$ is proportional to the sum of the $\beta$'s from candidates placed below candidate $i$. 

\begin{figure}
\begin{center}
\includegraphics{./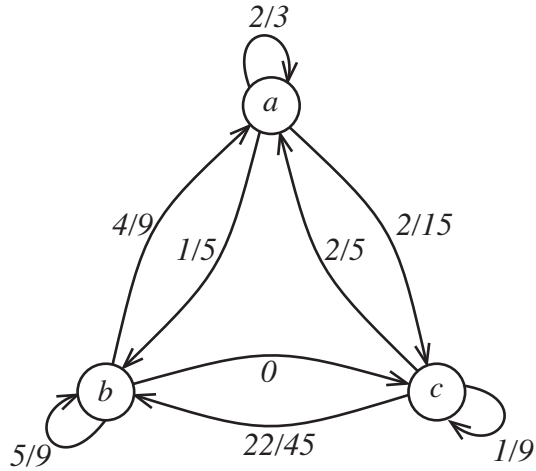}
\caption{The Markov chain for Example~\ref{ex:networkandmarkov}.}
\label{fig:networkandmarkov}
\end{center}
\end{figure}

\begin{example}\label{ex:networkandmarkov} 
Let the votes in $\Sigma$ consist of $\sigma_1 = (a,b,c)$, $\sigma_2=(a,b,c)$, and $\sigma_3 = (b,c,a)$, and let $w=(w_1,w_2) = (2,1)$. 

Consider the vote $\sigma_{1}=(a,b,c)$. We have $\beta_{ba}\left(\sigma_{1}\right)=\frac{w_{1}}{1}=2$. Note that if $w_{1}$ is large, then $\beta_{ba}$ is large as well.

In addition, $\beta_{cb}\left(\sigma_{1}\right)=\frac{w_{2}}{1}=1$ and \[\beta_{ca}=\max\left\{ \frac{w_{1}+w_{2}}{2},\beta_{cb}\right\} =\frac{3}{2}.\] The purpose of 
the $\max$ function is to ensure that $\beta_{ca}\ge\beta_{cb},$ which is a natural requirement given that $a$ is ranked before $b$ according to $\sigma_{1}$.

Finally, $\beta_{aa}\left(\sigma_{1}\right)=\beta_{ca}\left(\sigma_{1}\right)+\beta_{ba}\left(\sigma_{1}\right)=2+\frac{3}{2}=\frac{7}{2}$ and $\beta_{bb}\left(\sigma_{1}\right)=\beta_{cb}\left(\sigma_{1}\right)=1$. 
Note that according to the transition probability model, one also has $\beta_{aa}\ge\beta_{bb}.$ This may again be justified by the fact that $\sigma_{1}$ places $a$ higher than $b$. 

Since $\sigma_1 = \sigma_2$, we have
\[
P\left(\sigma_{1}\right)=P\left(\sigma_2\right)=
\left(\begin{array}{ccc}
1 & 0 & 0\\
\nicefrac{2}{3} & \nicefrac{1}{3} & 0\\
\nicefrac{3}{5} & \nicefrac{2}{5} & 0
\end{array}\right).
\]
Similar computations yield
\begin{eqnarray*}
\beta_{cb}\left(\sigma_{3}\right)=2, & \beta_{ac}\left(\sigma_{3}\right)=1, & \beta_{ab}\left(\sigma_{3}\right)=\frac{3}{2}\\
\beta_{aa}\left(\sigma_{3}\right)=0, & \beta_{bb}\left(\sigma_{3}\right)=2+\frac{3}{2}=\frac{7}{2}, & \beta_{cc}\left(\sigma_{3}\right)=1,
\end{eqnarray*}
and thus
\[
P\left(\sigma_{3}\right)=\left(\begin{array}{ccc}
0 & \nicefrac{3}{5} & \nicefrac{2}{5}\\
0 & 1 & 0\\
0 & \nicefrac{2}{3} & \nicefrac{1}{3}
\end{array}\right).
\]
From the $P\left(\sigma_{1}\right),P\left(\sigma_{2}\right),$ and
$P\left(\sigma_{3}\right)$, we obtain

\[
P=\frac{P\left(\sigma_{1}\right)+P\left(\sigma_{2}\right)+P\left(\sigma_{3}\right)}{3}
=\left(\begin{array}{ccc}
\nicefrac{2}{3} & \nicefrac{1}{5} & \nicefrac{2}{15}\\
\nicefrac{4}{9} & \nicefrac{5}{9} & 0\\
\nicefrac{2}{5} & \nicefrac{22}{45} & \nicefrac{1}{9}
\end{array}\right).
\]
The Markov chain corresponding to $P$ is given in Figure~\ref{fig:networkandmarkov}. The stationary distribution of this Markov chain is $(0.56657,\allowbreak 0.34844, 0.084986)$ which corresponds to the ranking $(a,b,c)$. 
\end{example}


\begin{example}
The performance of the Markov chain approach described above cannot be easily evaluated analytically, as is the case with any related aggregation algorithm proposed so far. 

We hence test the performance of the scheme on examples for which the optimal 
solutions are easy to evaluate numerically. For this purpose, in what follows, we consider a simple test example, with $m=11$. 
The set of votes (rankings) is listed below
\[\Sigma^T = \left(\begin{array}{c|c|c|c|c|c|c|c|c|c|c} 
1 & 1 & 1 & 2 & 2 & 3 & 3 & 4 & 4 & 5 & 5\\ 
2 & 2 & 2 & 3 & 3 & 2 & 2 & 2 & 2 & 2 & 2\\ 
3 & 3 & 3 & 4 & 4 & 4 & 4 & 5 & 5 & 3 & 3\\ 
4 & 4 & 4 & 5 & 5 & 5 & 5 & 3 & 3 & 4 & 4\\ 
5 & 5 & 5 & 1 & 1 & 1 & 1 & 1 & 1 & 1 & 1 
\end{array}\right). \] 

Note that due to the transpose operator, each column corresponds to a vote, e.g., $\sigma_{1}=\left(1,2,3,4,5\right)$. 

Let us consider candidates 1 and 2. 
Using the majority rule, one would arrive at the conclusion that candidate 1 should be the winner, given that 1 appears most often at the top of the list. Under a number of other aggregation rules, including Kemeny's rule and Borda's method, candidate 2 would be the winner.  
\begin{center} \begin{table*} \begin{centering} \begin{tabular}{|c|c|c|c|c|} \hline \multirow{2}{*}{Method} & \multicolumn{4}{c|}{Aggregate ranking and average distance}\tabularnewline \cline{2-5} & $w=\left(1,0,0,0\right)$  & $w=(1,1,1,1)$ & $w=\left(1,1,0,0\right)$  & $w=\left(0,1,0,0\right)$ \tabularnewline \hline \hline OPT  & $\left(\underline{1},4,3,2,5\right)$, 0.7273  & $\left(2,3,4,5,1\right),$ 2.3636  & $\left(\underline{2,3},4,5,1\right)$, 1.455  & $\left(\underline{3,2},5,4,1\right)$, 0.636 \tabularnewline \hline BMLS  & $\left(\underline{1},2,3,4,5\right)$, 0.7273  & $\left(2,3,4,5,1\right),$ 2.3636  & $\left(\underline{2,3},1,5,4\right)$, 1.455  & $\left(\underline{2,3},1,5,4\right)$, 0.636 \tabularnewline \hline MC  & $\left(\underline{1},2,5,4,3\right)$, 0.7273  & $\left(2,3,4,5,1\right),$ 2.3636  & $\left(\underline{2,1},3,4,5\right)$, 1.546 & $\left(\underline{2,3},1,4,5\right)$, 0.636 \tabularnewline \hline \end{tabular} \par\end{centering}
\caption{The aggregate rankings and the average distance of the aggregate ranking from the votes for different weight functions $w$.}
\label{tab1} \end{table*}
\end{center}

Our immediate goal is to see how different weighted distance based rank aggregation algorithms would position candidates $1$ and $2$. 
The numerical results regarding this example are presented in Table \ref{tab1}. In the table, OPT refers to an optimal solution which was found by exhaustive search, and MC refers to the Markov chain method. 

If the weight function is $w=(w_1,\cdots,w_4)=(1,0,0,0)$, where $w_i=\varphi_{(i\,i+1)}$, the optimal aggregate vote clearly corresponds to the plurality winner. That is, the winner is the candidate with most voters ranking him/her as the top candidate. A quick check of Table \ref{tab1} reveals that all three methods identify the winner correctly. Note that the ranks of candidates other than candidate 1 obtained by the different methods are different, however this does not affect the distance between the aggregate ranking and the votes.

The next weight function that we consider is the uniform weight function, $w=\left(1,1,1,1\right)$. This weight function corresponds to the conventional Kendall $\tau$ distance. As shown in Table \ref{tab1}, all three methods produce $\left(2,3,4,5,1\right),$ and the aggregates returned by BMLS and MC are optimum.

The weight function $w=\left(1,1,0,0\right)$ corresponds to \emph{ranking of the top 2} candidates. OPT and BMLS return $2,3$ as the top two candidates, both preferring $2$ to $3$. The MC method, however, returns $2,1$ as the top two candidates, with a preference for $2$ over $1$, and a suboptimal cumulative distance. It should be noted that this may be attributed to the fact 
the the MC method is not designed to only minimize the average distance: another important factor in determining the winners via the MC method is that winning against strong candidates ``makes one strong''. In this example, candidate 1 beats the strongest candidate, candidate 2, three times, while candidate 3 beats candidate 2 only twice and this fact seems to be the reason for the MC algorithm to prefer candidate 1 to candidate 3. Nevertheless, the stationary probabilities of candidates 1 and 3 obtained by the MC method are very close to each other, as the vector of probabilities is $(\underline{0.137},0.555,\underline{0.132},0.0883,0.0877)$.

The weight function $w=(0,1,0,0)$ corresponds to \emph{identifying the top 2} candidates -- i.e., it is not important which candidate is the first and which is the second. 
The OPT and BMLS identify $\left\{ 2,3\right\} $ as the top two candidates. 

The MC method returns the stationary probabilities $\left(0,1,0,0,0\right)$ which means that candidate 2 is an absorbing state in the Markov chain. 
This occurs because candidate 2 is ranked first or second by all voters. The existence of absorbing states is a drawback of the Markov chain methods. One solution is to remove 2 from the votes and re-apply MC. The MC method in this case results in the stationary distribution $\left(p\left(1\right),p\left(3\right),p\left(4\right),p\left(5\right)\right)=\left(0.273,0.364,0.182,0.182\right),$ which gives us the ranking $\left(3,1,4,5\right)$. Together with the fact that candidate 2 is the strongest candidate, we obtain the ranking $\left(2,3,1,4,5\right)$.
\end{example}

\section{Appendix}

\subsection{Computing the Weight Functions with Two Identical Non-zero Weights}

The goal is to find the weighted Kendall distance $\dist_\varphi(\pi,e)$ with the weight function of \eqref{eqTwoIdWf}, for an arbitrary $\pi\in\mathbb{S}_{n}$. 
For this purpose, let $R_1 = \{1,\cdots,a\}$, $ R_2 = \{a+1,\cdots,b\} $, and $R_3 = \{b+1,\cdots,n\},$ and define
\[
N_{ij}^\pi =|\{ k\in R_j:\pi^{-1}(k)\in R_i\}|, \quad i,j\in\{1,2,3\}.
\]
That is, $N_{ij}^\pi$ is the number of elements whose ranks in $\pi$ belong to the set $R_i$ and whose ranks in $e$ belong to the set $R_j$. A sequence of transpositions 
that transforms $\pi$ into $e$ moves the $N_{ij}^\pi$ elements of $\{ k\in R_j:\pi^{-1}(k)\in R_i\}$ from $R_i$ to $R_j$. Furthermore, note that any transposition that swaps two elements with ranks in the same region $R_i,i\in[3],$ has weight zero, while for any transposition $\tau_l$ that swaps an element ranked in $R_1$ with an element ranked in $R_2$ or swaps an element ranked in $R_2$ with an element ranked in $R_3$, we have $\dist_\varphi(\tau_l,e)=1$. 
\ignore{
\begin{figure*}
\begin{center}
\includegraphics[width=6in]{./weight-functions_4.eps}
\caption{The graph of a weight function with two non-zero weights $\varphi_{(a\ a+1)}$ and $\varphi_{(b\ b+1)}$ with $\varphi_{(a\ a+1)}=\varphi_{(b\ b+1)}=1$.} 
\label{fig:twonose}
\end{center}
\end{figure*}
}

It is straightforward to see that $\sum_j N_{ij}^\pi = \sum_j N_{ji}^\pi$. In particular, $N_{12}^\pi+N_{13}^\pi=N_{21}^\pi+N_{31}^\pi$ and $N_{31}^\pi+N_{32}^\pi=N_{13}^\pi+N_{23}^\pi$. 

We show next that 
\[
\dist_\varphi(\pi,e)=\begin{cases}
2N_{13}^{\pi}+N_{12}^{\pi}+N_{23}^{\pi}, & \quad \; \text{if} \; N_{21}^{\pi}\ge1 \text{ or } N_{23}^{\pi}\ge1,\\
2N_{13}^{\pi}+1, & \quad \; \text{if} \; N_{21}^{\pi}=N_{23}^{\pi}=0.
\end{cases}
\]

Note that, from Prop. \ref{prop:wk2approximation}, we have
\begin{equation}\label{eqTwoIdNonZeroLB}
\dist_\varphi(\pi,e)\ge\frac12 D_\varphi(\pi,e)=2N_{13}^{\pi}+N_{12}^{\pi}+N_{23}^{\pi}.
\end{equation}

Suppose that $N_{21}^{\pi}\ge1$ or $N_{23}^{\pi}\ge1$. We find a transposition $\tau_l$, with $\dist_\varphi(\tau_l,e)=1,$ such that $\pi'=\pi\tau_l$ satisfies $D_{\varphi}(\pi',e)=D_{\varphi}(\pi,e)-2,$ and at least one of the following conditions:
\begin{equation}
\begin{cases}
\quad N_{21}^{\pi'}\ge1,\\
\mbox{or}\\
\quad N_{23}^{\pi'}\ge1,\\
\mbox{or}\\
\quad \pi'=e.
\end{cases}\label{eq:condititons}
\end{equation}
Applying the same argument repeatedly, and using the triangle inequality, proves that $ \dist_\varphi(\pi,e)\le \frac12D_\varphi(\pi,e)$ if $N_{21}^{\pi}\ge1$ or $N_{23}^{\pi}\ge1$. This, along with \eqref{eqTwoIdNonZeroLB}, shows that $ \dist_\varphi(\pi,e)=\frac12D_\varphi(\pi,e)$ if $N_{21}^{\pi}\ge1$ or $N_{23}^{\pi}\ge1$.

First, suppose that $N^{\pi}_{21}\ge1$ and $N^{\pi}_{23}\ge1$. It then follows that $N^{\pi}_{12}\ge1$ or $N^{\pi}_{32}\ge1$. 
Without loss of generality, assume that $N^{\pi}_{12}\ge1$. Then $\tau_l$ can be chosen such that 
$N^{\pi'}_{12}=N^{\pi}_{12}-1$ and $N^{\pi'}_{21}=N^{\pi}_{21}-1$. We have $D_{\varphi}(\pi',e)=D_{\varphi}(\pi,e)-2,$ and since $N^{\pi}_{23}\ge1$, condition (\ref{eq:condititons}) holds.

Next, suppose $N^{\pi}_{21}\ge1$ and $N^{\pi}_{23}=0$. If $N^{\pi}_{13}\ge1$, choose $\tau_l$ such that
\begin{align*}
N^{\pi'}_{21} & =N^{\pi}_{21}-1,\\
N^{\pi'}_{23} & =1,\\
N^{\pi'}_{13} & =N^{\pi}_{13}-1,
\end{align*}
where $\pi'=\pi\tau_l$. Since $N^{\pi'}_{23}=1$, condition (\ref{eq:condititons}) is satisfied. 
If $N^{\pi}_{13}=0$, then $N^{\pi}_{31}=N^{\pi}_{32}=0,$ and thus $N^{\pi}_{12}=N^{\pi}_{21}\ge1$. In this case, we choose $\tau_l$ such that 
$N^{\pi'}_{21}=N^{\pi'}_{12}=N^{\pi}_{12}-1$. As a result, we have either $N^{\pi'}_{21} \ge1$ or $\pi'=e$. Hence, condition (\ref{eq:condititons}) is satisfied once again. 
Note that in both cases, for $N^{\pi}_{13}=0$ as well as for $N^{\pi}_{13}\ge1$, we have $D_{\varphi}(\pi',e)=D_{\varphi}(\pi,e)-2$.

The proof for the case $N^{\pi}_{23}\ge1$ and $N^{\pi}_{21}=0$ follows along similar lines.

If $N^\pi_{21}=N^\pi_{23}=0$, it can be verified by inspection that for every transposition $\tau_l$ with $\dist_\varphi(\tau_l,e)=1$, we have $D_\varphi(\pi\tau_l,e)\ge D_\varphi(\pi,e)$. Hence, the inequality in \eqref{eqTwoIdNonZeroLB} cannot be satisfied with equality, which implies that $\dist_\varphi(\pi,e)\ge 2N_{13}^\pi+1$. Choose a transposition $\tau_l$ with $\dist_\varphi(\tau_l,e)=1$ such that 
\begin{align*}
N^{\pi'}_{13} & =N^{\pi}_{13}-1,\\
N^{\pi'}_{12} & =1,\\
N^{\pi'}_{23} & =1.
\end{align*}
where $\pi'=\pi\tau_l$. We have 
\begin{equation*}
\begin{split}
\dist_\varphi(\pi,e)\le\dist_\varphi(\tau_l,e)+\dist_\varphi(\pi',e)=1+2N^{\pi}_{13}.
\end{split}
\end{equation*}
This, along with $\dist_\varphi(\pi,e)\ge 2N_{13}^\pi+1$, completes the proof.

\subsection{The Average Kendall and Weighted Kendall Distance}

The Kendall $\tau$ distance between two rankings may be viewed in the following way: each pair of candidates on which the two rankings disagree contribute one unit to the distance between the rankings. 
Owing to Algorithm \ref{alg:FindTauMonotone}, the weighted Kendall distance with a decreasing weight function can be regarded in a similar manner: each pair of candidates on which the two rankings disagree contributes $\varphi_{(s\ s+1)}$, for some $s$, to the distance between the rankings. 

Consider a pair $a$ and $b$ such that $\pi^{-1}(b)<\pi^{-1}(a)$ and $\sigma^{-1}(a)<\sigma^{-1}(b)$. In Algorithm~\ref{alg:FindTauMonotone}, there exists a transposition $\tau_{t}^{\star}=(s\ s+1)$ that swaps $a$ and $b$ where 
\[
s=\pi^{-1}(b)+\left|\left\{ k:\sigma^{-1}(k)<\sigma^{-1}(a), \pi^{-1}(k)>\pi^{-1}(b)\right\} \right|,
\]
that is, $s$ equals $\pi^{-1}(b)$ plus the number of elements that appear before $a$ in $\sigma$ and after $b$ in $\pi$. It is not hard to see that $s$ can also be written in a way that is symmetric with respect to $\pi$ and $\sigma,$ as
\begin{equation}
\label{eq:whichweight}
\begin{split}
s&
=\pi^{-1}(b)+\sigma^{-1}(a)-\left|\left\{k:\pi^{-1}(k)<\pi^{-1}(b),\sigma^{-1}(k)<\sigma^{-1}(a)\right\}\right|-1\nonumber\\
&=n-1-\left|\left\{ k:\pi^{-1}(k)>\pi^{-1}(b),\sigma^{-1}(k)>\sigma^{-1}(a)\right\}\right|.
\end{split}
\end{equation}


As an example, consider $\varphi_{(i\ i+1)} = n-i$. Then,
\begin{align*}
\dist_\varphi(\pi,\sigma) &= \sum_{(b,a)\in \mathscr I(\pi,\sigma)} \left(1+\left|\left\{ k:\pi^{-1}(k)>\pi^{-1}(b),\sigma^{-1}(k)>\sigma^{-1}(a)\right\}\right|\right)\\
&= \kdist(\pi,\sigma) + \sum_{(b,a)\in \mathscr I(\pi,\sigma)} \left|\left\{ k:\pi^{-1}(k)>\pi^{-1}(b),\sigma^{-1}(k)>\sigma^{-1}(a)\right\}\right|
\end{align*}
where $\mathscr I(\pi,\sigma)$ is the set of ordered pairs $(b,a)$ such that $\pi^{-1}(b)<\pi^{-1}(a)$ and $\sigma^{-1}(a)<\sigma^{-1}(b)$. Note that the weighted Kendall distance $\dist_\varphi$ equals the Kendall $\tau$ distance plus a sum that captures the influence of assigning higher importance to the top positions of the rankings. 

These observations allow us to easily compute the expected value of the distance between the identity permutation and a randomly and uniformly chosen permutation $\pi \in \mathbb{S}_{n}$. 
For $1\le a<b\le n$ and $s\in[n-1]$, let $X_{ab}^{s}$ be an indicator variable that equals one if and only if $\pi^{-1}(a)>\pi^{-1}(b)$ and
\[
\left|\left\{ k>a:\pi^{-1}(k)>\pi^{-1}(b)\right\} \right|=n-1-s.
\]
The expected distance between the two permutations equals
\begin{equation}
E[\dist_\varphi(\pi,e)]=\sum_{s=1}^{n-1}\varphi_{(s \ s+1)}  \sum_{a=1}^{n-1}\sum_{b=a+1}^{n}E\left[X_{ab}^{s}\right].\label{eq:expectation}
\end{equation}
By the definition of $X_{ab}^s$, $E\left[X_{ab}^{s}\right]$ equals the probability of the event that $n-1-s$ elements of $\{a+1,\cdots,n\}\backslash\{b\}$ and $a$ appear after $b$ in $\pi$. There are $\binom{n-a-1}{n-s-1}$ ways to choose $n-s-1$ elements from $\{a+1,\cdots,n\}\backslash\{b\}$, $\binom{n}{a-1}(a-1)!$ ways to assign positions to the elements of $\{1,2,\cdots,a-1\}$, $(s-a)!$ ways to arrange the $s-a$ elements of $\{a+1,\cdots,n\}\backslash\{b\}$ that appear before $b$, and $(n-s)!$ ways to arrange $a$ and the $n-1-s$ elements of $\{a+1,\cdots,n\}\backslash\{b\}$ that appear after $b$. Hence,
\begin{align*}
E\left[X_{ab}^{s}\right] & =\frac{1}{n!}\binom{n-a-1}{n-s-1}\binom{n}{a-1}(a-1)!(s-a)!(n-s)!\\
 & =\frac{n-s}{(n-a+1)(n-a)},
\end{align*}
for $1\le a\le s,$ and $E\left[X_{ab}^{ s}\right]=0$ for
$a> s$. Using this expression in (\ref{eq:expectation}), we obtain
\begin{align*}
E[\dist_\varphi(e,\pi)] & =\sum_{ s=1}^{n-1}\varphi_{(s\ s+1)}\sum_{a=1}^{ s}\frac{n- s}{n-a+1}\\
 & =\sum_{ s=1}^{n-1}\varphi_{(s\ s+1)}(n- s)(H_{n}-H_{n-s}),
\end{align*}
where $H_i = \sum_{l=1}^i \frac1l$. Indeed, for $\varphi_{(s\,s+1)}=1,s\in[n-1]$, we recover the well known result that
\begin{align*}
E[\dist_\varphi(e,\pi)] & =\sum_{ s=1}^{n-1}(n- s)(H_{n}-H_{n-s})\\
& =\sum_{k=1}^{n-1}k(H_{n}-H_k)\\
& = \frac12\binom{n}{2}.
\end{align*}
For $\varphi_{(s\ s+1)}=n-s$, the average distance equals
\begin{align*}
E[\dist_\varphi(e,\pi)] & =\sum_{ s=1}^{n-1}(n- s)^2(H_{n}-H_{n-s})\\
& =\sum_{k=1}^{n-1}k^2(H_{n}-H_k)\\
& =\frac12\binom{n}{2}+\frac23\binom{n}{3}.
\end{align*}

\textbf{Acknowledgment}: The authors are grateful to 
Tzu-Yueh Tseng for helping with the numerical results and to 
Eitan Yaakobi and Michael Landberg for useful discussions. 

\bibliographystyle{abbrv} \bibliography{bib}

\end{document}